\pdfoutput=1
\documentclass[a4paper,UKenglish,cleveref,
numberwithinsect, pdfa]{lipics-v2021}
\hideLIPIcs%
\title{Two Variables and Two-Dimensional Structures}
\titlerunning{Two Variables and Two-Dimensional Structures (Under submission)}

\author{Bartosz Jan Bednarczyk}
{Institute of Logic and Computation, TU Wien, Austria
\and Computer Science Department, University of Wrocław, Poland
\and \url{https://bartoszjanbednarczyk.github.io}}{bartek@cs.uni.wroc.pl}
{https://orcid.org/0000-0002-8267-7554}
{}

\author{Daumantas Kojelis}
{Leipzig University, Germany
\and \url{https://daumantaskojelis.github.io/}}{daumantas.kojelis@uni-leipzig.de}
{https://orcid.org/0000-0002-1632-9498}
{}

\author{Krzysztof Szymański}
{Computer Science Department, University of Wrocław, Poland}{338068@uwr.edu.pl}
{}
{}

\authorrunning{B. J. Bednarczyk and D. Kojelis and K. Szymański}
\Copyright{Bartosz Jan Bednarczyk and Daumantas Kojelis and Krzysztof Szymański}

\ccsdesc[500]{Theory of computation~Logic and verification}

\keywords{
    Two-Variable Logic,
    Two-Dimensional Structures,
    Complexity,
    Finite Satisfiability Problem,
    Decidability
}

\funding{B.~Bednarczyk was supported in full by the Austrian Science Fund (FWF) project FINGO 10.55776/ESP2407425, and D.~Kojelis by DFG project LU 1417/4-1. Results from Sections~\ref{sec:one-order-one-succ}--\ref{sec:orderinv} were presented in the bachelor's thesis of K.~Szymański, written under the supervision of B.~Bednarczyk at the University of Wrocław.}

\usepackage{lineno}

\usepackage{ellipsis}
\usepackage[abbreviations,british]{foreign}
\usepackage{hyperref}
\usepackage[usenames,dvipsnames]{xcolor}
\usepackage{float}
\usepackage{booktabs}

\definecolor{darkmidnightblue}{rgb}{0.0, 0.2, 0.4}
\definecolor{persianplum}{rgb}{0.44, 0.11, 0.11}

\hypersetup{
  colorlinks  = true,
  citecolor   = persianplum,
  urlcolor    = persianplum,
  linkcolor   = persianplum
}

\usepackage{tikz}
\usepackage{todonotes}
\usepackage{xspace}

\usepackage{mathtools}
\usepackage{amsmath}
\usepackage{amsthm}
\usepackage{stmaryrd}
\theoremstyle{plain}
\newtheorem*{theorem*}{Theorem}
\newtheorem*{lemma*}{Lemma}
\newtheorem*{observation*}{Observation}
\newtheorem*{proposition*}{Proposition}
\newtheorem*{fact*}{Fact}
\newtheorem{fact}[theorem]{Fact}

\makeatletter
\def\desclabel#1#2{\begingroup
\def\@currentlabel{#1}%
\protected@edef\cref@currentlabel{[desclabel][][]#1}%
#1\label{#2}\endgroup
}
\makeatother
\crefname{desclabel}{condition}{conditions}
\Crefname{desclabel}{Condition}{Conditions}

\newenvironment{namedlist}[1][(Cycle)]
  {\begin{list}{}{%
     \setlength{\labelsep}{0.5em}%
     \settowidth{\labelwidth}{#1}%
     \setlength{\leftmargin}{\labelwidth}%
     \addtolength{\leftmargin}{\labelsep}%
     \setlength{\itemsep}{0pt}%
     \setlength{\parsep}{0pt}%
     \setlength{\topsep}{4pt plus 1pt minus 1pt}%
     }}
  {\end{list}}

\DeclareMathAlphabet{\mathdutchcal}{U}{dutchcal}{m}{n}
\SetMathAlphabet{\mathdutchcal}{bold}{U}{dutchcal}{b}{n}
\DeclareMathAlphabet{\mathdutchbcal}{U}{dutchcal}{b}{n}

\definecolor{ao(english)}{rgb}{0.0, 0.5, 0.0}

\usepackage{fontawesome5}
\usepackage{xurl}

\usetikzlibrary{arrows, arrows.meta, decorations.markings, shapes, calc}
\colorlet{jaune}{yellow!80!green}

\colorlet{vert}{green!45!black}
\def\envert#1{\textcolor{vert}{#1}}
\colorlet{bleu}{blue!70!black}

\colorlet{rouge}{red!80!black}

\tikzstyle{minirond}=[draw, circle, minimum height=2mm, inner sep=0pt]
\tikzstyle{ptrond}=[draw, circle, minimum height=2.5mm]
\tikzstyle{medrond}=[draw, circle, minimum height=5mm]
\tikzstyle{rond}=[draw, circle, minimum height=7mm]

\tikzstyle{carre}=[draw,minimum width=6mm,minimum height=6mm]
\tikzstyle{medcarre}=[draw,minimum width=4mm,minimum height=4mm]
\tikzstyle{ptcarre}=[draw,minimum width=2.8mm,minimum height=2.8mm]
\tikzstyle{minicarre}=[draw,minimum width=1.5mm,minimum height=1.5mm,inner sep=0pt]
\tikzstyle{rouge}=[draw=red,fill=red!20!white]
\tikzstyle{vert}=[draw=green!80!black,fill=green!80!black!20!white]
\tikzstyle{jaune}=[draw=yellow!60!red,fill=yellow!60!red!30!white]
\tikzstyle{bleu}=[draw=blue,fill=blue!40!white]
\tikzstyle{gris}=[draw=black!80!white,fill=black!40!white]
\tikzstyle{rougef}=[draw=red,fill=red!60!white]
\tikzstyle{vertf}=[draw=green!80!black,fill=green!80!black!60!white]
\tikzstyle{jaunef}=[draw=yellow!80!black,fill=yellow!80!black!60!white]
\tikzstyle{bleuf}=[draw=blue,fill=blue!70!white]
\tikzstyle{rjaune}=[style=rond,style=jaune]
\tikzstyle{rbleu}=[style=rond,style=bleu]
\tikzstyle{rvert}=[style=rond,style=vert]
\tikzstyle{rrouge}=[style=rond,style=rouge]
\tikzstyle{rgris}=[style=rond,style=gris]
\tikzstyle{cjaune}=[style=carre,style=jaune]
\tikzstyle{cbleu}=[style=carre,style=bleu]
\tikzstyle{cvert}=[style=carre,style=vert]
\tikzstyle{crouge}=[style=carre,style=rouge]
\tikzstyle{cgris}=[style=carre,style=gris]
\tikzstyle{rjaunef}=[style=rond,style=jaunef]
\tikzstyle{rbleuf}=[style=rond,style=bleuf]
\tikzstyle{rvertf}=[style=rond,style=vertf]
\tikzstyle{rrougef}=[style=rond,style=rougef]%

\usetikzlibrary{shadows}
\tikzset{
diagonal fill/.style 2 args={fill=#2, path picture={
\fill[#1, sharp corners] (path picture bounding box.south west) -|
                         (path picture bounding box.north east) -- cycle;}},
reversed diagonal fill/.style 2 args={fill=#2, path picture={
\fill[#1, sharp corners] (path picture bounding box.north west) |-
                         (path picture bounding box.south east) -- cycle;}}
}

\usetikzlibrary{hobby,backgrounds,calc,trees}

\usepackage{csquotes}
\usepackage{cite}
\makeatletter\mathchardef\citemidpenalty=\@M\makeatother
\usepackage[most]{tcolorbox}

\newcommand{\complexityclass}[1]{\textnormal{\textsc{#1}}}

\newcommand{\AExpTime}{\complexityclass{AExpTime}}
\newcommand{\ExpSpace}{\complexityclass{ExpSpace}}

\newcommand{\NExpTime}{\complexityclass{NExpTime}}

\newcommand{\TwoNExpTime}{\complexityclass{2NExpTime}}

\newcommand{\coNExpTime}{\textnormal{co}\NExpTime}
\newcommand{\Vass}{\complexityclass{Vass}}
\newcommand{\deff}{\coloneqq}

\newcommand{\Q}{\mathbb{Q}}
\newcommand{\set}[1]{\{#1\}}
\renewcommand{\iff}{\leftrightarrow}
\newcommand{\cupdot}{\mathbin{\mathaccent\cdot\cup}}
\newcommand{\lesslex}{<_{\mathrm{lex}}}

\newcommand{\restr}{\!\!\restriction\!\!}

\newcommand{\lesseq}{\preceq}
\newcommand{\less}{\prec}

\makeatletter
\newcommand{\hathat}[1]{%
\begingroup%
  \let\macc@kerna\z@%
  \let\macc@kernb\z@%
  \let\macc@nucleus\@empty%
  \hat{\mathchoice%
    {\raisebox{.2ex}{\vphantom{\ensuremath{\displaystyle #1}}}}%
    {\raisebox{.2ex}{\vphantom{\ensuremath{\textstyle #1}}}}%
    {\raisebox{.16ex}{\vphantom{\ensuremath{\scriptstyle #1}}}}%
    {\raisebox{.14ex}{\vphantom{\ensuremath{\scriptscriptstyle #1}}}}%
    \smash{\hat{#1}}}%
\endgroup%
}
\makeatother

\newcommand{\Logic}[1]{\ensuremath{\mathrm{#1}}}
\newcommand{\FO}{\Logic{FO}}
\newcommand{\FOt}{\Logic{FO}^2}

\newcommand{\str}[1]{\mathfrak{#1}}
\def\mystr#1{\expandafter\def\csname str#1\endcsname{\str{#1}}}
\def\mystrall#1{\ifx#1\mystrall\else\mystr#1\expandafter\mystrall\fi}
\mystrall ABCDEFGHIJKLMNOPQRSTUVWXYZ\mystrall

\newcommand{\dom}[1]{\mathrm{#1}}
\def\mydom#1{\expandafter\def\csname dom#1\endcsname{\dom{#1}}}
\def\mydomall#1{\ifx#1\mydomall\else\mydom#1\expandafter\mydomall\fi}
\mydomall ABCDEFGHIJKLMNOPQRSTUVWXYZ\mydomall

\def\myrm#1{\expandafter\def\csname rm#1\endcsname{\mathrm{#1}}}
\def\myrmall#1{\ifx#1\myrmall\else\myrm#1\expandafter\myrmall\fi}
\myrmall ABCDEFGHIJKLMNOPQRSTUVWXYZ\myrmall

\newcommand{\pred}[1]{\mathrm{#1}}
\def\mypred#1{\expandafter\def\csname p#1\endcsname{\pred{#1}}}
\def\mypredall#1{\ifx#1\mypredall\else\mypred#1\expandafter\mypredall\fi}
\mypredall ABCDEFGHIJKLMNOPQRSTUVWXYZabcdefghjklmnopqrstuvwxyz \mypredall

\newcommand{\el}[1]{\mathit{#1}}
\def\myel#1{\expandafter\def\csname el#1\endcsname{\el{#1}}}
\def\myelall#1{\ifx#1\myelall\else\myel#1\expandafter\myelall\fi}
\myelall abcdefghijkmnopqrstuvwxyz\myelall

\newcommand{\aut}[1]{\mathdutchcal{#1}}
\def\myaut#1{\expandafter\def\csname aut#1\endcsname{\aut{#1}}}
\def\myautall#1{\ifx#1\myautall\else\myaut#1\expandafter\myautall\fi}
\myautall ABCDEFGHIJKLMNOPQRSTUVWXYZ\myautall

\newcommand{\homo}[1]{\mathfrak{#1}}

\newcommand{\homoh}{\homo{h}}

\newcommand{\desc}{\mathrel{\downarrow^{+}}}
\newcommand{\sigenc}{\sigma_{\mathrm{enc}}}
\newcommand{\sigaux}{\sigma_{\mathrm{aux}}}
\newcommand{\phitree}{\varphi_{\mathrm{tree}}}
\newcommand{\fNext}{\mathrm{Next}}
\newcommand{\fDown}{\mathrm{Down}}

\nolinenumbers
\begin{document}
\maketitle
\setcounter{page}{0}
\begin{abstract}
We study the two-variable fragment~$\FOt$ of first-order logic over
two-dimensional structures.
Formul{\ae} may employ two linear orders~$\lesseq_1, \lesseq_2$, their
successor relations~$\pS_1, \pS_2$, and auxiliary predicates of any arity.
Despite two decades of work, three cases of finite satisfiability in this
setting were still open; we settle all three.
First, finite satisfiability of~$\FOt$ with one linear order and one successor
relation, of the same or of a different order, is $\NExpTime$-complete, even
with arbitrary auxiliary predicates and constant symbols.
This closes a gap left open by Manuel and Zeume (CSL~2013).
As an application, a reduction based on swaps of adjacent elements shows that
order-invariance and order-successor-invariance of~$\FOt$ are
$\coNExpTime$-complete.
This reproves the main result of Zeume and Harwath (LICS~2016) and settles
their question on order-successor-invariance.
Second, finite satisfiability of~$\FOt$ with two linear orders is
$\ExpSpace$-hard, already without auxiliary non-unary predicates, matching the
upper bound known since 2010.
Our reduction combines Kiero\'{n}ski's encoding of alternating Turing machines
with the classical fact that the pre-order and post-order traversals of a tree
define its descendant~relation.
Third, with two linear orders and the successor of one of them, finite
satisfiability is in~$\ExpSpace$ even with arbitrary auxiliary predicates,
improving the previous $\TwoNExpTime$ bound and confirming a conjecture of
Zeume and Harwath.
Together with known results, our theorems show that in two dimensions,
auxiliary (non-unary) predicates and constants come for free: with or without
them, the complexities~coincide.
All our proofs are simple.
\end{abstract}

\clearpage
\section{Introduction}\label{sec:introduction}

\noindent \textbf{Two-variable logic.}\,
The two-variable fragment~$\FOt$ of first-order logic is among the most
prominent decidable fragments of~$\FO$.
Mortimer showed that it has the finite model property~\cite{Mortimer75}, and
Gr\"{a}del, Kolaitis and Vardi sharpened this to models of exponential size,
making its (finite) satisfiability problem
$\NExpTime$-complete~\cite{GraedelKV97}.
It is also reasonably expressive: $\FOt$ embeds many modal and description logics; over finite
words it is as expressive as unary temporal logic and remains
$\NExpTime$-complete~\cite{EtessamiVW02}; and over finite trees it captures
navigational XPath~\cite{MarxR05}, while its satisfiability problem becomes
$\ExpSpace$-complete~\cite{BenaimBCKLMW16}.

\vspace{0.25em}
\noindent \textbf{Semantic constraints.}\,
This expressiveness has its limits, however.
With only two variables at hand, $\FOt$ cannot say that a relation is
transitive, let alone that it is an equivalence or a linear order.
This has prompted a long line of work on $\FOt$ over structures in which some
\emph{distinguished} binary symbols have a fixed meaning; see the survey by
Kiero\'{n}ski, Pratt-Hartmann and Tendera~\cite{KieronskiPT18}.
Finite satisfiability remains decidable with one or two equivalence
relations~\cite{KieronskiO12}\cite{KieronskiT09}, with one transitive
relation~\cite{PrattHartmann18}, with one~\cite{Otto01} or two linear
orders~\cite{SchwentickZ12}, and with the successor relations of two linear
orders~\cite{CharatonikW13}\cite{Figueira12}.
It becomes undecidable with three equivalence relations~\cite{KieronskiO12},
with three linear orders~\cite{Kieronski11}, and with two linear orders
together with their successors~\cite[Thm.~4]{Manuel10}.
Recent work by Fiuk, Kiero\'{n}ski and Michielini extends this line to nested
equivalences and preorders for hierarchically~organised~data~\cite{FiukKM25}.

\vspace{0.25em}
\noindent \textbf{Data words and the plane.}\,
Among all semantic constraints, linear orders have received particular
attention, much of it sparked by \emph{data
words}~\cite{BojanczykDMSS11}: words in which every position carries, besides
a letter, a datum from an infinite domain, modelling, \eg, attribute values of
XML documents or process identifiers in traces of concurrent systems.
As structures, data words come with a linear order on positions, possibly its
successor, and an equivalence relating positions with equal data; $\FOt$ over
them is decidable, but its complexity is tied to reachability in Petri
nets~\cite{BojanczykDMSS11}.
Ordering the data turns the equivalence into a total
preorder~\cite{SchwentickZ12}, and if all data are moreover distinct, into
a second linear order.
Such a structure is best seen as a finite set of points in the plane, sorted
along the $x$-axis by~$\lesseq_1$ and along the $y$-axis by~$\lesseq_2$ (see
Figure~\ref{fig:plane}): these are the \emph{two-dimensional structures} of
our title.
We write $\FOt[\rmO]$ for $\FOt$ over them, where $\rmO$ lists the
distinguished symbols available, \ie linear orders~$\lesseq_i$ and
successor relations~$\pS_i$ (see Section~\ref{sec:preliminaries}); \eg
$\FOt[\set{\lesseq_1, \lesseq_2}]$ speaks about two linear orders.

\vspace{0.25em}
\noindent \textbf{Auxiliary binary symbols.}\,
The results above come in two flavours.
In the first, besides the distinguished symbols a sentence may use unary
predicates only; in the second, it may use arbitrary auxiliary ones, of which
binary predicates are the interesting case.
Zeume and Harwath, who phrase the two flavours as the logics $\Logic{EMSO}^2$
and $\Logic{ESO}^2$, point out that the distinction is easily
overlooked~\cite[\S1]{ZeumeH16}.
It matters: in the plane, for instance, a binary symbol can link any two points, however
far apart they lie in both orders, and many techniques that cut and paste
fragments of a model, developed for the unary flavour, break down in its
presence.
For two linear orders, the price used to be exponential: the best known upper
bound with auxiliary binary symbols~\cite[Thm.~7]{ZeumeH16} was exponentially
worse than the one without them~\cite[Thm.~3.1]{SchwentickZ12}.
Toru\'{n}czyk and Zeume give a direct translation of finite satisfiability
with auxiliary binary symbols into finite satisfiability without
them~\cite[\S5.2, Prop.~5.11]{TorunczykZ22}, but the translation is doubly exponential, and thus too costly~here.
Table~\ref{tab:results} collects what is known about~$\FOt[\rmO]$ in both
flavours, and marks the new results of this paper~with~$\bigstar$.

\vspace{0.25em}
\noindent \textbf{Our motivation.}\,
As the table shows, two decades of work still leave three gaps in the
complexity of~$\FOt$ over two linear orders. We aim to close all three of
them in this paper.
\begin{itemize}\itemsep0em
\item The $\ExpSpace$ upper bound for \emph{two linear orders} without
  auxiliary binary symbols dates back sixteen years~\cite{SchwentickZ10}\cite[Thm.~3.1]{SchwentickZ12}
  and was later reproved with automata~\cite{LabaiKOV20},
  yet the only lower bound is the $\NExpTime$ one inherited from~$\FOt$.
  The known $\ExpSpace$-hardness proof~\cite[Thm.~4.1]{SchwentickZ12}
  relies on a preorder, whose classes may be exponentially~large.
\item For \emph{one linear order and one foreign successor}, \ie
  $\FOt[\set{\lesseq_1, \pS_2}]$, the gap between $\NExpTime$ and $\ExpSpace$
  is listed as open by Manuel and Zeume~\cite[Fig.~3]{ManuelZ13}.
\item With \emph{auxiliary binary symbols}, only a $\TwoNExpTime$ upper bound
  is known for all of the above~\cite[Thm.~7]{ZeumeH16}, while Zeume and
  Harwath conjecture~$\ExpSpace$~\cite[\S4]{ZeumeH16}.
\end{itemize}

\vspace{0.25em}
\noindent \textbf{Order-invariance.}\,
Two linear orders on one domain also arise naturally in finite model theory,
where \emph{order-invariance} is an important property.
A sentence is order-invariant if it may use a linear order, but its truth
does not depend on which linear order is chosen.
Such sentences are ubiquitous: capturing results of descriptive complexity
assume an ordered structure, and the sentences they construct typically use
the order as an auxiliary device only~\cite[\S1]{ZeumeH16}.
Order-invariance is undecidable for~$\FO$~\cite[Exer.~9.3]{Libkin04}.
For~$\FOt$, Zeume and Harwath observed that a sentence fails to be
order-invariant exactly when some finite structure with two linear orders
satisfies it under the first order and violates it under the second.
Order-invariance of~$\FOt$ thus reduces to the complement of the finite
satisfiability problem for~$\FOt$ over two linear
orders~\cite[\S1]{ZeumeH16}.
With a successor at hand, however, the same idea leads to two linear orders
together with their successors, where satisfiability is
undecidable~\cite[Thm.~4]{Manuel10}.
Decidability of order-successor-invariance was thus left
open~\cite[\S7,~Question~3]{ZeumeH16} and, to the best of our knowledge,
has remained open ever since, despite subsequent progress on both the
expressive power and the complexity of order-invariant~$\FOt$
by Bednarczyk and Grange~\cite{BednarczykG25}.

\begin{table}[H]
\centering
\caption{Complexity of finite satisfiability for~$\FOt[\rmO]$, for signatures
whose auxiliary symbols are unary only (called $\Logic{EMSO}^2$
in~\cite{ZeumeH16}) or arbitrary (called~$\Logic{ESO}^2$).
Our results are marked by~$\bigstar$, the previous best bounds are shown in
grey, and lower bounds carry over from unary to arbitrary signatures; every
$\NExpTime$ lower bound is inherited from~$\FOt$~\cite{GraedelKV97}.
Cases symmetric to listed ones are omitted.
$\Vass$ is reachability in vector addition systems, Ackermann-complete~\cite{CzerwinskiO21}\cite{Leroux21} and equivalent to multicounter-automata emptiness~\cite{BojanczykDMSS11}.}
\label{tab:results}
\newcommand{\tabgroup}[1]{\multicolumn{3}{c}{\emph{#1}}}
\newcommand{\prev}[1]{\newline\textcolor{black!50}{was: #1}}
\newcommand{\ours}{~$\bigstar$}
\footnotesize
\renewcommand{\arraystretch}{1.15}
\begin{tabularx}{\textwidth}{@{}l
  >{\raggedright\arraybackslash}X
  >{\raggedright\arraybackslash}X@{}}
\toprule
Symbols $\rmO$ & Unary auxiliary symbols only
  & Arbitrary auxiliary symbols \\
\midrule
\tabgroup{One linear order or successor} \\
$\set{\pS}$
  & $\NExpTime$-complete~\cite{EtessamiVW02}
  & $\NExpTime$-complete~\cite[Thm.~3]{ZeumeH16} \\
$\set{\lesseq}$
  & $\NExpTime$-complete~\cite{EtessamiVW02}
  & $\NExpTime$-complete~\cite{Otto01} \\
$\set{\lesseq, \pS}$
  & $\NExpTime$-complete~\cite{EtessamiVW02}
  & $\NExpTime$-complete~\cite[Thm.~3]{ZeumeH16} \\
\midrule
\tabgroup{Two linear orders or successors} \\
$\set{\pS_1, \pS_2}$
  & $\NExpTime$-complete~\cite{CharatonikW13}\cite{Figueira12}
  & $\NExpTime$-complete~\cite{CharatonikW13} \\
$\set{\lesseq_1, \pS_2}$
  & $\NExpTime$-complete\ours{}
    \prev{in $\ExpSpace$~\cite[Thm.~3.1]{SchwentickZ12}}
  & $\NExpTime$-complete\ours{}
    \prev{in $\TwoNExpTime$~\cite[Thm.~7]{ZeumeH16}} \\
$\set{\lesseq_1, \lesseq_2}$
  & $\ExpSpace$-complete\ours{}
    \prev{in $\ExpSpace$~\cite[Thm.~3.1]{SchwentickZ12}\cite{LabaiKOV20}}
  & $\ExpSpace$-complete\ours{}
    \prev{in $\TwoNExpTime$~\cite[Thm.~7]{ZeumeH16}} \\
$\set{\lesseq_1, \pS_1, \lesseq_2}$
  & $\ExpSpace$-complete\ours{}
    \prev{in $\ExpSpace$~\cite[Thm.~3.1]{SchwentickZ12}\cite{LabaiKOV20}}
  & $\ExpSpace$-complete\ours{}
    \prev{in $\TwoNExpTime$~\cite[Thm.~7]{ZeumeH16}} \\
$\set{\pS_1, \lesseq_2, \pS_2}$
  & $\Vass$-complete~\cite[Cor.~11, Thm.~15]{ManuelZ13}
  & $\Vass$-complete~\cite[Thm.~9]{ZeumeH16} \\
$\set{\lesseq_1, \pS_1, \lesseq_2, \pS_2}$
  & undecidable~\cite[Thm.~4]{Manuel10}
  & undecidable~\cite[Thm.~4]{Manuel10} \\
\bottomrule
\end{tabularx}
\end{table}

\vspace{0.25em}
\noindent \textbf{Our results.}\,
We close all the gaps above, marked by~$\bigstar$ in
Table~\ref{tab:results}, and settle the question on order-successor-invariance
as well. In more detail, our contributions are as~follows.
\begin{enumerate}\itemsep0em
\item Finite satisfiability of $\FOt[\set{\lesseq_1, \pS_2}]$ is
  $\NExpTime$-complete, even with arbitrary auxiliary
  symbols~(Theorem~\ref{thm:one-order-one-succ}), which closes the gap
  listed by Manuel and Zeume~\cite[Fig.~3]{ManuelZ13}.
  The proof establishes an exponential-size model property by adapting the
  circular witnessing scheme of Gr\"{a}del, Kolaitis and
  Vardi~\cite{GraedelKV97} to the plane: we first confine all remote witnesses
  to a small set of elements, and then cut the model down along~$\lesseq_2$.
  As a by-product, we reprove $\NExpTime$-completeness of
  $\FOt[\set{\lesseq, \pS}]$~\cite[Thm.~3]{ZeumeH16} in a slightly simpler way.
\item Finite satisfiability of $\FOt[\set{\lesseq_1, \lesseq_2}]$ is
  $\ExpSpace$-hard, already without auxiliary binary
  symbols~(Theorem~\ref{thm:expspace-hardness}); previously, only the
  $\NExpTime$ lower bound inherited from~$\FOt$ was~known.
  We reduce from alternating Turing machines working in exponential time~\cite{ChandraKS81}.
  The descendant relation of a tree is definable by two linear orders,
  namely its pre-order and post-order traversals; this classical fact goes
  back to Dietz~\cite{Dietz82}.
  Benaim~\etal\ proved $\ExpSpace$-hardness over finite trees with
  Kiero\'{n}ski's encoding of runs~\cite[Thm.~3.8]{BenaimBCKLMW16}\cite[Thm.~5]{Kieronski06},
  which we also use as a black box, after fixing a minor gap (Appendix~\ref{app:runs}).
  The point is that two linear orders need not form a tree, so we show that
  the encoding survives in every model where the defined relation is
  merely~transitive.
\item Finite satisfiability of $\FOt[\set{\lesseq_1, \pS_1, \lesseq_2}]$ is
  in~$\ExpSpace$, even with arbitrary auxiliary
  symbols~(Theorem~\ref{thm:expspace-easyness}).
  This exponentially improves the previous $\TwoNExpTime$ bound and
  confirms the conjecture of Zeume and Harwath~\cite[\S4]{ZeumeH16}.
  As they hint~\cite[\S4]{ZeumeH16}, we combine their approach with the one
  without auxiliary~binary~predicates~\cite{SchwentickZ12}.
\item In Theorem~\ref{thm:orderinv} we show that both order-invariance and
  order-successor-invariance of~$\FOt$ are $\coNExpTime$-complete.
  The proof is strikingly short: swaps of adjacent elements connect any two
  linear orders, and constants can name the swapped pair, so both problems
  reduce to finite satisfiability of~$\FOt[\set{\lesseq, \pS}]$.
  This simple reduction reproves the main result of Zeume and
  Harwath~\cite{ZeumeH16} with the optimal bound (first shown
  in~\cite[Thm.~3.7]{BednarczykG25}) and settles their question on
  order-successor-invariance~\cite[\S7,~Question~3]{ZeumeH16}.
\end{enumerate}
Together, items~2 and~3 make $\FOt[\set{\lesseq_1, \lesseq_2}]$ and
$\FOt[\set{\lesseq_1, \pS_1, \lesseq_2}]$ $\ExpSpace$-complete, with and
without auxiliary binary symbols; for the second logic, this closes another gap listed
in~\cite[Fig.~3]{ManuelZ13}.
In fact, the two columns of Table~\ref{tab:results} now coincide row by row:
\emph{in two dimensions, auxiliary binary symbols come for free}.
What remains open concerns more than two dimensions, such as three successor
relations~\cite[\S7,~Question~1]{ZeumeH16}.
None of our proofs needs heavy machinery, which should make them
easy~to~build~upon.

\vspace{0.25em}
\noindent \textbf{Organisation.}\, Section~\ref{sec:preliminaries} fixes the notation.
Section~\ref{sec:one-order-one-succ} proves our first result, which also
serves as a gentle introduction to model surgery in the plane, and
Section~\ref{sec:orderinv} applies it to order-invariance.
Sections~\ref{sec:lower-bound} and~\ref{sec:upper-bound} establish the
matching $\ExpSpace$ lower and upper bounds, and
Section~\ref{sec:conclusions} concludes and discusses~open~problems.

\section{Preliminaries}\label{sec:preliminaries}

\newcommand{\varx}{\mathit{x}}
\newcommand{\vary}{\mathit{y}}
\newcommand{\tp}[2]{\mathrm{tp}^{#1}(#2)}
\newcommand{\tpone}[1]{\boldsymbol{\alpha}_{#1}}
\newcommand{\tptwo}[1]{\boldsymbol{\beta}_{#1}}

\noindent We assume familiarity with first-order logic~$\FO$ and with
computability and complexity~\cite[Secs.~2.1--2.3]{Libkin04}.
$\FOt$ is its two-variable fragment, \ie $\FO$ with only
$\varx, \vary$, and $|\varphi|$ is the size of~$\varphi$.
Our signatures are finite and function-free, but may contain constants and
predicates of any arity.
Among their symbols we single out a set~$\rmO$ of \emph{distinguished} binary
ones: each~$\lesseq_i$ denotes a linear order and each~$\pS_i$ its successor
relation.
$\FOt[\rmO]$ collects the $\FOt$ sentences with distinguished symbols in~$\rmO$,
and $\varphi[{\lesseq}{/}{\lesseq_i}]$ replaces each~$\lesseq$ in~$\varphi$
by~$\lesseq_i$; indices are dropped if only one index occurs in~$\rmO$.

\vspace{0.25em}
\noindent \textbf{Structures and types.}\, Structures $\strA, \strB$ are written
in fraktur and their domains $\domA, \domB$ in Roman; all of them are
\emph{finite} and non-empty.
A formula~$\varphi$ is written $\varphi(\varx)$
if it contains a single free variable $\varx$,
and it is a \emph{sentence} if it has none.
We write $\strA \restr \domB$ for the substructure of~$\strA$ induced
by~$\domB \subseteq \domA$.
An atomic \emph{$1$-type} (resp. \emph{$2$-type}) over
a signature~$\sigma$ is a maximal
satisfiable set of $\sigma$-literals in~$\varx$ (resp. in $\varx, \vary$, containing $\varx \neq \vary$),
identified with its conjunction; those over the signature of~$\varphi$ are
collected in $\tpone{\varphi}$ and~$\tptwo{\varphi}$.
Note that the sizes of $\tpone{\varphi}$ and~$\tptwo{\varphi}$
are exponential~in~$|\varphi|$.
Every $\ela \in \domA$ \emph{realises} a unique $1$-type~$\tp{\strA}{\ela}$ and
every pair of distinct elements $\ela, \elb$ realises a unique $2$-type~$\tp{\strA}{\ela,\elb}$.

\vspace{0.25em}
\noindent \textbf{Ordered structures and the plane.}\, A structure is
\emph{ordered} if it interprets every~${\lesseq_i} \in \rmO$ as a linear order
and every~${\pS_i} \in \rmO$ as the successor relation of some linear
order~$\lesseq_i$ on its domain, namely of~$\lesseq_i$ itself whenever
${\lesseq_i} \in \rmO$; we identify symbols with their interpretations.
Distinct elements $\ela, \elb$ are~$\lesseq_i$-\emph{close} if
$\pS_i(\ela,\elb)$ or $\pS_i(\elb,\ela)$, and $\lesseq_i$-\emph{remote}
otherwise; they are
\emph{close} if they are $\lesseq_i$-close for some~$i$ with ${\pS_i} \in \rmO$, and \emph{remote}
otherwise.
If $\rmO$ contains~$\pS_i$ but not~$\lesseq_i$, we speak of $\pS_i$-closeness.
An ordered structure with two linear orders is visualised as a set of
points in the plane, with $\lesseq_1$ ordering them along the $x$-axis and
$\lesseq_2$ along the $y$-axis.
Thus $\ela$ is to the \emph{left} (resp. \emph{right}) of~$\elb$ if
$\ela \less_1 \elb$ (resp. $\elb \less_1 \ela$) and \emph{below}
(resp. \emph{above}) it if $\ela \less_2 \elb$ (resp. $\elb \less_2 \ela$),
whence \emph{leftmost}, \emph{rightmost}, \emph{topmost} and \emph{bottommost}
elements.
Further, $\elb$ sits \emph{directly above}~$\ela$ if $\ela \less_2 \elb$ and no
element lies strictly between them, and a $2$-type \emph{places $\vary$ to the
right} (resp. \emph{left}) \emph{of~$\varx$} if it contains $\varx \lesseq_1 \vary$
(resp. $\vary \lesseq_1 \varx$). See~Figure~\ref{fig:plane}.

\tikzset{
  plane/.style ={x=5.8mm, y=5.8mm},
  ptl/.style   ={circle, draw=black!70, line width=0.45pt, minimum size=4.1mm,
                 inner sep=0pt, font=\scriptsize},
  succone/.style ={draw=bleu, line width=0.85pt, line cap=round,
                   -{Stealth[length=1.4mm,width=1.1mm]}},
  succtwo/.style ={draw=vert, line width=0.85pt, line cap=round,
                   dash pattern=on 1.5pt off 1.3pt,
                   -{Stealth[length=1.4mm,width=1.1mm]}},
  reledge/.style ={draw=rouge, line width=0.85pt, line cap=round,
                   -{Stealth[length=1.4mm,width=1.1mm]}},
  ranklbl/.style ={font=\tiny, text=black!45, inner sep=1pt}}
\newcommand{\figplanehead}[1]{\textbf{#1}}
\newcommand{\figplanerule}{\par\vspace{0.35em}%
  {\color{black!20}\hrule height 0.4pt}\vspace{0.45em}}
\newcommand{\figplaneremote}[1]{\textcolor{black!45}{#1}}

\begin{figure}[H]
\centering
\begin{minipage}[c]{0.39\textwidth}
\centering
\begin{tikzpicture}[plane]
  \draw[black!15, line width=0.35pt, step=1] (0.5,0.5) grid (5.5,5.5);
  \draw[line width=0.8pt, -{Stealth[length=2mm,width=1.6mm]}]
        (0.35,0.35) -- (5.95,0.35)
        node[font=\footnotesize, anchor=west, inner sep=2pt] {$\lesseq_1, \pS_1$};
  \draw[line width=0.8pt, -{Stealth[length=2mm,width=1.6mm]}]
        (0.35,0.35) -- (0.35,5.95)
        node[font=\footnotesize, anchor=south, inner sep=2pt] {$\lesseq_2, \pS_2$};
  \foreach \i in {1,...,5} {
    \node[ranklbl, anchor=north] at (\i,0.3) {\i};
    \node[ranklbl, anchor=east]  at (0.3,\i) {\i};
  }
  \node[ptl, fill=black!15] (a) at (1,1) {$\ela$};
  \node[ptl, fill=white]    (b) at (2,2) {$\elb$};
  \node[ptl, fill=white]    (c) at (3,4) {$\elc$};
  \node[ptl, fill=black!15] (d) at (4,3) {$\eld$};
  \node[ptl, fill=white]    (e) at (5,5) {$\ele$};
  \draw[succone] (a) to[bend left=14]  (b);
  \draw[succone] (b) to[bend left=9]   (c);
  \draw[succone] (c) to[bend left=13]  (d);
  \draw[succone] (d) to[bend left=9]   (e);
  \draw[succtwo] (a) to[bend right=14] (b);
  \draw[succtwo] (b) to[bend right=9]  (d);
  \draw[succtwo] (d) to[bend left=13]  (c);
  \draw[succtwo] (c) to[bend right=9]  (e);
  \draw[reledge] (a) to[bend left=30]  (c);
  \draw[reledge] (a) to[bend right=20] (d);
\end{tikzpicture}\\[0.5em]
{\footnotesize
\begin{tabular}{@{}l@{\;}l@{\hspace{1.4em}}l@{\;}l@{}}
\tikz[baseline=-0.5ex]{\draw[succone](0,0)--(0.5,0);} & $\pS_1$ &
\tikz[baseline=-0.5ex]{\draw[reledge](0,0)--(0.5,0);} & $\pR$ \\
\tikz[baseline=-0.5ex]{\draw[succtwo](0,0)--(0.5,0);} & $\pS_2$ &
\tikz[baseline=-0.5ex]{\node[ptl,fill=black!15,minimum size=2.6mm]{};} & $\pP$~holds
\end{tabular}}
\end{minipage}%
\hfill
\begin{minipage}[c]{0.59\textwidth}
\footnotesize \raggedright
\renewcommand{\arraystretch}{1.1}%
\figplanehead{Ranks.}\, Positions in the plane give the ranks:\\[0.25em]
\hspace*{0.8em}$\begin{array}{@{}l@{\quad}*{4}{c@{\;}c@{\;}}c@{}}
  \text{along } {\lesseq_1}{:} &
  \ela & \less_1 & \elb & \less_1 & \elc & \less_1 & \eld & \less_1 & \ele \\
  \text{along } {\lesseq_2}{:} &
  \ela & \less_2 & \elb & \less_2 & \eld & \less_2 & \elc & \less_2 & \ele
\end{array}$
\figplanerule
\figplanehead{Closeness.}\, Only $\elb$ is close to~$\ela$:\\[0.25em]
\hspace*{0.8em}\begin{tabular}{@{}l@{\hspace{1.2em}}c@{\hspace{1.2em}}c@{}}
                            & $\lesseq_1$ & $\lesseq_2$ \\[0.1em]
  $\ela, \elb$              & close  & close  \\
  $\elb, \elc$              & close  & \figplaneremote{remote} \\
  $\elb, \eld$              & \figplaneremote{remote} & close  \\
  $\ela$ and $\elc,\eld,\ele$ & \figplaneremote{remote} & \figplaneremote{remote}
\end{tabular}

\end{minipage}
\caption{A visualisation of an ordered structure~$\strA$ over
$\sigma \deff \set{\pP, \pR} \cup \rmO$ with
$\rmO \deff \set{\lesseq_1, \pS_1, \lesseq_2, \pS_2}$.}
\label{fig:plane}
\end{figure}

\vspace{-0.75em}

\vspace{0.25em}
\noindent \textbf{Decision problems.}\, The \emph{finite satisfiability problem}
for~$\FOt[\rmO]$ asks whether a given $\FOt[\rmO]$-sentence~$\varphi$ has a finite
ordered model, \ie a finite model in which every~${\lesseq_i} \in \rmO$ is interpreted as a
linear order and every~${\pS_i} \in \rmO$ as the successor relation of a linear order, namely
of~$\lesseq_i$ whenever ${\lesseq_i} \in \rmO$.
The remaining symbols of~$\varphi$, called \emph{auxiliary}, are interpreted arbitrarily.
We distinguish the case in which all auxiliary symbols are unary (the logic~$\Logic{EMSO}^2$
of~\cite{ZeumeH16}) from the general one (the logic~$\Logic{ESO}^2$); see
Table~\ref{tab:results}.

\vspace{0.25em}
\noindent \textbf{Scott normal form.}\, Passing to a normal form lets us reduce
the nesting of quantifiers and dispose of constants and of predicates of high
arity. An $\FOt[\rmO]$-sentence is in
\emph{Scott normal form} if its signature is constant-free and carries only
predicates of arity~$1$ or~$2$, and it~reads
\[
\forall{\varx}\forall{\vary}\, \chi \land \bigwedge_{i \leqslant m} \forall \varx \exists \vary\,
(\varx \neq \vary \land \chi_i),
\]
where $\chi$ and the~$\chi_i$ are quantifier-free.
Without loss of generality, we may assume this shape:

\begin{fact}\label{fact:scott}
  There is a nondeterministic polynomial-time procedure that,
  given an $\FOt[\rmO]$-sentence~$\varphi$, outputs an
  $\FOt[\rmO]$-sentence~$\psi$ in Scott normal form and of polynomial size,
  such that $\varphi$ has a finite ordered model if and only if some run
  outputs a~$\psi$~that~does.
\end{fact}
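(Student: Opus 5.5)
The plan follows the standard Scott-normal-form translation, done in three stages: eliminate constants, reduce arity to at most two, and then flatten quantifier nesting. The distinguished symbols in~$\rmO$ are never touched, so the result stays in $\FOt[\rmO]$ and ordered models map to ordered models. Nondeterminism is needed only in the first stage, to guess how constants are placed.

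\textbf{Constants.} For each constant~$c$ of~$\varphi$ we introduce a fresh unary predicate~$\pC_c$ and add the conjunct $\exists \varx\, \pC_c(\varx) \land \forall\varx\forall\vary\,(\pC_c(\varx)\land \pC_c(\vary) \to \varx = \vary)$. Every atom mentioning~$c$ is then rewritten. Atoms over two variables, such as $\pR(\varx, c)$, become $\exists \vary\,(\pC_c(\vary) \land \pR(\varx,\vary))$. Atoms with more than two element arguments are the problem: $\pR(\varx,\vary,c)$ cannot be rewritten this way with two variables. The fix is a fresh binary predicate $\pR_c(\varx,\vary)$ that stands for $\pR(\varx,\vary,c)$. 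Atoms built only from constants, like $c \lesseq_1 d$ or $\pR(c,d,\varx)$, have a truth value that cannot be recovered from the new predicates, and this is where we branch. The procedure guesses the equality type of the constants and the full atomic type of the tuple of constants, including their mutual order relations; this information has polynomial size. We substitute the guessed truth values for the pure-constant atoms and add $\FOt$ conjuncts asserting the guessed relations among the~$\pC_c$, for instance $\exists\varx\exists\vary(\pC_c(\varx)\land\pC_d(\vary)\land \varx\lesseq_1\vary)$. For mixed atoms with one variable and several constants we again introduce fresh predicates indexed by the constants. Under the correct guess, each model of~$\varphi$ expands to a model of the output, and conversely any model of an output defines the constants from the~$\pC_c$ and satisfies~$\varphi$.

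\textbf{Arity and flattening.} In $\FOt$ an atom of arity $k>2$ uses only $\varx,\vary$. Each pattern, such as $\pR(\varx,\vary,\varx)$, therefore becomes a fresh binary or unary predicate $\pR_{xyx}$. We also add conjuncts like $\forall\varx\forall\vary\,(\pR_{xyx}(\varx,\vary)\leftrightarrow \pR_{yxy}(\vary,\varx))$ and $\forall\varx\,(\pR_{xyx}(\varx,\varx)\leftrightarrow\pR_{xxx}(\varx))$ to keep the different patterns consistent. A model of the result yields a $k$-ary relation, defined on tuples with at most two distinct entries and empty elsewhere, and this relation satisfies~$\varphi$. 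Next we apply the usual Scott procedure. We repeatedly take an innermost subformula $Q\vary\,\theta(\varx,\vary)$ with $\theta$ quantifier-free, replace it by a fresh unary $\pP(\varx)$, and add $\forall\varx\,(\pP(\varx)\leftrightarrow Q\vary\,\theta)$. Each conjunct is then turned into universal and $\forall\exists$ form; the universal quantifier is rewritten as $\neg\exists\neg$. A conjunct $\forall\varx\exists\vary\,\theta$ becomes $\forall\varx\exists\vary(\varx\neq\vary\land\theta)$, and the case $\vary=\varx$ is handled by a universal disjunct using $\theta(\varx,\varx)$, via a fresh unary predicate. All blow-ups are polynomial.

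\textbf{Main obstacle.} The delicate step is the constant elimination, specifically the atoms that mention constants together with three or more argument positions. Their truth values are not expressible with two variables and must be guessed, which is exactly why the procedure is nondeterministic rather than deterministic. We have to verify that a polynomial guess suffices and that the guessed atomic type of the constants is realisable compatibly with the order axioms. Realisability is automatic, since the guess is checked by the added $\FOt$ conjuncts, so the remaining work is bookkeeping.
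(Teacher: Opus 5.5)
Your three stages are the routine translation that the paper delegates to Pratt-Hartmann's book. However, your flattening step fails on one-element structures, which is exactly the caveat the paper spells out. Every existential conjunct of the required shape, $\forall x\exists y\,(x\neq y\land\chi_i)$, is false in a one-element structure. So the translation preserves satisfiability only over domains with at least two elements. Your procedure does produce such conjuncts, for instance from your own axiom $\exists x\,\pC_c(x)$ or from any existential quantifier of~$\varphi$.

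Concretely, $\forall x\forall y\,(x=y)\land\pP(c)$ has a one-element ordered model (each~$\lesseq_i$ reflexive, each~$\pS_i$ empty). On every run, your output still contains the universal conjunct forcing a single element together with at least one existential conjunct. Hence no run outputs a satisfiable~$\psi$, and the claimed equivalence fails.

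The repair is where the paper spends its nondeterminism besides the constants. In one branch, guess for every auxiliary predicate whether it holds of the constant tuple of a single element; all constants name that element, and the distinguished symbols are forced. Evaluate~$\varphi$ in this one-element structure, and on success output a trivially satisfiable~$\psi$. In the other branch, run your translation, which then only needs to preserve models of size at least two. This singleton case, rather than the pure-constant atoms, is the genuinely delicate point.

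The constant stage is also incomplete as written, because the constant-indexed predicates need consistency axioms just like your arity patterns. Take $\forall x\forall y\,\pR(x,y,c)\land\forall x\,\neg\pR(x,x,x)$, which is unsatisfiable (take $x=y=c$). Its translation $\forall x\forall y\,\pR_c(x,y)\land\forall x\,\neg\pR_{xxx}(x)$, together with the axioms for~$\pC_c$, is satisfiable. To block this you must add conjuncts such as $\forall x\,\bigl(\pC_c(x)\to(\pR_c(x,x)\iff\pR_{xxx}(x))\bigr)$, one for each way the variables may coincide with the constants. Without them, your claim that every model of an output satisfies~$\varphi$ fails.

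Moreover, the full atomic type of the tuple of constants is not of polynomial size when arities are unbounded. It suffices to guess two things: the equalities among constants, which is what the paper guesses, and the truth values of the pure-constant atoms actually occurring in~$\varphi$.
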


\noindent The proof is routine, so we omit it~\cite[Lems.~3.3--3.4, pp.~69--74]{PrattHartmann23}.
The conjuncts $\varx \neq \vary$ preserve satisfiability only over domains of at
least two elements~\cite[p.~71]{PrattHartmann23}, so one-element models are
tested directly, and if one exists, a trivially satisfiable~$\psi$ is output.
Nondeterminism is needed only to find such a model and to get rid of equalities
between constant symbols, both of which are~guessed.
Writing $\psi$ as above, $\strA \models \psi$ amounts to three conditions:
(i)~$\strA$ is ordered; (ii)~every $1$-type realised in~$\strA$
entails~$\chi(\varx,\varx)$ and every realised $2$-type entails~$\chi$;
and (iii)~every $\ela \in \domA$ admits \emph{witnesses}
$\elw_1, \ldots, \elw_m \in \domA \setminus \set{\ela}$ with
$\strA \models \chi_i[\ela, \elw_i]$.
A~model is thus a family of types in which every element \emph{demands}
$m$~witnesses, one fulfilling each~$\chi_i$.

\section{One Order, One Foreign Successor: Small Models in the Plane}
\label{sec:one-order-one-succ}

\newcommand{\Wz}{\rmW_{0}}
\newcommand{\Wo}{\rmW_{1}}
\newcommand{\Wt}{\rmW_{2}}
\newcommand{\Rtype}{\rmR_{\alpha}}

\noindent Call a successor relation ${\pS_i} \in \rmO$ \emph{foreign} if the
linear order~$\lesseq_i$ inducing it is itself absent from~$\rmO$: a sentence
then sees $\pS_i$-closeness, but has no access to $\lesseq_i$~itself.
Pairing one linear order with one foreign successor gives the simplest genuinely
two-dimensional signature, and it is the one we study here.
So let $\rmO \deff \set{\lesseq_1, \pS_2}$, and let $\varphi$ be a sentence in
Scott normal form~(Fact~\ref{fact:scott}), with universal conjunct~$\chi$ and
existential conjuncts~$\chi_1, \ldots, \chi_m$, where $m \leqslant |\varphi|$.
In an ordered structure~$\strA$ we write $\lesseq_2$ for the linear order
on~$\domA$ that induces~$\pS_2$, unique as $\domA$ is finite.
Not being in the signature, $\lesseq_2$ is recorded by no type: it is determined
by~$\pS_2$, and $\varphi$ constrains it only through~$\pS_2$.
Closeness is therefore $\pS_2$-closeness: distinct elements $\ela, \elb$ are
close exactly when one of them sits directly above the~other.
In every model of~$\varphi$ we fix, for each element~$\ela$ and each
$i \leqslant m$, one witness of~$\ela$ for~$\chi_i$; from now on,
\emph{witnesses} are these fixed ones only, and each surgery below says how it
fixes them anew.
A witness of~$\ela$, which differs from~$\ela$ by the normal form, is
called \emph{close} if it is close to~$\ela$, and \emph{remote}~otherwise.
Finally, we abbreviate $\rmN_1 \deff |\tpone{\varphi}|$ and
$\rmN_2 \deff |\tptwo{\varphi}|$, both exponential~in~$|\varphi|$.

\begin{theorem}\label{thm:one-order-one-succ}
  The finite satisfiability problem for $\FOt[\set{\lesseq_1, \pS_2}]$ is
  $\NExpTime$-complete, even with arbitrary-arity auxiliary predicates and
  constants.
\end{theorem}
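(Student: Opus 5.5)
The lower bound is inherited from plain $\FOt$~\cite{GraedelKV97}: we simply ignore $\lesseq_1$ and $\pS_2$. For the upper bound we use Fact~\ref{fact:scott} to guess a Scott normal form sentence $\varphi$ of polynomial size. The heart of the argument is an exponential-size model property: if $\varphi$ has a finite ordered model, then it has one with at most $\mathrm{poly}(m, \rmN_1)$ elements, hence exponentially many in $|\varphi|$. Given that property, the $\NExpTime$ algorithm guesses such a structure, namely the 1-types, the 2-types, the order $\lesseq_1$ and the $\pS_2$-chain, and checks conditions (i)--(iii) in time polynomial in its size.

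The small-model construction proceeds in two surgeries on a given model $\strA$.

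\textbf{First surgery: confining remote witnesses.} For each 1-type $\alpha$ realised in $\strA$, we mark the $m+1$ leftmost and the $m+1$ rightmost realisations of $\alpha$ with respect to $\lesseq_1$ (or all of them if there are fewer). We then add every witness of a marked element, which gives a set $\rmW$ of size $O(m^2 \rmN_1)$. Next we redirect every remote witness of every element so that it lands in $\rmW$. Suppose $\ela$ needs a remote witness $\elb$ of type $\alpha$, placed to the right of $\ela$ by the 2-type. Then one of the $m+1$ rightmost realisations of $\alpha$ lies to the right of $\ela$, is distinct from $\ela$, and is not $\pS_2$-close to $\ela$ (at most two elements are close to $\ela$, and we should pick the pool large enough to avoid them). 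Because the signature contains no $\lesseq_2$, the remote 2-type between $\ela$ and $\elb$ constrains nothing beyond the $\lesseq_1$-direction and the auxiliary binary facts. So we may redefine the 2-type between $\ela$ and that marked realisation to copy $\tp{\strA}{\ela,\elb}$. For this to be sound, the pair must currently be remote, and the redefinition must not destroy other witnesses. The second requirement is met by the Grädel–Kolaitis–Vardi trick: the marked pool is split into three disjoint groups used circularly, so that no pair is overwritten twice. After this step, every element has all its remote witnesses inside $\rmW$.

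\textbf{Second surgery: cutting along $\lesseq_2$.} Now consider the $\pS_2$-chain. Each element of the chain needs only its close neighbours together with $\rmW$. We view the chain as a word whose letters record the 1-type of an element and the 2-types to its two $\lesseq_2$-neighbours. Whenever two positions outside $\rmW$ carry the same ``profile'', we cut out the segment strictly between them and glue the ends. The profile must include the 1-type, the 2-type to the successor, and the relative $\lesseq_1$-position with respect to the elements of $\rmW$ (that is, which gap between consecutive elements of $\rmW$ the element occupies). Keeping the $\lesseq_1$-gap in the profile ensures that the 2-types between remaining elements and $\rmW$ stay consistent with $\lesseq_1$. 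The new 2-types created at the glue point and the new $\pS_2$-pairs are copied from the deleted neighbour.

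There are also surviving pairs that were formerly $\pS_2$-close but are no longer adjacent, or vice versa. The glueing is chosen so that closeness is preserved only locally. All other pairs of survivors were remote both before and after, so their 2-types remain valid. Between consecutive elements of $\rmW$ on the chain, at most exponentially many positions survive, because the number of profiles is bounded by $\rmN_1 \cdot \rmN_2 \cdot (|\rmW|+1)$. This yields the exponential bound.

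\textbf{Main obstacle.} The delicate step is the interaction, in both surgeries, between $\pS_2$-closeness and the linear order $\lesseq_1$. In the first surgery, redirected witnesses must avoid the at most two close elements. In the second, cutting along $\lesseq_2$ makes some previously remote pairs adjacent, and their existing 2-types lack the required $\pS_2$ atom. I would first try to handle this by overwriting the glued pair's 2-type with the one from the original adjacent pair having the same profile, adjusting its $\lesseq_1$-direction. Whether the direction can be made consistent is exactly why the $\lesseq_1$-gap, and ideally the $\lesseq_1$-direction to the successor, must be part of the profile.
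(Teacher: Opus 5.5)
\textbf{Genuine gap: the cut along $\lesseq_2$ can break $\lesseq_1$.} Your outline matches the paper's: hardness from $\FOt$, Scott normal form, confine remote witnesses to a small set $\rmW$, then cut along $\lesseq_2$. But the cutting step fails as you set it up. Say you delete $\elu_{s+1},\ldots,\elu_t$ and make $\elu_{t+1}$ the $\pS_2$-successor of $\elu_s$ with the copied $2$-type $\tau=\tp{\strA}{\elu_s,\elu_{s+1}}=\tp{\strA}{\elu_t,\elu_{t+1}}$. The $\lesseq_1$-relation between $\elu_s$ and $\elu_{t+1}$ is already fixed in $\strA$, cannot be rewired, and $\tau$ must agree with it.

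Equal profiles only tell you that $\elu_{t+1}$ lies on the $\tau$-side of $\elu_t$, not of $\elu_s$. Recording the $\rmW$-gap does not help, because all four elements may sit in one gap; nothing prevents $\elu_t \less_1 \elu_{t+1} \less_1 \elu_s \less_1 \elu_{s+1}$. Adding the direction to the successor adds nothing, since it is already part of $\tau$. \enquote{Adjusting} the glued $2$-type instead would create a $2$-type possibly unrealised in $\strA$, so not known to entail $\chi$, and $\elu_s$ could lose its close witness $\elu_{s+1}$.

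The relative position of $\elu_s$ and $\elu_{t+1}$ is non-local, so no pigeonhole over local profiles yields a legal cut, and your bound $\rmN_1\rmN_2(|\rmW|{+}1)$ is unsupported. The missing idea is the paper's ladder argument. Suppose no $s<t$ has $\tau_s=\tau_t$ with $\elu_{s+1},\elu_{t+1}$ on the same $\lesseq_1$-side of $\elu_s$. Then:
\begin{itemize}
\item the occurrences of a right-pointing $2$-type form a ladder whose rungs drift leftwards;
\item between consecutive rungs, some $\pS_2$-pair jumps leftwards over the earlier rung;
\item these jumping pairs have pairwise distinct (left-pointing) $2$-types.
\end{itemize}
Hence each right-pointing $2$-type occurs at most $q{+}1$ times, with $q$ the number of left-pointing ones, and symmetrically. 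So any stretch of $(\rmN_2{+}1)^2$ elements between $\lesseq_2$-consecutive members of $\rmW$ contains a legal cut. Note also that the $\rmW$-gap is irrelevant for the other survivors. Deleting a $\lesseq_2$-contiguous block keeps all their $2$-types and closeness, and the only new adjacency is the glued pair.

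\textbf{Smaller but real hole: the first surgery.} Your $\rmW$ is the marked elements plus their witnesses. Witnesses of the latter may be remote and outside $\rmW$, so you must redirect them too. But redirecting a remote witness of $\elb\in\Wo$ onto a marked $\elv$ overwrites $(\elb,\elv)$, which is exactly the pair serving a demand of $\elv$ if $\elb$ is $\elv$'s witness.

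Your \enquote{three groups used circularly} is only asserted, and it does not transfer directly from Gr\"{a}del--Kolaitis--Vardi, who build a fresh model. Here a target must lie on the prescribed $\lesseq_1$-side of the element. That is guaranteed only when the old witness lies outside the target pool. So an element whose witness already sits in a \enquote{wrong} group cannot simply be sent to the next group.

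The paper sidesteps this in two steps. It never rebinds elements of $\Wz\cup\Wo$, so every touched pair joins an element outside $\Wz\cup\Wo$ to one in $\Wz$, and such a pair is never a witness pair of an element of $\Wz\cup\Wo$. It then adds the witnesses of $\Wo$ as a third layer $\Wt$. You also need $m{+}2$ rather than $m{+}1$ extremal copies per side. That way $m$ distinct remote targets remain after discarding the at most two $\pS_2$-close elements.
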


\noindent This closes the gap listed by Manuel and
Zeume~\cite[Fig.~3]{ManuelZ13}.
Hardness comes from~$\FOt$; the known upper bounds were $\ExpSpace$ without
extra binary symbols~\cite[Thm.~3.1]{SchwentickZ12} and $\TwoNExpTime$ in
general~\cite[Thm.~7]{ZeumeH16}.
Ours utilises an exponential-model~property.

\begin{lemma}\label{lem:exp-model}
  If $\varphi$ has a finite model, then it has one of
  size~$2^{O(|\varphi|)}$.
\end{lemma}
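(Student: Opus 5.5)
The plan is model surgery in two stages, following the abstract: first confine all remote witnesses to a small set of elements, then cut the model down along~$\lesseq_2$. Fix a finite model~$\strA$ of~$\varphi$, with witnesses fixed as above.

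\textbf{Stage one (a small kernel of remote witnesses).}\, For every $1$-type~$\alpha$ and every $i \leqslant m$, we mark a bounded number of elements realising~$\alpha$ that can serve as remote witnesses. Since $\lesseq_1$ is in the signature, the $2$-type records whether the witness lies left or right, so we mark the $\lesseq_1$-leftmost and $\lesseq_1$-rightmost few realisations of each $1$-type. To keep the new witnesses remote, we mark a few extra ones: at most two elements are close to any given element. Call the marked set~$\Wz$; its size is $O(\rmN_1 \cdot m)$. We then adapt the circular witnessing scheme of Gr\"{a}del, Kolaitis and Vardi~\cite{GraedelKV97}. Every element whose remote witness is not in~$\Wz$ is redirected to a marked element of the same $1$-type on the same $\lesseq_1$-side. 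We redefine the $2$-type on that pair by copying the original witnessing $2$-type. Its order literal stays consistent because the marked element is extremal on the correct side, and it involves no $\pS_2$ because the pair is remote. Elements of~$\Wz$ get their witnesses in a small closed set $\Wz \cup \Wo \cup \Wt$, obtained by iterating the marking with three copies to avoid self-loops and conflicts, as in GKV.

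The delicate point is that redefining $2$-types must not break the universal conjunct~$\chi$. Copying existing realised $2$-types with matching endpoint $1$-types and matching $\lesseq_1$-orientation guarantees this. Auxiliary predicates of higher arity are already gone by Fact~\ref{fact:scott}.

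\textbf{Stage two (cutting along $\lesseq_2$).}\, The model is now a $\pS_2$-chain, in which the only nonlocal requirements are witnesses in the kernel~$\rmW$ plus close witnesses, which are neighbours along the chain. Each element is annotated by its $1$-type together with the $1$-types of its $\pS_2$-neighbours and the $2$-types to them. This gives a word over an alphabet of size roughly~$\rmN_2^2$. Elements of~$\rmW$ and their chain neighbours are frozen. Between frozen positions, whenever two positions carry the same local annotation, we cut out the segment between them and glue. All remaining requirements are local or point into~$\rmW$, so they are preserved. Freshly adjacent pairs realise an already-present $2$-type, and $\lesseq_1$ on the survivors is simply restricted. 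This leaves at most $|\rmW| \cdot \rmN_2^{2}$ elements, i.e.\ $2^{O(|\varphi|)}$.

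\textbf{Main obstacle.}\, The gluing creates new $\pS_2$-adjacent pairs whose $2$-type must contain the correct $\lesseq_1$-literal. But $\lesseq_1$ is fixed by the positions of the two survivors, not chosen freely. I would try to handle this by including in the annotation the relative $\lesseq_1$-direction to the $\pS_2$-successor, and by cutting only when the new neighbours have the same $\lesseq_1$-orientation as the old pair. If that fails, I would instead reorder $\lesseq_1$ among the survivors: $\lesseq_1$ constrains remote pairs only through kernel-extremality, which Stage one can protect. I expect this interplay between the fixed order and the glued successor to be the crux.
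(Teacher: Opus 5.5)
Your Stage one matches the paper's confinement lemma: a royal court of $m{+}2$ extremal realisations per $1$-type on each $\lesseq_1$-side, closed twice under witnesses ($\Wo$, then $\Wt$). The reason for the closure is that pairs from $\Wo$ into $\Wz$ may be witness pairs of court members, so $\Wo$ must never be rewired. You should also put the bottommost and topmost elements into $\rmW$, so that every element lies between two frozen ones.

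The genuine gap is in Stage two. Pigeonhole on annotations gives you $s<t$ between consecutive members of $\rmW$ with $\tp{\strA}{\elu_s,\elu_{s+1}}=\tp{\strA}{\elu_t,\elu_{t+1}}$. But the glued pair $(\elu_s,\elu_{t+1})$ must receive this $2$-type. That is legal only if $\elu_{t+1}$ lies on the same $\lesseq_1$-side of $\elu_s$ as $\elu_{s+1}$ does. This is a relation between two far-apart elements, so no annotation of single positions can record it. The $\lesseq_1$-direction to the successor, which you propose to add, is already part of the $2$-type. For instance, with $\elu_3\less_1\elu_4\less_1\elu_1\less_1\elu_2$ and equal $2$-types on $(\elu_1,\elu_2)$ and $(\elu_3,\elu_4)$, the cut joining $\elu_1$ to $\elu_4$ is illegal. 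So your bound $|\rmW|\cdot\rmN_2^2$ is unjustified. Restricting to orientation-respecting cuts is the right move, but you then owe a proof that every long enough stretch contains such a cut. That is the missing idea.

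The paper supplies it by a ladder argument. Suppose no legal cut exists, and let $t_1<\dots<t_r$ be the occurrences of a fixed $2$-type placing $\vary$ to the right of $\varx$. These are forced into $\elu_{t_j}\less_1\elu_{t_j+1}\less_1\elu_{t_i}\less_1\elu_{t_i+1}$ for $i<j$. Between consecutive rungs, some step $(\elu_{s_i},\elu_{s_i+1})$ must jump leftwards over $\elu_{t_i}$. These jumps have pairwise distinct $2$-types, since a repeated one that is not a legal cut would contradict the jump of the later one. Hence each $2$-type occurs at most $\rmN_2{+}1$ times, and the stretch has fewer than $(\rmN_2{+}1)^2$ elements.

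Your fallback of reordering $\lesseq_1$ on the survivors does not work. Since $\lesseq_1$ is in the signature, flipping the order of a pair flips a literal in its $2$-type. The universal conjunct $\chi$ may forbid the flipped orientation outright, say if it contains $\pP(\varx)\land\pQ(\vary)\to\varx\lesseq_1\vary$. Then no admissible replacement $2$-type need exist, whatever Stage one protects.
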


\tikzset{
  reb/.style   ={x=4.6mm, y=4.0mm},
  shr/.style   ={x=7.0mm, y=7.0mm},
  apt/.style   ={ptcarre, vert, minimum width=2.4mm, minimum height=2.4mm,
                 inner sep=0pt},
  ego/.style   ={ptrond, gris, minimum height=3.1mm, inner sep=0pt},
  grpt/.style  ={minirond, fill=black!12, minimum height=1.9mm},
  wpt/.style   ={ptcarre, bleu, minimum width=2.5mm, minimum height=2.5mm,
                 inner sep=0pt},
  gpt/.style   ={minirond, fill=black!8, minimum height=2.2mm},
  dpt/.style   ={minirond, fill=rouge!25, draw=rouge, minimum height=2.2mm},
  grp/.style   ={ellipse, draw=black!55, line width=0.5pt,
                 dash pattern=on 1.2pt off 1.2pt, inner sep=0pt},
  oldw/.style  ={draw=rouge, line width=0.75pt, line cap=round,
                 dash pattern=on 1.5pt off 1.3pt,
                 -{Stealth[length=1.4mm,width=1.1mm]}},
  neww/.style  ={draw=rouge, line width=0.95pt, line cap=round,
                 -{Stealth[length=1.6mm,width=1.3mm]}},
  layer/.style ={draw=black!50, line width=0.55pt, line cap=round,
                 -{Stealth[length=1.3mm,width=1mm]}},
  succ/.style  ={draw=vert, line width=0.8pt, line cap=round,
                 dash pattern=on 1.5pt off 1.3pt,
                 -{Stealth[length=1.4mm,width=1.1mm]}},
  band/.style  ={fill=jaune!22, draw=none},
  cut/.style   ={fill=rouge!10, draw=rouge!40, line width=0.4pt,
                 dash pattern=on 1pt off 1pt},
  lab/.style   ={font=\scriptsize, inner sep=1.5pt}}

\begin{figure}[H]
\begin{minipage}[c]{0.46\textwidth}
\centering
\begin{tikzpicture}[reb]
  \fill[band] (0.6,0.3) rectangle (2.0,7.0);
  \fill[band] (7.6,0.3) rectangle (9.8,7.0);
  \draw[line width=0.8pt, -{Stealth[length=2mm,width=1.6mm]}]
        (0.3,0.3) -- (10.4,0.3);
  \draw[line width=0.8pt, -{Stealth[length=2mm,width=1.6mm]}]
        (0.3,0.3) -- (0.3,7.5);
  \node[lab, anchor=north east] at (10.45,0.2) {$\lesseq_1$};
  \node[lab, anchor=south west] at (0.2,7.5)   {$\lesseq_2$};
  \node[lab] at (1.3,6.6) {\envert{$\Wz$}};
  \node[lab] at (8.7,6.6) {\envert{$\Wz$}};
  \node[grp, minimum width=1.12cm, minimum height=0.58cm] (g1) at (3.5,5.6) {};
  \node[grp, minimum width=1.02cm, minimum height=0.58cm] (g2) at (6.2,4.4) {};
  \foreach \p in {(2.95,5.55),(3.5,5.72),(4.02,5.48)} \node[grpt] at \p {};
  \foreach \p in {(5.72,4.35),(6.22,4.52),(6.70,4.28)} \node[grpt] at \p {};
  \node[lab, anchor=south] at (3.5,6.32) {$\Wo$};
  \node[lab, anchor=north] at (6.2,3.68) {$\Wt$};
  \node[apt] (s)  at (1.3,5.4) {};
  \node[apt]      at (1.2,1.4) {}; \node[apt] at (1.6,3.2) {};
  \node[apt] (al) at (2.9,1.1) {};
  \node[apt] (w)  at (5.3,1.2) {};
  \node[apt] (v)  at (8.3,2.6) {};
  \node[apt]      at (8.0,0.9) {}; \node[apt] at (9.3,4.4) {};
  \node[lab, anchor=west] at (3.15,1.1) {\envert{$\alpha$}};
  \draw[layer] (s) -- (g1);  \draw[layer] (g1) -- (g2);
  \node[ego] (a) at (3.4,2.2) {};
  \node[lab, anchor=east]  at (3.0,2.2)  {$\ela$};
  \node[lab, anchor=north] at (5.3,0.85) {$\elw_i$};
  \node[lab, anchor=south] at (8.3,3.0) {$\elv_{\elw_i}$};
  \draw[oldw] (a) to[bend right=10] (w);
  \draw[neww] (a) to[bend left=9]  (v);
\end{tikzpicture}
\end{minipage}\hfill
\begin{minipage}[c]{0.52\textwidth}
\centering
\begin{tikzpicture}[shr]
  \fill[cut] (2.55,-0.42) rectangle (5.45,0.32);
  \node[wpt] (b1) at (0,0) {};
  \node[gpt] (y1) at (1,0) {};
  \node[gpt] (a1) at (2,0) {};
  \node[dpt] (a2) at (3,0) {};
  \node[dpt] (a3) at (4,0) {};
  \node[dpt] (c1) at (5,0) {};
  \node[gpt] (c2) at (6,0) {};
  \node[gpt] (c3) at (7,0) {};
  \node[gpt] (y2) at (8,0) {};
  \node[wpt] (b2) at (9,0) {};
  \foreach \s/\t in {b1/y1,y1/a1,a1/a2,a2/a3,a3/c1,c1/c2,c2/c3,c3/y2,y2/b2}
    \draw[succ] (\s) -- (\t);
  \draw[neww] (a1) to[out=80,in=100,looseness=0.73] (c2);
  \draw[rouge, line width=1pt, line cap=round, opacity=0.65]
        (2.55,-0.42) -- (5.45,0.32)  (2.55,0.32) -- (5.45,-0.42);
  \node[lab, anchor=south] at (0.5,0.18) {\envert{$\pS_2$}};
  \foreach \x/\t in {0/{\elb_1},2/{\ela_1},3/{\ela_2},
                     5/{\elc_1},6/{\elc_2},9/{\elb_2}}
    \node[lab, anchor=north] at (\x,-0.62) {$\t$};
  \foreach \x in {3,6}
    \node[lab, anchor=north, text=black!55] at (\x,-1.0) {$\succ_1\!\ela_1$};
  \draw[line width=0.8pt, -{Stealth[length=2mm,width=1.6mm]}]
        (0,-1.75) -- (9.5,-1.75);
  \node[lab, anchor=north east] at (9.5,-1.85) {$\lesseq_2$};
\end{tikzpicture}
\end{minipage}
\caption{Left: rebinding $\ela$'s remote witness~$\elw_i$ to $\elv_{\elw_i} \in \Wz$,
with $\Wo,\Wt$ closing it under witnesses. Right: cutting out $[\ela_2,\elc_2)$
between $\pS_2$-pairs $(\ela_1,\ela_2)$ and $(\elc_1,\elc_2)$ of equal
$2$-type ($\elb_1,\elb_2 \in \rmW$); the cut is legal as $\ela_2$ and~$\elc_2$
lie on the same $\lesseq_1$-side of~$\ela_1$, \ie $(\dagger)$~holds.}
\label{fig:warmup}
\end{figure}

\vspace{-0.75em}

\noindent We adapt the witnessing scheme of Gr\"{a}del, Kolaitis and Vardi~\cite{GraedelKV97},
as used for two linear orders in~\cite[Lem.~3.6]{BednarczykG25}: we keep the extremal copies of
each $1$-type, the \emph{royal court}, and two spare ones, as $\pS_2$ makes at most two elements
close to a given one.
The $2$-type of a remote pair has no $\pS_2$-atom and may be overwritten
without disturbing the orders; that of a close pair may not.
We proceed in two steps, shown in Figure~\ref{fig:warmup}.
First we \emph{confine} all remote witnesses to an exponentially large
set~$\rmW$, rewiring only non-distinguished relations~(Lemma~\ref{lem:confine}).
Then we \emph{shrink} the model until only exponentially many elements lie
strictly between any two $\lesseq_2$-consecutive members
of~$\rmW$~(Lemma~\ref{lem:shrink}), which is possible
precisely because nothing outside~$\rmW$ is still needed as a remote witness.
The first step refines Otto's treatment of
$\FOt[\set{\lesseq}]$~\cite{Otto01}, the second follows the word-based
arguments~for~$\FOt[\set{\pS}]$~\cite{EtessamiVW02}.

\noindent We start with our first model surgery, described
in~Lemma~\ref{lem:confine}.

\begin{lemma}\label{lem:confine}
  Let $\strA \models \varphi$. There are a model $\strB \models \varphi$ and a
  set $\rmW \subseteq \domB$ of size $O(|\varphi|^3 \cdot \rmN_1)$ that
  contains the bottommost and the topmost element of~$\domB$, and every
  remote witness in~$\strB$.
  Moreover, $\domB = \domA$, and $\strA$ and~$\strB$ agree on all $1$-types and
  on $\lesseq_1$,~$\lesseq_2$~and~$\pS_2$.
\end{lemma}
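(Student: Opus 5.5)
The plan is to adapt the circular witnessing scheme of Gr\"{a}del, Kolaitis and Vardi in three layers. Put $k \deff m+3$. Let $\Wz \subseteq \domA$ contain the bottommost and topmost element of~$\domA$. For every $1$-type~$\alpha$ realised in~$\strA$, let $\Wz$ also contain the $k$ leftmost and the $k$ rightmost (w.r.t.~$\lesseq_1$) realisations of~$\alpha$, or all of them if there are fewer. Thus $|\Wz| \leqslant 2k\rmN_1 + 2 = O(|\varphi| \cdot \rmN_1)$.

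Let $\Wo$ be the set of all (fixed) witnesses of elements of~$\Wz$, and $\Wt$ the set of all witnesses of elements of~$\Wo$. Then $|\Wo| \leqslant m|\Wz|$ and $|\Wt| \leqslant m|\Wo|$. Set $\rmW \deff \Wz \cup \Wo \cup \Wt$, which has size $O(|\varphi|^3 \cdot \rmN_1)$. Elements of $\Wz \cup \Wo$ keep their witnesses, which already lie in~$\rmW$. Every remaining element $\ela \notin \Wz \cup \Wo$, including those of~$\Wt$, has its remote witnesses redirected into~$\Wz$.

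\textbf{Rewiring.} Take such an~$\ela$, an index~$i$ and a remote witness~$\elw_i \notin \Wz$ of~$\ela$ with $1$-type~$\alpha$. Suppose, say, that $\elw_i$ lies to the right of~$\ela$. Since $\elw_i$ is not among the $k$ rightmost realisations of~$\alpha$, all $k$ of those lie strictly to the right of~$\elw_i$, hence to the right of~$\ela$, and they belong to~$\Wz$. Among them we choose $\elv$ subject to three exclusions:
\begin{itemize}
\item $\elv$ is not $\pS_2$-close to~$\ela$ (at most two candidates excluded);
\item $\elv$ is not one of the remaining fixed witnesses of~$\ela$;
\item $\elv$ is not a replacement already chosen for another index of~$\ela$.
\end{itemize}
Together these exclude at most $m+1 < k$ candidates, so a valid~$\elv$ exists. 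We then overwrite $\tp{}{\ela,\elv}$ with $\tp{\strA}{\ela,\elw_i}$ (and the reversed type on $(\elv,\ela)$), and fix $\elv$ as $\ela$'s new $\chi_i$-witness. The left case is symmetric.

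This change is harmless to the orders and $1$-types. Both pairs are remote, so neither $2$-type contains an $\pS_2$-atom. Both pairs place the second element on the same $\lesseq_1$-side, so $\lesseq_1$ and $\pS_2$ (and hence $\lesseq_2$) are untouched. The $1$-types $\tp{}{\elv}$ and $\tp{}{\elw_i}$ coincide, so the new $2$-type is consistent and all $1$-types are preserved. Moreover, the new $2$-type was already realised in~$\strA$, so it entails~$\chi$. Auxiliary predicates of higher arity do not arise, as $\varphi$ is in Scott normal form.

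\textbf{Main obstacle: no rewiring destroys a fixed witness pair.} Each overwritten pair has the form $\{\ela,\elv\}$ with $\ela \notin \Wz \cup \Wo$ and $\elv \in \Wz$. We check the three ways such a pair could carry a witness that must survive.
\begin{itemize}
\item The pair cannot be used by~$\elv$. Witnesses of~$\elv \in \Wz$ lie in~$\Wo$, and $\ela \notin \Wo$.
\item The pair cannot be used by~$\ela$ for another index. This is excluded by the choice of~$\elv$.
\item Two rewirings cannot hit the same pair. An overwritten pair $\{\ela,\elv\}$ determines $\ela$ as its unique member outside~$\Wz$, and distinct indices of~$\ela$ receive distinct replacements.
\end{itemize}
Hence all simultaneous overwrites are independent. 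Every element keeps a valid witness for each~$\chi_i$ (the unchanged ones, plus the new ones via the copied $2$-type), so $\strB \models \varphi$. Close witnesses are never touched and keep their positions.

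It remains to check that every remote witness of~$\strB$ lies in~$\rmW$, for elements in $\Wz \cup \Wo$ and for all others. Elements of $\Wz\cup\Wo$ keep witnesses lying in $\Wo\cup\Wt$. Every other element has each remote witness either already in~$\Wz$ or redirected into~$\Wz$. Finally, $\domB = \domA$, and the bottommost and topmost elements lie in $\Wz \subseteq \rmW$.
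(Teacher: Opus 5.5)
Your proposal is correct and follows essentially the same route as the paper: a royal court $\Wz$ of extremal realisations, closed twice under witnesses, and rebinding of the remote witnesses of every element outside $\Wz \cup \Wo$ into $\Wz$ by copying remote $2$-types. The differences are cosmetic. You pick replacements greedily per index with an explicit exclusion count, using $m{+}3$ rather than $m{+}2$ extremal copies, whereas the paper groups indices by ($1$-type, side) and uses an injection that fixes the witnesses already in~$\Rtype$.
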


\begin{proof}
\noindent We assemble the desired set $\rmW$ in two stages,
rewire connections concerning every element outside it,
and finally check that no witness was lost on the~way.

\vspace{0.25em}
\noindent \textbf{The royal court.}\, For each $1$-type~$\alpha$ realised
in~$\strA$, the court~$\Wz$ collects the $m{+}2$ leftmost and the
$m{+}2$ rightmost realisations of~$\alpha$, or all of them if there are fewer
than~$2m{+}4$.
It also contains the bottommost and the topmost element of~$\domA$, so that
$|\Wz| \leqslant 2(m{+}2)\rmN_1 + 2$.
Intuitively, $\Wz$ holds more copies of each $1$-type than any element needs: at
most $m$ witnesses are asked for, and two spares cover the close elements barred
from the surgery~below.

\vspace{0.25em}
\noindent \textbf{Closing under witnesses.}\, Let the set $\Wo$ hold the witnesses of
the elements of~$\Wz$ that lie outside~$\Wz$, and let $\Wt$ do the same
for~$\Wo$ outside $\Wz \cupdot \Wo$.
Put $\rmW \deff \Wz \cupdot \Wo \cupdot \Wt$.
Every element demands at most $m$ different witnesses,
hence $|\rmW_{i+1}| \leqslant m|\rmW_i|$ and
$|\rmW| \leqslant (1{+}m{+}m^2)|\Wz| = O(|\varphi|^3 \rmN_1)$.
In allocating witnesses we cannot simply stop at $\Wz$ and~$\Wo$.
The \emph{rebinding} below rewires pairs that enter~$\Wz$ from outside.
A pair from~$\Wo$ into~$\Wz$, however, may already be fixed as a witness pair of
its endpoint in~$\Wz$.
If $\Wo$ were rewired as well, such witnesses would be lost; we therefore spare
it and~keep~its~own~witnesses~as~$\Wt$.

\vspace{0.25em}
\noindent \textbf{Rebinding.}\, Fix $\ela \in \domA \setminus (\Wz \cup \Wo)$,
write $\elw_i$ for its witness for~$\chi_i$, and discard those $i$ for
which $\elw_i$ is close to~$\ela$.
Each remaining $\tp{\strA}{\ela,\elw_i}$ determines the $1$-type of~$\elw_i$ and
the $\lesseq_1$-side of~$\ela$ on which it lies; we split the indices into
classes according to this~pair.
Consider such a class~$\rmC$, say of $1$-type~$\alpha$ and to the right of~$\ela$.
If $\alpha$ has fewer than $2m{+}4$ realisations, all of them lie in~$\Wz$, the
$\elw_i$ with $i \in \rmC$ included, and we leave $\rmC$ alone.
Otherwise the $m{+}2$ rightmost realisations of~$\alpha$ form a
set $\Rtype \subseteq \Wz$, and for $i \in \rmC$ there are two cases: either
$\elw_i \in \Rtype$, or $\elw_i$ lies strictly to the left of all of~$\Rtype$
and then every element of~$\Rtype$ lies to the right of~$\ela$~as~well.
If the first case holds for all $i \in \rmC$, we leave $\rmC$ alone.
Otherwise all of~$\Rtype$ lies to the right of~$\ela$, and at least $m$ of
its elements are remote from~$\ela$, while the set
$\set{\elw_i \mid i \in \rmC}$ has at most $|\rmC| \leqslant m$ elements.
So we may pick an injection $\elw \mapsto \elv_{\elw}$ of this set into the
elements of~$\Rtype$ remote from~$\ela$ that fixes every $\elw \in \Rtype$,
redefine
\[
  \tp{\strB}{\ela, \elv_{\elw_i}} \deff \tp{\strA}{\ela, \elw_i}
  \qquad \text{for all } i \in \rmC ,
\]
and fix $\elv_{\elw_i}$ as the new witness of~$\ela$ for~$\chi_i$.
This is legitimate: the old and the new $2$-type agree on the $1$-types of both
arguments, both place $\vary$ to the right of~$\varx$, and neither contains a
$\pS_2$-atom.
So $\lesseq_1$, $\lesseq_2$, $\pS_2$ and all $1$-types survive, $\strB$ is again
ordered, and the imposed $2$-type is realised in~$\strA$ and hence
entails~$\chi$.

\vspace{0.25em}
\noindent \textbf{Nothing breaks.}\, Every pair touched above has one endpoint
outside $\Wz \cup \Wo$ and the other inside~$\Wz$.
The witnesses of the elements of~$\Wz$ lie in $\Wz \cup \Wo$, those of~$\Wo$
in~$\rmW$.
So no witness pair of an element of $\Wz \cup \Wo$ is ever touched,
and each such element keeps all its witnesses inside~$\rmW$.
Finally, take an element $\ela$ outside $\Wz \cup \Wo$.
Its close witnesses survive, since no element close to~$\ela$ is ever
picked as a target~$\elv_{\elw}$.
Its remote witnesses now lie in~$\Wz$ and still witness their demands: distinct
witnesses of one class get distinct targets, and distinct classes pick targets
from disjoint sets, of distinct $1$-types or, for one~$\alpha$ with at least
$2m{+}4$ realisations, its $m{+}2$ leftmost and $m{+}2$~rightmost~ones.
Together with $|\rmW| = O(|\varphi|^3\rmN_1)$ this proves~the~lemma.
\end{proof}

\noindent With all remote witnesses inside~$\rmW$, no element between two
$\lesseq_2$-consecutive members of~$\rmW$ is anybody's remote witness.
Such stretches can therefore be shortened by cutting between two
$\pS_2$-pairs of equal $2$-type, provided that the cut respects~$\lesseq_1$.

\begin{lemma}\label{lem:shrink}
  Let $\strA \models \varphi$ and let $\rmW \subseteq \domA$ contain the
  bottommost and the topmost element of~$\domA$, and every remote
  witness in~$\strA$.
  Then $\varphi$ has a model of size at most~$|\rmW| \cdot (\rmN_2{+}1)^2$.
\end{lemma}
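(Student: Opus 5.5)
The plan is to shorten the model gap by gap along~$\lesseq_2$ with a \emph{cut-and-glue} operation, to repeat it while some gap is too long, and to show that it never destroys a witness. Order $\domA$ by~$\lesseq_2$. Since $\rmW$ contains the bottommost and topmost elements, every element outside~$\rmW$ lies strictly between two $\lesseq_2$-consecutive members $\elb_1 \less_2 \elb_2$ of~$\rmW$. Call the elements strictly between them the \emph{gap} of $(\elb_1,\elb_2)$. There are fewer than $|\rmW|$ gaps, so it suffices to reach a model in which every gap has at most $(\rmN_2{+}1)^2 - 1$ elements.

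\textbf{The cut.} Take two $\pS_2$-pairs $(\ela_1,\ela_2)$ and $(\elc_1,\elc_2)$ inside one gap, meaning that all four elements lie strictly between $\elb_1$ and~$\elb_2$. Assume $\ela_1 \less_2 \elc_1$, $\tp{\strA}{\ela_1,\ela_2} = \tp{\strA}{\elc_1,\elc_2}$, and condition~$(\dagger)$: $\ela_2$ and $\elc_2$ lie on the same $\lesseq_1$-side of~$\ela_1$. Delete the $\lesseq_2$-interval $[\ela_2,\elc_2)$. Restrict $\lesseq_1$ and all auxiliary relations to the remaining elements, make $\elc_2$ the $\pS_2$-successor of~$\ela_1$, and set $\tp{\strB}{\ela_1,\elc_2} \deff \tp{\strA}{\ela_1,\ela_2}$.

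This is consistent for three reasons. The $1$-types of $\ela_1$ and $\elc_1$ coincide, and so do those of $\ela_2$ and $\elc_2$. The $\lesseq_1$-side of $\elc_2$ relative to $\ela_1$ is the one prescribed by the $2$-type, which is exactly what~$(\dagger)$ guarantees. The imposed $2$-type is realised in~$\strA$, so it satisfies~$\chi$.

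\textbf{No witness is lost.} First, no element of~$\rmW$ is deleted, so every remote witness of a surviving element survives. Second, the only pair whose $2$-type changes is $(\ela_1,\elc_2)$, and neither endpoint is in~$\rmW$. So that pair was not a remote witness pair, because every remote witness lies in~$\rmW$ and the non-$\rmW$ endpoint cannot be the witness. Third, the close witnesses are handled as follows. $\ela_1$ loses its upper neighbour~$\ela_2$ but gains $\elc_2$ with the same $2$-type. Symmetrically, $\elc_2$ gains $\ela_1$ in place of~$\elc_1$ with the same $2$-type. All other $\pS_2$-pairs are untouched.

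The one subtle point is that $\ela_1$ and $\elc_2$ might formerly have been remote witnesses of each other; this is excluded because neither lies in~$\rmW$. This is why I would insist that the two pairs lie strictly inside the gap, at the cost of two extra elements per gap. Each cut strictly decreases the domain, so iterating terminates in a model in which no gap admits a legal cut.

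\textbf{Counting, the main obstacle.} It remains to show that a gap with $(\rmN_2{+}1)^2$ or more elements always admits a legal cut.

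Pigeonhole on $2$-types alone gives many $\pS_2$-pairs of one $2$-type~$\beta$, say $(\elx_1,\ely_1),\dots,(\elx_r,\ely_r)$ in $\lesseq_2$-order with $\beta$ placing~$\vary$ to the right. If $(\dagger)$ fails for every $s<t$, then
\[
\elx_r \less_1 \ely_r \less_1 \elx_{r-1} \less_1 \cdots \less_1 \elx_1 ,
\]
so the lower endpoints decrease strictly in~$\lesseq_1$, and this alone yields no contradiction. This is where I expect the real work to be.

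My first attempt is a two-level pigeonhole matching the $(\rmN_2{+}1)^2$ bound. Cut the gap into at most $\rmN_2{+}1$ consecutive blocks by the first occurrence of each new $2$-type. I would then argue that within one block a repeated $2$-type violating~$(\dagger)$ forces the above zig-zag. A zig-zag of length $\rmN_2{+}1$ in turn forces some other type to repeat with the opposite orientation, and there $(\dagger)$ must hold.

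If that does not work, the fallback is an Erdős--Szekeres argument. A gap of length $(\rmN_2{+}1)^2$ contains a $\lesseq_1$-monotone subsequence of length $\rmN_2{+}2$. I would try to find, among the $\pS_2$-pairs starting at its members, two of equal $2$-type whose upper endpoints lie on the same $\lesseq_1$-side of the earlier lower endpoint, which is exactly~$(\dagger)$.
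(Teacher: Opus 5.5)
Your cut-and-glue operation and the witness bookkeeping are correct and essentially the paper's. This covers keeping both pairs strictly inside the gap, so that $(\ela_1,\elc_2)$ was no witness pair and no member of~$\rmW$ is deleted, and re-serving the two close demands. The genuine gap is the combinatorial core, which you leave open. Write the gap as $\elu_1 \less_2 \cdots \less_2 \elu_n$ and put $\tau_s \deff \tp{\strA}{\elu_s,\elu_{s+1}}$. What you need is that $n \geqslant (\rmN_2{+}1)^2$ forces some $s<t$ with $\tau_s=\tau_t$ and~$(\dagger)$, and neither sketch establishes this.

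In the first sketch, the blocks cut at first occurrences play no role. Your claim that a long zig-zag makes some left-placing type repeat, ``and there $(\dagger)$ must hold'', is unjustified as stated: left-placing pairs of one type can themselves form a mirrored ladder violating~$(\dagger)$. So you must say which occurrences to compare. The Erd\H{o}s--Szekeres fallback fails for a related reason. A $\lesseq_1$-increasing run of lower endpoints gives~$(\dagger)$ only for a repeated right-placing type, as then $\elu_{s_j+1} \succ_1 \elu_{s_j} \succ_1 \elu_{s_i}$. For a repeated left-placing type, the side of $\elu_{s_j+1}$ relative to $\elu_{s_i}$ is undetermined. Nothing aligns orientation with monotonicity, since absent a cut the lower endpoints of each right-placing type already decrease along the ladder.

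The missing idea is to compare the left-placing pairs that jump over the rungs. Let $t_1<\cdots<t_r$ be the indices with $\tau_{t_i}=\beta$ for a right-placing~$\beta$. Absent a legal cut, $\elu_{t_j} \less_1 \elu_{t_j+1} \less_1 \elu_{t_i} \less_1 \elu_{t_i+1}$ for $i<j$. Since $\elu_{t_i+1}$ lies right of $\elu_{t_i}$ and $\elu_{t_{i+1}}$ lies left of it, some $s_i$ with $t_i<s_i<t_{i+1}$ satisfies $\elu_{s_i+1} \less_1 \elu_{t_i} \less_1 \elu_{s_i}$; take the last index in that range whose element lies right of~$\elu_{t_i}$. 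Suppose $\tau_{s_i}=\tau_{s_j}$ for some $i<j$ while $(\dagger)$ fails. Then the ladder gives $\elu_{s_j+1} \succ_1 \elu_{s_i} \succ_1 \elu_{t_i} \succ_1 \elu_{t_j}$, contradicting $\elu_{s_j+1} \less_1 \elu_{t_j}$.

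Hence the $\tau_{s_i}$ are pairwise distinct left-placing types. Let $p$ and~$q$ count the right- and left-placing $2$-types. Then $r \leqslant q{+}1$, and symmetrically each left-placing type occurs at most $p{+}1$ times. So $n-1 \leqslant p(q{+}1)+q(p{+}1) < (\rmN_2{+}1)^2-1$, a contradiction. Without this argument or a substitute, your iteration has no bound on gap length.
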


\begin{proof}
As $\rmW$ contains the bottommost and the topmost element, every element
outside~$\rmW$ lies strictly between two $\lesseq_2$-consecutive members
of~$\rmW$.
Fix such members $\elb_1 \less_2 \elb_2$, let
$\elu_1 \less_2 \cdots \less_2 \elu_n$ be the elements strictly between them,
and suppose that $n \geqslant (\rmN_2{+}1)^2$.
For $1 \leqslant s < n$ put $\tau_s \deff \tp{\strA}{\elu_s,\elu_{s+1}}$.
We shall delete $\elu_{s+1}, \ldots, \elu_t$ for some $s < t$ and give
$(\elu_s, \elu_{t+1})$ the $2$-type~$\tau_s$.
This respects $1$-types if $\tau_s = \tau_t$, and it respects~$\lesseq_1$ if
\[
  \text{$\elu_{s+1}$ and~$\elu_{t+1}$ lie on the same side of~$\elu_s$
  (both to its left, or both to its right).}
  \tag{$\dagger$}
\]
If $\pS_2$ were the successor of~$\lesseq_1$, every $\elu_{s+1}$ would lie to
the right of~$\elu_s$.
Then all $s < t$ would satisfy~$(\dagger)$, and already
$n \geqslant \rmN_2{+}2$ would force $\tau_s = \tau_t$ for some $s < t$.
In general, the pairs $(\elu_s, \elu_{s+1})$ may take the shape
of a \enquote{ladder} whose \enquote{rungs} keep shifting to the left.
For example, it may happen that
$\elu_3 \less_1 \elu_4 \less_1 \elu_1 \less_1 \elu_2$ and $\tau_1 = \tau_3$;
then $\elu_2$ lies to the right of~$\elu_1$, but $\elu_4$ to its left.
So equal $2$-types do not guarantee~$(\dagger)$, but ladders turn out to be~short.

\vspace{0.25em}
\noindent \textbf{Finding a cut.}\, Suppose that no $s < t$ satisfies
$\tau_s = \tau_t$ and~$(\dagger)$.
Let $\beta$ be a $2$-type placing $\vary$ to the right of~$\varx$, and let
$t_1 < \cdots < t_r$ be the indices~$s$ with $\tau_s = \beta$.
For all $i < j$, as $\tau_{t_i} = \tau_{t_j}$,
we see that $(\dagger)$ fails for $t_i$ and~$t_j$.
Since $\elu_{t_i+1}$ lies to the right of~$\elu_{t_i}$, the element
$\elu_{t_j+1}$ lies to its left.
Thus, as in the example above, the pairs of $2$-type~$\beta$ form a ladder:
\[
  \elu_{t_j} \less_1 \elu_{t_j+1} \less_1 \elu_{t_i} \less_1 \elu_{t_i+1} .
  \tag{$\clubsuit$}
\]
Now fix $i < r$.
By~$(\clubsuit)$, $\elu_{t_{i+1}}$ lies to the left of~$\elu_{t_i}$ and
$\elu_{t_i+1}$ to its right, so $t_i{+}1 < t_{i+1}$.
Let $s_i < t_{i+1}$ be the largest index such that $\elu_{s_i}$ lies to the
right of~$\elu_{t_i}$; as $t_i{+}1$ qualifies, $s_i > t_i$.
Then $\elu_{s_i+1}$ lies to the left of~$\elu_{t_i}$, by the choice of~$s_i$,
or by~$(\clubsuit)$ if $s_i{+}1 = t_{i+1}$.
Hence
\[
  \elu_{s_i+1} \less_1 \elu_{t_i} \less_1 \elu_{s_i} ,
  \tag{$\spadesuit$}
\]
\ie $(\elu_{s_i}, \elu_{s_i+1})$ jumps leftwards over~$\elu_{t_i}$, and
$\tau_{s_i}$ places $\vary$ to the left of~$\varx$.
Moreover, $\tau_{s_i} \neq \tau_{s_j}$ for $i < j < r$.
Indeed, suppose otherwise.
As $s_i < t_{i+1} \leqslant t_j < s_j$ and $(\dagger)$ fails for $s_i$
and~$s_j$, the element $\elu_{s_j+1}$ lies to the right of~$\elu_{s_i}$.
By $(\spadesuit)$ and~$(\clubsuit)$, it then also lies to the right of
$\elu_{t_i}$ and of~$\elu_{t_j}$, which contradicts~$(\spadesuit)$~for~$j$.
Let $p$ and~$q$ be the numbers of $2$-types in~$\tptwo{\varphi}$ placing
$\vary$ to the right and to the left of~$\varx$, respectively.
So $\beta$ is taken by at most $q{+}1$ of the~$\tau_s$.
Symmetrically (reverse~$\lesseq_1$), every $2$-type placing $\vary$ to the
left of~$\varx$ is taken at most $p{+}1$ times.
As $p{+}q \leqslant \rmN_2$, counting the $n{-}1$ indices~$s$ by
$2$-types contradicts $n \geqslant (\rmN_2{+}1)^2$:
\[
  n - 1 \;\leqslant\; p(q{+}1) + q(p{+}1) \;\leqslant\;
  \tfrac{1}{2}\rmN_2^2 + \rmN_2 \;<\; (\rmN_2{+}1)^2 - 1 .
\]

\vspace{0.25em}
\noindent \textbf{Performing the cut.}\, So let $s < t$ satisfy
$\tau_s = \tau_t$ and~$(\dagger)$.
Put $\ela_1 \deff \elu_s$, $\ela_2 \deff \elu_{s+1}$, $\elc_1 \deff \elu_t$
and $\elc_2 \deff \elu_{t+1}$, where possibly $\ela_2 = \elc_1$.
Let $\strC$ arise from~$\strA$ by deleting every $\elu$ with
$\ela_2 \lesseq_2 \elu \less_2 \elc_2$ and setting
$\tp{\strC}{\ela_1, \elc_2} \deff \tau_s$, inheriting all else from~$\strA$.
Now $\elc_2$ sits directly above~$\ela_1$, and no other pair changes its
closeness.
The $2$-type $\tau_s$ contains $\pS_2(\varx,\vary)$, as $\elu_{s+1}$ sits
directly above~$\elu_s$.
It agrees with the $1$-types of $\ela_1$ and~$\elc_2$, as $\tau_s = \tau_t$.
It places $\vary$ to the right of~$\varx$ exactly when $\ela_2$ lies to the
right of~$\ela_1$, \ie by~$(\dagger)$, exactly when $\elc_2$ does.
So $\strC$~is~ordered.

\vspace{0.25em}
\noindent \textbf{Nothing breaks.}\, Every $1$-type and every $2$-type
realised in~$\strC$ is realised in~$\strA$, and thus entails~$\chi$.
Remote witnesses lie in~$\rmW$, which contains neither $\ela_1$ nor $\elc_2$ nor
any deleted element.
So they survive and serve the same demands.
Only $\ela_1$ and~$\elc_2$ can lose close witnesses.
Since $(\ela_1, \elc_2)$ now realises $\tau_s = \tau_t$, the element $\elc_2$
replaces~$\ela_2$ as a witness of~$\ela_1$, and $\ela_1$ replaces $\elc_1$ as a
witness of~$\elc_2$.
Thus $\strC \models \varphi$, and $\strC$ and~$\rmW$ meet the assumptions of
the lemma.
We repeat such cuts until fewer than $(\rmN_2{+}1)^2$ elements lie between any
two $\lesseq_2$-consecutive members of~$\rmW$.
Then at most
$|\rmW| + (|\rmW|{-}1)\big((\rmN_2{+}1)^2{-}1\big) \leqslant
|\rmW| \cdot (\rmN_2{+}1)^2$ elements~remain.
\end{proof}

\begin{proof}[Proof of Lemma~\ref{lem:exp-model}]
Let $\strA \models \varphi$ be finite.
Lemma~\ref{lem:confine} provides $\strB \models \varphi$ and a set~$\rmW$ of size
$O(|\varphi|^3\rmN_1)$ holding every remote witness and the bottommost and the
topmost element of~$\domB$.
Lemma~\ref{lem:shrink} then yields a model of size
$O\!\left(|\varphi|^3 \cdot \rmN_1 \cdot \rmN_2^2\right)$,
which~is~$2^{O(|\varphi|)}$.
\end{proof}

\noindent It remains to turn the exponential-model property into a decision
procedure.

\begin{proof}[Proof of Theorem~\ref{thm:one-order-one-succ}]
Given $\varphi \in \FOt[\set{\lesseq_1,\pS_2}]$, guess a sentence~$\psi$ in
Scott normal form output by a run of the procedure from
Fact~\ref{fact:scott}, then a structure~$\strA$ of
size~$2^{O(|\psi|)}$~(Lemma~\ref{lem:exp-model} for~$\psi$), and accept whenever
$\strA \models \psi$.
The last test is the three-item check from the preliminaries: $\strA$ is
ordered, every realised $1$-type and $2$-type entails the universal conjunct,
and every element has its $m$ witnesses.
This yields the desired $\NExpTime$ procedure~for~$\FOt[\set{\lesseq_1,\pS_2}]$.~\qedhere
\end{proof}

\noindent Finally, note that the axiom
$\forall{\varx}\forall{\vary}\,(\pS_2(\varx,\vary) \to \varx \lesseq_1 \vary)$
forces $\pS_2$ to be the successor of~$\lesseq_1$. As~a consequence,
Theorem~\ref{thm:one-order-one-succ} reproves $\NExpTime$-completeness
of~\mbox{$\FOt[\set{\lesseq,\pS}]$}~\cite[Thm.~3]{ZeumeH16}.

\section{Application: Deciding Order- and Order-Successor-Invariance}
\label{sec:orderinv}

\newcommand{\oinvFOt}{{\lesseq}\text{-inv}\,\FOt}
\newcommand{\osinvFOt}{{(\lesseq,\pS)}\text{-inv}\,\FOt}
\newcommand{\swapped}[1]{#1^{\ela\elb}}
\newcommand{\thle}{\theta_{\lesseq}}
\newcommand{\thS}{\theta_{\pS}}

\noindent A sentence $\varphi \in \FOt[\set{\lesseq, \pS}]$ is
\emph{order-successor-invariant} if, for every structure~$\strA$ without
distinguished relations, the truth of~$\varphi$ in the expansion
$(\strA, \lesseq, \pS)$ of~$\strA$ by a linear order~$\lesseq$ on~$\domA$ and
its successor~$\pS$ does not depend on the choice of~$\lesseq$.
If $\pS$ does not occur in~$\varphi$, we speak of \emph{order-invariance}.
The two classes are denoted $\osinvFOt$ and~$\oinvFOt$.
Recall that $\varphi$ is order-invariant if and only if $\varphi_1 \land \neg\varphi_2$
has no finite ordered model, where $\varphi_i$ is $\varphi$ with every~$\lesseq$
replaced by~$\lesseq_i$~\cite[\S1]{ZeumeH16}; with a successor, the analogous
formula lands in an undecidable fragment~\cite[Thm.~4]{Manuel10}.
The following theorem settles the decidability
of~$\osinvFOt$~\cite[\S7, Question~3]{ZeumeH16} and also
reproves the optimal complexity bound for~$\oinvFOt$~\cite[Thm.~3.7]{BednarczykG25}.

\begin{theorem}\label{thm:orderinv}
  Deciding whether an $\FOt[\set{\lesseq, \pS}]$-sentence is
  order-successor-invariant is $\coNExpTime$-complete.
  The same holds for order-invariance
  of~$\FOt[\set{\lesseq}]$-sentences.
\end{theorem}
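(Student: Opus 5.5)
The plan is to reduce non-invariance to finite satisfiability of $\FOt[\set{\lesseq, \pS}]$ with two fresh constants, and then invoke Theorem~\ref{thm:one-order-one-succ}, which covers constants. Hardness is immediate in both cases. A sentence $\psi$ of plain $\FOt$ with no distinguished symbols is trivially order-invariant, and $\psi \lor \theta$ with $\theta$ unsatisfiable is still invariant. The cleanest route is to reduce unsatisfiability: $\psi$ is finitely unsatisfiable iff $\psi \land \exists \varx\exists\vary\,(\varx \less \vary \land \pP(\varx))\land\exists\varx\,\neg\pP(\varx)$-style sentences are invariant. Concretely, take fresh unary $\pP$ and the sentence $\psi \land \exists\varx\,(\pP(\varx) \land \forall\vary\,\vary\lesseq\varx)$. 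If $\psi$ has a finite model, we can choose $\pP$ to hold at exactly one element, and then the order decides the truth value, so invariance fails. Otherwise the sentence is always false. Hence $\coNExpTime$-hardness follows from $\NExpTime$-hardness of $\FOt$~\cite{GraedelKV97}, and the same sentence uses only~$\lesseq$.

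For the upper bound, the key combinatorial fact is that any two linear orders on a finite set are connected by a sequence of swaps of adjacent elements, as in bubble sort. So if $\varphi$ is not order-successor-invariant, there are $\strA$ and orders with $\varphi$ true under one and false under the other. Along the swap chain, the truth value must flip at some step. Thus there is a linear order $\lesseq$ and a $\lesseq$-adjacent pair $\pS(\ela,\elb)$ such that $\varphi$ holds in $(\strA,\lesseq,\pS)$ but fails under the swapped order $\swapped{\lesseq}$ with its successor $\swapped{\pS}$, or vice versa. Since $\varphi$ and $\neg\varphi$ are symmetric for invariance, we may assume this direction or just test both.

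The main step is to define $\swapped{\lesseq}$ and $\swapped{\pS}$ in $\FOt$ from $\lesseq$, $\pS$ and constants $\elc_a,\elc_b$ interpreted as $\ela,\elb$. For a pair $(\varx,\vary)$, the swapped order agrees with $\lesseq$ unless $\set{\varx,\vary}=\set{\elc_a,\elc_b}$, in which case it is reversed. This gives a quantifier-free formula $\thle(\varx,\vary)$. The swapped successor differs only on pairs involving $\elc_a,\elc_b$ and their $\pS$-predecessor and successor. The predecessor $p$ of $\elc_a$ now precedes $\elc_b$, then comes $\elc_a$, then the successor of $\elc_b$. These pairs are expressible quantifier-free via atoms such as $\pS(\varx,\elc_a)$ and $\pS(\elc_b,\vary)$ together with equalities to constants, giving $\thS(\varx,\vary)$.

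The check then reads $\pS(\elc_a,\elc_b) \land \varphi \land \neg\varphi[\lesseq/\thle,\pS/\thS]$, plus the symmetric formula. This is an $\FOt[\set{\lesseq,\pS}]$-sentence with constants, of polynomial size, since substitution of quantifier-free two-variable formulas keeps two variables. It has a finite ordered model iff $\varphi$ is not invariant: the converse direction holds because such a model directly yields two orders on the same reduct that disagree on $\varphi$. By Theorem~\ref{thm:one-order-one-succ}, this gives a $\coNExpTime$ procedure. For order-invariance, the same argument applies without $\thS$, since $\pS$ is still available in the target logic.

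The main obstacle is writing $\thS$ correctly when the swapped pair sits at an end of the order, where $p$ or the successor may not exist. It should be handled by noting that each clause only fires when the relevant $\pS$-atom holds.
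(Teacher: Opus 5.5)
Your upper bound is the paper's argument. Bubble sort gives an adjacent swap at which the truth value flips, fresh constants name the swapped pair, and quantifier-free $\theta_{\lesseq}$, $\theta_{\pS}$ define the swapped order and successor. Non-invariance then becomes finite satisfiability of $\pS(\pc,\pd) \land \varphi \land \neg\varphi[{\lesseq}/\theta_{\lesseq}, \pS/\theta_{\pS}]$.

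Three small remarks on this part:
\begin{itemize}
\item The symmetric formula is redundant. A flip in the other direction turns into this one if you take the swapped order as the base order and reverse the pair.
\item Theorem~\ref{thm:one-order-one-succ} is stated for a foreign successor. To apply it to $\FOt[\set{\lesseq,\pS}]$ you need the axiom $\forall x\forall y\,(\pS(x,y) \to x \lesseq y)$, or \cite[Thm.~3]{ZeumeH16}.
\item For pure order-invariance, the paper defines adjacency of $\pc,\pd$ from $\lesseq$ alone so as to stay $\pS$-free. Keeping $\pS$ in the target logic, as you do, works equally well.
\end{itemize}

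The genuine difference is hardness. The paper only cites \cite[Thm.~3.2]{BednarczykG25}. You instead reduce finite unsatisfiability of plain $\FOt$ via $\psi \land \exists x\,(\pP(x) \land \forall y\, y \lesseq x)$. This makes the theorem self-contained and gives hardness already for sentences that use $\lesseq$ only to name the maximum.

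As stated, however, the reduction needs every satisfiable $\psi$ to have a model with at least two elements. For $\psi \deff \forall x\forall y\, x = y$ your sentence is trivially invariant, although $\psi$ is satisfiable. The fix is to first pass to $\psi^{\pQ} \land \exists x\,\pQ(x) \land \exists x\,\neg\pQ(x)$, where $\psi^{\pQ}$ relativises $\psi$ to a fresh unary $\pQ$. This preserves finite satisfiability and rules out one-element models, which repairs the argument.
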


\noindent The proof is short and elementary: we reprove the main result of
Zeume and Harwath with the optimal bound and answer their open question by a
simple reduction to finite satisfiability of~$\FOt$ with a single linear
order and~its~successor.
For elements $\ela, \elb$ with $\pS(\ela, \elb)$, let $\swapped{\lesseq}$ be $\lesseq$ with $\ela$
and~$\elb$ swapped, and let $\swapped{\pS}$ be its~successor.

\begin{observation}\label{obs:adjacent-swap}
  A sentence $\varphi$ is order-successor-invariant if and only if
  $(\strA, \lesseq, \pS) \models \varphi$ implies
  $(\strA, \swapped{\lesseq}, \swapped{\pS}) \models \varphi$ for all
  structures $\strA$, orders~$\lesseq$,~and elements~$\ela, \elb$~satisfying~$\pS(\ela, \elb)$.
\end{observation}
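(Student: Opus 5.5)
The forward direction is immediate from the definition. Assume $\varphi$ is order-successor-invariant. Then the truth of $\varphi$ in $(\strA, \lesseq, \pS)$ does not depend on $\lesseq$. Since $\swapped{\lesseq}$ is again a linear order on~$\domA$ and $\swapped{\pS}$ is its successor, $(\strA, \lesseq, \pS) \models \varphi$ gives $(\strA, \swapped{\lesseq}, \swapped{\pS}) \models \varphi$. So the whole proof rests on the converse.

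For the converse, assume the swap condition and fix a structure $\strA$ with two linear orders $\lesseq$ and $\lesseq'$ on~$\domA$ such that $(\strA, \lesseq, \pS) \models \varphi$. We must show $(\strA, \lesseq', \pS') \models \varphi$. The plan is the bubble-sort fact: $\lesseq'$ is reached from $\lesseq$ by a finite sequence of swaps of $\pS$-adjacent elements. We identify a linear order with the enumeration $\ela_1, \ldots, \ela_n$ of~$\domA$ it induces, and define the inversion count relative to $\lesseq'$ as the number of pairs that are ordered differently by the two orders.

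If this count is positive, some pair that is adjacent in the current order is out of $\lesseq'$-order. To see this, note that if every adjacent pair agreed with~$\lesseq'$, then by transitivity the whole enumeration would agree with $\lesseq'$. Take such a pair with $\pS(\ela, \elb)$ and $\elb \less' \ela$, and swap it to obtain $\swapped{\lesseq}$. This swap changes the relative order of exactly the pair $\set{\ela, \elb}$, because every other element stays on the same side of both $\ela$ and~$\elb$. Hence the inversion count drops by exactly one.

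Induction on the inversion count, applying the hypothesis once per swap, then transports $\varphi$ along the chain of orders $\lesseq = \lesseq^{(0)}, \lesseq^{(1)}, \ldots, \lesseq^{(k)} = \lesseq'$. Each successor relation $\pS^{(j)}$ is determined by its order $\lesseq^{(j)}$, so the expansion used at every step is exactly the one required.

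None of these steps is hard. The only point needing care is that the statement is a one-directional implication: order-successor-invariance asks for an equivalence, that is, truth must be independent of the order. This is handled by symmetry of the argument. Since $\lesseq$ and $\lesseq'$ are arbitrary, the chain argument also gives $(\strA, \lesseq', \pS') \models \varphi$ implies $(\strA, \lesseq, \pS) \models \varphi$. Alternatively, one can note that swapping the same two elements back is again an adjacent swap, since they are $\swapped{\pS}$-adjacent with $\swapped{\pS}(\elb, \ela)$. Either way truth is independent of the chosen order. The order-invariance case is the special case in which $\pS$ does not occur in~$\varphi$.
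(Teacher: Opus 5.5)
Your proof is correct and takes the same route as the paper: its entire justification is the remark that bubble sort turns any linear order into any other by swaps of neighbours. Your induction on the inversion count, together with the symmetry argument for independence of the order, just spells out the details that the paper leaves implicit.
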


\noindent Indeed, bubble sort turns any linear order into any other by swaps
of~neighbours.
Such swaps are local: $\swapped{\lesseq}$ differs from~$\lesseq$ on the
pair~$\ela, \elb$ alone, and, writing $\ela^{-}$ for the predecessor of~$\ela$
and $\elb^{+}$ for the successor of~$\elb$ (if they exist), $\swapped{\pS}$
replaces the pairs $(\ela^{-}, \ela)$, $(\ela, \elb)$, $(\elb, \elb^{+})$
of~$\pS$ by $(\ela^{-}, \elb)$, $(\elb, \ela)$, $(\ela, \elb^{+})$.
Naming the elements $\ela, \elb$ by fresh constants $\pc, \pd$,
both relations become definable with the following quantifier-free~formul{\ae}:
\begin{align*}
  \hspace{-1.5em}\thle(\varx, \vary) & \deff
    \big(\varx \lesseq \vary \land \neg(\varx = \pc \land \vary = \pd)\big)
    \lor (\varx = \pd \land \vary = \pc), \\
  \hspace{-1.5em}\thS(\varx, \vary) & \deff
    \big(\pS(\varx, \vary) \land \varx, \vary \notin \set{\pc, \pd}\big)
    \lor \big(\pS(\varx, \pc) \land \vary = \pd\big) \lor (\varx = \pd \land \vary = \pc)
    \lor \big(\varx = \pc \land \pS(\pd, \vary)\big),
\end{align*}
where $\varx, \vary \notin \set{\pc, \pd}$ abbreviates four disequalities.
Replacing in~$\varphi$ every atom $t \lesseq t'$ by~$\thle(t, t')$ and every
$\pS(t, t')$ by~$\thS(t, t')$ yields an $\FOt$-sentence
$\varphi[{\lesseq}{/}\thle, \pS{/}\thS]$ of size~$O(|\varphi|)$. We prove:

\begin{proof}[Proof of Theorem~\ref{thm:orderinv}]
Let $\varphi \in \FOt[\set{\lesseq, \pS}]$ and let $\pc, \pd$ be fresh
constants.
In every ordered structure with $\pS(\pc, \pd)$, $\thle$ and~$\thS$ define
$\lesseq^{\pc\pd}$ and~$\pS^{\pc\pd}$.
So, by Observation~\ref{obs:adjacent-swap}, $\varphi$ is
order-successor-invariant if and only if
$\psi_{\varphi} \deff \pS(\pc, \pd) \land \varphi \land
\neg\varphi[{\lesseq}{/}\thle, \pS{/}\thS]$ is not finitely satisfiable.
As $\psi_{\varphi}$ has polynomial size and finite satisfiability of
$\FOt[\set{\lesseq, \pS}]$ is in~$\NExpTime$
(Theorem~\ref{thm:one-order-one-succ} and the observation closing
Section~\ref{sec:one-order-one-succ}, or~\cite[Thm.~3]{ZeumeH16}), the upper
bound follows.
For just order-invariance $\pS$ is unnecessary: replace $\pS(\pc, \pd)$
in~$\psi_{\varphi}$ by $\pc \less \pd \land \neg\exists\varx\,(\pc \less \varx \land
\varx \less \pd)$.
Hardness follows from that of order-invariance~\cite[Thm.~3.2]{BednarczykG25} on $\pS$-free~sentences.
\end{proof}

\noindent Nothing here is specific to two variables: the argument works for every logic with
equality and constants that is closed under Boolean connectives and under substituting
quantifier-free formul{\ae} for atoms. For instance, with~\cite[Thm.~6.20]{CharatonikW16} it
yields decidability of both kinds of invariance for the two-variable logic with
counting~$\Logic{C}^2$, as asked in~\cite[\S6]{BednarczykG25}.

\section{Configuration Trees: An $\ExpSpace$ Lower Bound}
\label{sec:lower-bound}

\noindent The complexity of finite satisfiability
for~$\FOt[\set{\lesseq_1, \lesseq_2}]$ has long been open.
Without auxiliary binary symbols, Schwentick and Zeume placed it
in~$\ExpSpace$~\cite[Thm.~3.1]{SchwentickZ12} and proved hardness for a
linear order and a total preorder~\cite[Thm.~4.1]{SchwentickZ12}.
For two linear orders, only $\NExpTime$-hardness from plain $\FOt$ was~known.
The following theorem closes this~gap. Unlike the setting of data words~\cite[\S5]{SchwentickZ12},
unary predicates may hold simultaneously~here.

\begin{theorem}\label{thm:expspace-hardness}
  The finite satisfiability problem for $\FOt[\set{\lesseq_1, \lesseq_2}]$ is
  $\ExpSpace$-hard, already over signatures without auxiliary non-unary~predicates and constant symbols.
\end{theorem}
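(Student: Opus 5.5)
We reduce from the word problem of alternating Turing machines running in exponential time, which is $\ExpSpace$-complete~\cite{ChandraKS81}, and we follow the route announced in the introduction. Benaim~\etal~\cite[Thm.~3.8]{BenaimBCKLMW16}, building on Kiero\'{n}ski~\cite[Thm.~5]{Kieronski06}, give for every such machine~$M$ and input~$w$ a polynomial-size $\FOt$-sentence $\varphi_{M,w}$ over finite trees. It uses only unary predicates besides the descendant relation~$\desc$, and it is satisfiable over finite trees if and only if $M$ accepts~$w$. In this encoding a tree node is a tape cell, unary predicates store an exponential counter for the cell address together with the symbol and state, configurations are chains of cells, and the universal branching of accepting runs is the branching of the tree. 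We treat this construction as a black box, with the minor gap fixed in Appendix~\ref{app:runs}. The reduction then replaces every atom $\varx \desc \vary$ by
\[
\delta(\varx,\vary) \deff \varx \less_1 \vary \land \vary \less_2 \varx ,
\]
that is, $\vary$ comes after $\varx$ in pre-order and before it in post-order. It also conjoins whatever auxiliary axioms the black box needs.

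\textbf{Completeness.}\, If $M$ accepts~$w$, take a tree model~$\strT$ of~$\varphi_{M,w}$. Let $\lesseq_1$ be its pre-order and $\lesseq_2$ its post-order. By the classical fact going back to Dietz~\cite{Dietz82}, $\delta$ defines exactly the descendant relation of~$\strT$, so the translated sentence holds in this two-dimensional structure. This direction is routine.

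\textbf{Soundness.}\, This is the main obstacle. A model with two arbitrary linear orders need not arise from a tree. What we know for free is that $\delta$ is always a strict partial order: it is irreflexive and transitive, since both conjuncts are transitive. It need not be a forest order, because two $\delta$-predecessors of one element may be $\delta$-incomparable. So we must show that Kiero\'{n}ski's encoding survives in every model where $\delta$ is merely transitive, and extract an accepting run from any such model.

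We plan to do this by re-reading the encoding and checking that every property it derives about the intended structure uses only two facts. The first is transitivity of~$\delta$. The second is the existence of $\delta$-successors carrying prescribed labels, namely next cells, next configurations, and the two universal branches. The construction never uses uniqueness of the parent or linearity of the ancestors.

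Concretely, starting from an element labelled as the first cell of the initial configuration, we build a run tree by unfolding. At each node we choose, via the $\forall\exists$ conjuncts, a $\delta$-later element that realises the next cell or the successor configuration. The universally quantified consistency constraints relate a cell to all of its $\delta$-descendants that carry the same address counter. These constraints enforce the correct transition on every unfolded path, and transitivity is exactly what makes "all descendants" cover the chosen chain.

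We expect the delicate point to be interference, in the non-tree setting, between elements that are $\delta$-comparable but not on the chosen path. We will try to rule this out by strengthening the constraints so that they only compare cells within a bounded counter window. Since these constraints are universal, they hold in any $\delta$-transitive model, and they keep applying correctly after the unfolding.

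Finally, the unfolded run must be finite and accepting. This follows because the time counter is bounded exponentially and strictly increases along~$\delta$. The translated sentence has polynomial size and uses only unary auxiliary predicates, which yields $\ExpSpace$-hardness.
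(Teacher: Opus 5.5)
Your route is the paper's. You replace $\desc$ by $\varx \less_1 \vary \land \vary \less_2 \varx$, get completeness from the pre- and post-order traversals, and get soundness by unfolding a model in which this relation is merely transitive. Your remark that transitivity makes the constraints cover the chosen chain is exactly where the paper uses it. But soundness, which you rightly call the main obstacle, remains a plan, and its one concrete step targets the wrong thing.

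Interference from $\delta$-comparable elements off your path is harmless. No ``bounded counter window'' could exclude them anyway, since they may carry the very depth and position counters of cells on the path; the guards $\fNext$ and $\fDown$ already are such windows. The missing idea is syntactic: apart from the tree-shape axioms~$\phitree$, the run encoding is universal, equality-free, and uses~$\desc$ only negatively, \ie only in antecedents. Take a map~$\homoh$ from the configuration tree into the model that is faithful on unary predicates and preserves~$\desc$; it is built from the $\forall\exists$ conjuncts of~$\phitree$ and preserves~$\desc$ by transitivity. Then every constraint instance the tree needs is an instance at the images, which holds in the model. Extra $\delta$-pairs only create instances the tree never inspects. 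Pulling the auxiliary predicates back along~$\homoh$ gives an expansion of the configuration tree satisfying the run formula. Correctness over configuration trees then finishes the argument (Lemmas~\ref{lem:tree}, \ref{lem:pullback} and~\ref{lem:runs-tree}).

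What could genuinely break soundness is precisely what this criterion rules out, and your plan does not check for it. Consider a positive occurrence of~$\desc$ outside~$\phitree$, say an existential ``the cell receiving the head exists and carries the new state''. The model may witness it by an element your unfolding does not pick. The unfolded child configuration is then left without a head, after which no transition constraint applies. The paper instead expresses this existence by the counter condition $\pC(\vary) \neq 2^{\rmN}{-}1$ (resp.\ $\pC(\vary) \neq 0$) plus a universal $\forall\varx\,(\fNext(\vary,\varx) \to \cdots)$, and lets the tree supply the cell. Equality fails to transfer because~$\homoh$ need not be injective; \eg cousin configurations may share their images. So instead of strengthening constraints, you must verify this shape of the encoding. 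The black-box statement ``satisfiable over finite trees iff $M$ accepts~$w$'' does not provide it.

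A smaller inaccuracy: the depth counter does not increase strictly along~$\delta$, since it is constant within a configuration. The unfolding is finite because it is indexed by the finite configuration tree: $\phitree$ demands no witnesses beyond the last cell of depth~$2^{\rmN}{-}1$.
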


\noindent \textbf{The reduction.}\, It suffices to encode accepting
runs of alternating Turing machines working in exponential time~\cite[Cor.~3.6]{ChandraKS81}.
We lay them out in \emph{configuration trees}, which carry a descendant
relation~$\desc$, and axiomatise them by Kiero\'{n}ski's
construction~\cite[Thm.~5]{Kieronski06}, used as a black box
(Lemma~\ref{lem:runs}; Appendix~\ref{app:runs} fixes a minor gap); Benaim~\etal\ used it in the same way to show
$\ExpSpace$-hardness over finite trees~\cite[Thm.~3.8]{BenaimBCKLMW16}.
Two facts make it work here.
First, $\desc$ is definable by two linear orders, the pre-order and the
post-order traversals of the tree~\cite{Dietz82}~(Lemma~\ref{lem:desc-orders}).
Second, two linear orders need not form a tree; the
encoding nevertheless survives in models in which the defined relation is merely
transitive~(Lemma~\ref{lem:runs}(iii)).

\begin{figure}[H]
\centering
\tikzset{
  ctree/.style   ={x=6mm, y=2.8mm},
  ct-cell/.style ={minirond, fill=bleu!22, draw=bleu!80!black,
                   minimum height=1.9mm},
  ct-conf/.style ={fill=jaune!22, draw=none, rounded corners=1.5pt},
  ct-in/.style   ={draw=black!55, line width=0.5pt, line cap=round},
  ct-br/.style   ={draw=black!70, line width=0.6pt, line cap=round},
  ct-dots/.style ={draw=black!55, line width=0.5pt, line cap=round,
                   dash pattern=on 0.6pt off 1.2pt},
  ct-lab/.style  ={font=\scriptsize, inner sep=1.5pt},
  ct-dep/.style  ={font=\scriptsize, text=black!60, inner sep=1.5pt}}
\newcommand{\ctconf}[3]{%
  \fill[ct-conf] (#2-0.42,#3+0.8) rectangle (#2+0.42,#3-4.8);
  \node[ct-cell] (#1-0) at (#2,#3)   {};
  \node[ct-cell] (#1-1) at (#2,#3-1) {};
  \node[ct-cell] (#1-2) at (#2,#3-3) {};
  \node[ct-cell] (#1-3) at (#2,#3-4) {};
  \draw[ct-in]   (#1-0) -- (#1-1);
  \draw[ct-dots] (#1-1) -- (#1-2);
  \draw[ct-in]   (#1-2) -- (#1-3);}
\begin{tikzpicture}[ctree]
  \ctconf{r}{0}{0}
  \ctconf{a}{-2.2}{-6.5}   \ctconf{b}{2.2}{-6.5}
  \ctconf{aa}{-3.3}{-13}   \ctconf{ab}{-1.1}{-13}
  \ctconf{ba}{1.1}{-13}    \ctconf{bb}{3.3}{-13}
  \foreach \p/\ch in {r/a, r/b, a/aa, a/ab, b/ba, b/bb}
    \draw[ct-br] (\p-3) -- (\ch-0);
  \foreach \ch in {aa, ab, ba, bb} {
    \draw[ct-dots] (\ch-3) -- ++(-0.35,-1.6);
    \draw[ct-dots] (\ch-3) -- ++( 0.35,-1.6);
  }
  \foreach \x/\y in {-2.2/-6.5, -3.3/-13, 1.1/-13}
    \node[ct-lab, anchor=east] at (\x-0.42,\y-2) {$\pL$};
  \node[ct-lab, anchor=east] at (-0.45, 0) {$\pC{=}0$};
  \node[ct-lab, anchor=east] at (-0.45,-1) {$\pC{=}1$};
  \node[ct-lab, anchor=east] at (-0.45,-4) {$\pC{=}2^{\rmN}{-}1$};
  \node[ct-dep, anchor=east] at (-4.6, -2)   {$\pD{=}0$};
  \node[ct-dep, anchor=east] at (-4.6, -8.5) {$\pD{=}1$};
  \node[ct-dep, anchor=east] at (-4.6,-15)   {$\pD{=}2$};
  \node[ct-dep, anchor=east] at (-4.6,-18.6) {$\vdots$};
\end{tikzpicture}
\caption{The $\rmN$-configuration tree up to depth~$2$: columns are configurations of $2^{\rmN}$~cells, lines form the Hasse diagram of~$\desc$ (read downwards), and $\pL$ marks left~configurations.}
\label{fig:configuration-tree}
\end{figure}

\noindent \textbf{Configuration trees.}\, Fix~$\rmN$.
The $\rmN$-configuration tree (Figure~\ref{fig:configuration-tree}) arises
from the full binary tree of height~$2^{\rmN}{-}1$ by replacing each node~$w$
with a \emph{configuration} of $2^{\rmN}$~\emph{cells}~$ww'$, where
$w' \in \set{0,1}^{\rmN}$.
The cell~$ww'$ has \emph{depth}~$|w|$ and \emph{position}~$p$, the number
written~$w'$ in binary; it is the \emph{first} cell of its configuration
if~$p = 0$, the \emph{last} one if~$p = 2^{\rmN}{-}1$, and, if~$p > 0$, the
\emph{next} cell after the one of position~$p{-}1$.
The configurations $w0$ and~$w1$ are the left and the right \emph{child}
of~$w$, and the \emph{root} configuration is the one with~$w$~empty.

\begin{definition}\label{def:configuration-tree}
  Let $\sigenc \deff \set{\pD_0, \ldots, \pD_{\rmN{-}1}, \pC_0, \ldots,
  \pC_{\rmN{-}1}, \pL, \desc}$, with~$\desc$ binary and the rest
  unary.
  The \emph{$\rmN$-configuration tree} is the $\sigenc$-structure~$\strA$
  with domain~$\domA$ of all $ww' \in \set{0,1}^*$ with $|w| < 2^{\rmN}$
  and~$|w'| = \rmN$, where, numbering bits from~$0$, most significant~first:
  \begin{itemize}\itemsep0em
    \item $ww' \in \pD_i^{\strA}$ if and only if the $i$-th bit of the depth
          of~$ww'$ is~$1$;
    \item $ww' \in \pC_i^{\strA}$ if and only if the $i$-th bit of the
          position of~$ww'$ is~$1$;
    \item $ww' \in \pL^{\strA}$ if and only if $w$ ends with~$0$;
    \item $vv' \desc ww'$ holds if and only if $v$ is a proper prefix of~$w$,
          or $v = w$ and~$v' \lesslex w'$.
  \end{itemize}
\end{definition}

\vspace{0.25em}
\noindent \textbf{Defining configuration trees.}\, Comparisons of the
depths~$\pD(\varx)$ and positions~$\pC(\varx)$ encoded at~$\varx$, such as
$\pD(\varx) = \ell$ or~$\pC(\vary) = \pC(\varx){+}1$, are expressible by
quantifier-free formul{\ae} of size polynomial in~$\rmN$; \eg
$\pD(\varx) = \ell$ is the conjunction of~$\pD_i(\varx)$ for the bits~$1$
of~$\ell$ and~$\neg\pD_i(\varx)$ for its~bits~$0$.
Adapting Kiero\'{n}ski's axioms~\cite[$\Phi_1$--$\Phi_3$, p.~1672]{Kieronski06},
we let $\phitree$ be the conjunction~of
\begin{align*}
  & \exists\varx\, \big( \pD(\varx) = 0 \land \pC(\varx) = 0 \land
    \neg\pL(\varx) \big), \\
  & \forall\varx\, \big( \pC(\varx) \neq 2^{\rmN}{-}1 \to \exists\vary\,
    ( \varx \desc \vary \land \pD(\vary) = \pD(\varx) \land
    \pC(\vary) = \pC(\varx){+}1 \land (\pL(\varx) \iff \pL(\vary)) ) \big), \\
  & \forall\varx\, \big( (\pC(\varx) = 2^{\rmN}{-}1 \land
    \pD(\varx) \neq 2^{\rmN}{-}1) \\
  & \qquad\qquad \to \textstyle\bigwedge_{b \in \set{0,1}}
    \exists\vary\, ( \varx \desc \vary \land \pD(\vary) = \pD(\varx){+}1 \land
    \pC(\vary) = 0 \land \mu_b(\vary) ) \big),
\end{align*}
where $\mu_0 \deff \pL$ and~$\mu_1 \deff \neg\pL$.
The conjuncts say that the first cell of the root configuration exists, that
every cell but the last of its configuration has a next cell, and that below
the last cell of every configuration of depth less than~$2^{\rmN}{-}1$ lie the
first cells of a left and of a right~child.
Clearly, $\strA \models \phitree$.
Models of~$\phitree$ need not be configuration trees, as witnesses may
coincide, but those interpreting~$\desc$ transitively contain a homomorphic
image of~$\strA$.
This suffices to encode runs in \emph{expansions} of~$\strA$, \ie in
structures obtained from~$\strA$ by interpreting further unary predicates.

\begin{lemma}[after~\cite{Kieronski06}]\label{lem:runs}
  Let $\autM$ be an alternating Turing machine working in time~$2^{n^k}$ on
  inputs of length~$n$.
  There are a constant~$c$, a set~$\sigaux$ of unary predicates, and a
  polynomial-time procedure computing, from an input~$u$ of length~$n$, an
  $\FOt$-sentence~$\varphi_{\autM,u}$ over~$\sigenc \cup \sigaux$ such that,
  for~$\rmN \deff n^k{+}c$, these are equivalent:
  (i)~$\autM$ accepts~$u$;
  (ii)~some expansion of~$\strA$ by~$\sigaux$
  satisfies~$\phitree \land \varphi_{\autM,u}$;
  (iii)~$\phitree \land \varphi_{\autM,u}$ has a finite model in
  which~$\desc$ is~transitive.
\end{lemma}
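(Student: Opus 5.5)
We follow Kieroński's reduction and prove the cycle (i)$\Rightarrow$(ii)$\Rightarrow$(iii)$\Rightarrow$(i).

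\textbf{Construction.} Let $\sigaux$ contain unary predicates for the tape symbol of a cell, for whether the head is on it, and for the control state stored at a cell (say, at every cell of the configuration). Take $c$ large enough that the $2^{n^k}$ tape cells plus a constant number of bookkeeping positions fit into $2^{\rmN}$ cells per configuration. Since the machine runs in time $2^{n^k}$, the depth $2^{\rmN}-1$ suffices. Fix the convention that, at a universal state, the left and right child configurations follow the two transitions, and at an existential state both children follow the same chosen transition; this keeps every node binary, matching~$\strA$. The sentence $\varphi_{\autM,u}$ is the conjunction of the following.
\begin{itemize}\itemsep0em
\item The root configuration carries the initial configuration on~$u$: cells of positions $<n$ are pinned one conjunct per position, and the others are blank.
\item Each cell carries exactly one symbol, and each configuration exactly one head position.
\item The state is constant within a configuration, using $\varx \desc \vary$ with equal depth and the relation $\pC(\vary)=\pC(\varx){+}1$.
\item Every configuration at the final depth, or in a halting state, is accepting, and no rejecting state occurs.
\item The transition constraint: for all $\varx, \vary$, if $\varx \desc \vary$, $\pD(\vary)=\pD(\varx){+}1$ and $\pC(\vary)=\pC(\varx)$, then the content of~$\vary$ is determined by the content of~$\varx$, by the head and state information, by $\pL(\vary)$, and by the transition chosen at the parent. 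For that choice we store at each cell, as extra unary predicates, the symbols of the neighbouring cells, copied along the ``next'' witnesses so that the 3-cell window is local, plus the transition chosen for each child.
\end{itemize}
All of this is polynomial in~$n$ and uses only two variables.

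\textbf{(i)$\Rightarrow$(ii).} Take an accepting run tree, pad it to a full binary tree of depth $2^{\rmN}-1$ (halted configurations repeat themselves), and label~$\strA$ accordingly. Checking the axioms is routine.

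\textbf{(ii)$\Rightarrow$(iii).} The structure $\strA$ is finite, and $\desc$ in~$\strA$ is transitive by definition.

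\textbf{(iii)$\Rightarrow$(i).} This is the main step and the only obstacle. Given a finite model~$\strB$ with $\desc$ transitive, build a homomorphism $\homoh\colon \strA \to \strB$ inductively along the lexicographic order on $\domA$, which is exactly $\desc$ linearised:
\begin{itemize}\itemsep0em
\item the root's first cell goes to the witness of the first axiom;
\item the next cell goes to the ``next'' witness of the image;
\item the first cell of the child $wb$ goes to the $\mu_b$-witness below the image of the last cell of~$w$.
\end{itemize}
By induction, $\homoh$ preserves $\pD$, $\pC$ and~$\pL$, and maps each generating edge into~$\desc$. Transitivity then gives $\homoh(a) \desc \homoh(b)$ whenever $a \desc b$ in~$\strA$, and in particular the corresponding cells of parent and child configurations are related. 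This is precisely where the gap lies: without transitivity the parent cell at position~$p$ and the child cell at position~$p$ would not be $\desc$-related.

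Pull back the $\sigaux$-labels along~$\homoh$. Every constraint of $\varphi_{\autM,u}$ is universal over $\desc$-related pairs with depth and position conditions, so it holds at the preimage pair. Hence the pulled-back expansion of~$\strA$ satisfies the constraints, and reading off the root-to-leaf paths, choosing at existential nodes the branch that the recorded transition dictates, yields an accepting computation tree. We would double-check that coinciding images cause no inconsistency; this is fine because the labels are pulled back, so each cell of~$\strA$ obtains a well-defined label even if $\homoh$ is not injective.
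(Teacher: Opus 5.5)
Your proof is correct in outline and follows the paper's architecture. You get (i)$\Rightarrow$(ii) by labelling~$\strA$ and (ii)$\Rightarrow$(iii) trivially. For (iii)$\Rightarrow$(i) you build a tree homomorphism~$\homoh$ from the witnesses of~$\phitree$ and transitivity, pull back the auxiliary predicates, and read off an accepting computation; these are the paper's Lemmas~\ref{lem:tree}, \ref{lem:pullback} and~\ref{lem:runs-tree}.

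The genuine difference is the run encoding. You use the classical three-cell window, with neighbour contents, the state at every cell, and a recorded transition choice followed by both children of an existential configuration. The paper instead keeps Kiero\'{n}ski's head-centred formul{\ae}, repaired as~$\psi_t$ by anchoring them at the cell under the head and demanding that the cell receiving the head exist. Your window rule constrains every child cell through~$\fDown$ from the parent cell of the same position, which always exists in~$\strA$. So it is naturally immune to the vacuity gap the paper found in~\cite{Kieronski06}. The price is that you need neighbour predicates, which the paper avoids. You must also fix the neighbour fields of the cells of positions~$0$ and~$2^{\rmN}{-}1$ explicitly, since nothing else constrains them and they could inject a spurious head.

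Two details to tighten. First, the pullback needs the sentence to be equality-free and universal, with~$\desc$ occurring only in antecedents. So the \enquote{at least one head} half of your head axiom must not be stated existentially; it follows from the root axiom and the window rule anyway. Second, the lexicographic order on the strings of~$\domA$ does not linearise~$\desc$: for instance, $1^{\rmN} \desc 0^{\rmN+1}$, yet $0^{\rmN+1} \lesslex 1^{\rmN}$. Define~$\homoh$ by increasing depth and then position, as the paper does.
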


\vspace{0.25em}
\noindent \textbf{Descendants by two linear orders.}\, It remains to
define~$\desc$ by two linear orders, which the quantifier-free
formula~$\varx \less_1 \vary \land \vary \less_2 \varx$~does.

\begin{lemma}\label{lem:desc-orders}
  In every structure with linear orders~$\lesseq_1$ and~$\lesseq_2$, the
  formula $\varx \less_1 \vary \land \vary \less_2 \varx$ defines a transitive
  relation.
  Moreover, the domain of the $\rmN$-configuration tree~$\strA$ admits linear
  orders~$\lesseq_1$ and~$\lesseq_2$ such that $\ela \desc \elb$ holds
  in~$\strA$ if and only if $\ela \less_1 \elb$ and~$\elb \less_2 \ela$.
\end{lemma}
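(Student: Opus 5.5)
The statement has two parts, and I would handle them separately.

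\textbf{Transitivity.} Suppose $\ela \less_1 \elb$, $\elb \less_2 \ela$, $\elb \less_1 \elc$ and $\elc \less_2 \elb$. Transitivity of the strict orders $\less_1$ and $\less_2$ gives $\ela \less_1 \elc$ and $\elc \less_2 \ela$, so the defined relation is transitive. This part is immediate.

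\textbf{Constructing the orders.} I would take $\lesseq_1$ to be the pre-order traversal and $\lesseq_2$ the reverse of the post-order traversal of a suitable rooted ordered tree~$\strT$ whose nodes are the cells of~$\strA$.

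First, recognise that $\desc$ is exactly the proper-ancestor relation of a tree. The cells of a configuration~$w$ form a chain $w0^{\rmN} \desc \cdots \desc w1^{\rmN}$ in lexicographic order. The last cell $w1^{\rmN}$ has two children: the first cells $w00^{\rmN}$ and $w10^{\rmN}$ of the child configurations. Call this tree~$\strT$, ordering children as left before right.

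Then check from Definition~\ref{def:configuration-tree} that $vv' \desc ww'$ holds iff $vv'$ is a proper ancestor of $ww'$ in~$\strT$. The two cases match directly:
\begin{itemize}
\item $v = w$ with $v' \lesslex w'$ means an earlier cell in the same chain;
\item $v$ a proper prefix of~$w$ means that every cell of~$v$ lies on the path from the root to the configuration~$w$. This holds because the chain of~$v$ ends in $v1^{\rmN}$, which is an ancestor of every cell of every descendant configuration.
\end{itemize}

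Next, recall the classical fact (Dietz) for a finite ordered tree. Node $x$ is a proper ancestor of $y$ iff $x$ precedes $y$ in pre-order and $x$ follows $y$ in post-order, i.e.\ $y$ precedes $x$ in post-order. I would prove it in two steps:
\begin{itemize}
\item If $x$ is a proper ancestor of~$y$, then pre-order visits $x$ before its subtree and post-order visits it after its subtree.
\item Otherwise, either $y$ is an ancestor of~$x$, which inverts both comparisons, or $x = y$, or $x$ and $y$ lie in different subtrees of their lowest common ancestor. In the last case both traversals handle one whole subtree before the other, so they order $x$ and $y$ the same way and the conjunction fails.
\end{itemize}

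\textbf{Assembling.} Setting $\lesseq_1 \deff$ pre-order and $\lesseq_2 \deff$ post-order gives $\ela \desc \elb$ iff $\ela \less_1 \elb$ and $\elb \less_2 \ela$, which is exactly the formula $\varx \less_1 \vary \land \vary \less_2 \varx$.

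\textbf{Main obstacle.} The only real content is verifying that $\desc$ coincides with the ancestor relation of~$\strT$. In particular one must check the prefix case: when $v$ is a proper prefix of $w$, all $2^{\rmN}$ cells of~$v$, not only its last one, must be ancestors of~$ww'$. This follows from the chain structure inside a configuration. The rest is routine.
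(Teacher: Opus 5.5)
Your proposal is correct and follows the paper's proof. It uses the same tree (each configuration a chain, its last cell branching to the first cells of both children), pre-order as $\lesseq_1$, post-order as $\lesseq_2$, and the same lowest-common-ancestor argument for incomparable cells. One correction: delete the word ``reverse'' from your opening description of $\lesseq_2$. With reverse post-order, a proper ancestor would precede its descendants in both orders and the formula would fail, whereas your assembly step, which uses post-order itself, is the right choice.
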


\begin{proof}
  The formula defines the intersection of~$\less_1$ with the converse
  of~$\less_2$, which is transitive.
  For the second part, observe that~$\desc^{\strA}$ is the proper-ancestor
  relation of the rooted tree on~$\domA$ in which the root is~$0^{\rmN}$, the
  only child of a cell other than the last of its configuration is its next
  cell, and the children of the last cell~$w1^{\rmN}$, for
  $|w| < 2^{\rmN}{-}1$, are $w0\,0^{\rmN}$ and~$w1\,0^{\rmN}$.
  Let $\lesseq_1$ and~$\lesseq_2$ be the pre-order and the post-order
  traversals of this tree, both visiting $w0\,0^{\rmN}$ before~$w1\,0^{\rmN}$.
  A proper ancestor of a cell precedes it in pre-order and follows it in
  post-order.
  Incomparable cells lie in the subtrees of distinct children of their lowest
  common ancestor, which both traversals visit one after the other, so
  $\lesseq_1$ and~$\lesseq_2$ order them alike.
  Hence $\ela \less_1 \elb$ and~$\elb \less_2 \ela$ hold if and only if~$\ela$
  is a proper ancestor of~$\elb$, \ie if and only if~$\ela \desc \elb$ holds.
\end{proof}

\begin{proof}[Proof of Theorem~\ref{thm:expspace-hardness}]
  As $\AExpTime = \ExpSpace$~\cite[Cor.~3.6]{ChandraKS81}, some alternating
  Turing machine~$\autM$ working in time~$2^{n^k}$ has an $\ExpSpace$-hard
  acceptance problem.
  Given an input~$u$, let $\psi_u$ arise from
  $\phitree \land \varphi_{\autM,u}$ (Lemma~\ref{lem:runs}) by replacing every
  atom $t \desc t'$, with $t, t' \in \set{\varx, \vary}$,
  by~$t \less_1 t' \land t' \less_2 t$.
  Then $\psi_u$ is an $\FOt[\set{\lesseq_1, \lesseq_2}]$-sentence whose other
  symbols are unary, and it is computable in polynomial time.
  If $\autM$ accepts~$u$, equipping the expansion of~$\strA$ from
  Lemma~\ref{lem:runs}(ii) with the orders of Lemma~\ref{lem:desc-orders}
  yields a finite ordered model of~$\psi_u$.
  Conversely, interpreting~$\desc$ by
  $\varx \less_1 \vary \land \vary \less_2 \varx$ in a finite ordered model
  of~$\psi_u$ yields a model as in Lemma~\ref{lem:runs}(iii), by
  Lemma~\ref{lem:desc-orders}; hence $\autM$ accepts~$u$.
\end{proof}

\section{Processing Lines: An $\ExpSpace$ Upper Bound}
\label{sec:upper-bound}

\newcommand{\wst}[2]{\mathrm{wst}^{#1}(#2)}
\newcommand{\pro}[2]{\mathrm{pl}^{#1}(#2)}
\newcommand{\bx}{\mathbf{x}}
\newcommand{\opp}[1]{\bar{#1}}

\noindent We now add the successor of one of the two linear orders.
Without auxiliary binary symbols, the resulting problem lies in~$\ExpSpace$~\cite[Thm.~3.1]{SchwentickZ12}, whereas with them only a $\TwoNExpTime$ upper bound was known~\cite[Thm.~7]{ZeumeH16}.
Zeume and Harwath conjectured that $\ExpSpace$ suffices~\cite[\S4]{ZeumeH16}, and we confirm~this.

\begin{theorem}\label{thm:expspace-easyness}
  The finite satisfiability problem for $\FOt[\set{\lesseq_1, \pS_1, \lesseq_2}]$
  is in~$\ExpSpace$, even over signatures with arbitrary auxiliary~symbols.
\end{theorem}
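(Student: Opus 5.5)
We would decide satisfiability by a left-to-right scan of a model along~$\lesseq_1$. By Fact~\ref{fact:scott} it suffices to handle a sentence~$\varphi$ in Scott normal form over $\rmO = \set{\lesseq_1, \pS_1, \lesseq_2}$; this costs only a polynomial blow-up and a guess, which $\ExpSpace$ absorbs. Here closeness is $\pS_1$-closeness. A $2$-type of a remote pair records its relative position in both orders, one of four \enquote{quadrants}, but no $\pS_1$-atom. Exactly as in the rebinding step of Lemma~\ref{lem:confine}, the $2$-type of a remote pair may be overwritten by any realised $2$-type with the same $1$-types and the same quadrant without breaking orderedness. This freedom is what makes auxiliary binary symbols harmless. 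The plan is to combine this freedom with the sweep-line idea of Schwentick and Zeume.

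\textbf{Step 1 (the processing line).} Fix a model and read its elements in $\lesseq_1$-order. After reading a prefix, we summarise it by the \emph{processing line}: the read elements listed in $\lesseq_2$-order, each labelled by its $1$-type and its \emph{witness state}. The witness state is the set of indices~$i$ whose demand is still unfulfilled, together with the $2$-types (hence quadrants) still acceptable for it. We also record the last read element separately, since it is the only one that can be $\pS_1$-close to the next element.

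A step of the scan does three things:
\begin{itemize}
\item it guesses a new element with a $1$-type and its $\lesseq_2$-slot in the line;
\item it guesses its $2$-type with the previous element;
\item it lets the new element serve remote pending demands of line elements lying in the right quadrant, and lets it pick its own leftward witnesses among line elements in the right quadrant.
\end{itemize}
By the overwriting freedom, each of these choices only needs to be consistent with $1$-types and quadrants. Any rightward demands the new element still has enter its witness state. The scan accepts when the line can be closed off with no pending demand. Soundness and completeness of this abstraction are routine once the line is a faithful summary.

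\textbf{Step 2 (pruning the line to exponential length).} Two reductions keep the line small.
\begin{itemize}
\item Elements with an empty witness state matter only as future leftward witnesses. Among them it suffices to keep, for each $1$-type, a royal court: the $m{+}2$ topmost and $m{+}2$ bottommost realisations, where the two spares cover $\pS_1$-close exclusions as in Lemma~\ref{lem:confine}. Any future element wanting a witness above or below it of that type can be rebound to a court member, because taking the extremal element only enlarges the set of admissible quadrants.
\item Consecutive line elements (ignoring court members) with identical labels can be merged into one block carrying a counter capped at some threshold $K$. A single new element may serve all members of a block at once, since its $2$-types with distinct old elements are chosen independently.
\end{itemize}
The line is then a word over an exponential alphabet with capped counters. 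If its length is bounded by $2^{\mathrm{poly}(|\varphi|)}$, the scan is a nondeterministic procedure using exponential space, and Savitch's theorem gives $\ExpSpace$.

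\textbf{Main obstacle.} The hard part is justifying that lines of exponential length suffice, \ie that the number of alternations between distinct labels along~$\lesseq_2$ never needs to exceed an exponential bound. The difficulty is that a later element inserted inside a block splits it, so naive merging is unsound. We would first try a surgery on a minimal model, in the spirit of Lemma~\ref{lem:shrink}. Suppose some line contains two non-court elements $\ela \less_2 \elb$ with the same label. Then we rebind every future demand served by an element between them, or by~$\elb$, to~$\ela$, and argue via an exchange argument, possibly using the ladder-counting trick~$(\clubsuit)$/$(\spadesuit)$, that the line-length can be bounded by a function of the number of labels. If this direct bound fails, the fallback is to guess the final $\lesseq_2$-positions of pending elements only relative to the exponentially many court members and pending blocks. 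That keeps the alphabet of the scan exponential, as in~\cite{SchwentickZ12}, with binary symbols absorbed by the overwriting freedom.
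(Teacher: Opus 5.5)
Your Step~1 abstraction is sound but has unboundedly many states. The whole difficulty lies in Step~2, which you explicitly leave open, so the proposal does not yet prove the theorem.

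\textbf{Why Step~2 does not give the bound.} Your royal court addresses only one direction. It bounds the processed elements needed as \emph{leftward} witnesses of future elements, and only among those with an empty witness state. It says nothing about processed elements that still have a pending demand towards the unread side. In a given model there can be linearly many of these, interleaved along~$\lesseq_2$ with unboundedly many label alternations. Block merging does not bound the number of runs. The surgery you sketch reroutes demands of \emph{future} elements onto processed ones, that is, again the leftward direction, so it cannot shrink the set of pending elements. The fallback is circular: the number of pending blocks is precisely the unbounded quantity. Moreover, your court members are processed elements, whereas the reference points that matter for pending demands are unread ones.

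\textbf{The missing idea: two-sided lines.} The paper makes the line two-sided and guesses the future. A line records not only processed elements but also unread ones: for each $1$-type, the $m{+}1$ extremal elements along the order without successor on \emph{both} sides of the sweep line, as an induced substructure with all its $2$-types. One then rewires, as in Lemma~\ref{lem:small-lines}. Suppose $\eld$ has a non-extremal witness~$\elv$ across the line. Then all $m{+}1$ extremal elements of $\elv$'s $1$-type on that side lie beyond~$\elv$, hence on the same side of~$\eld$ in both orders. One of them can take over $\tp{\strA}{\eld,\elv}$; this is exactly your overwriting freedom. Only the current element, its successor, and the witnesses of extremal elements resist this, and they are exponentially few. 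Together with their witnesses they form the bounded exceptional sets~$\domC$.

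\textbf{What this buys.} Every element's demands are checked once, when it is processed, against elements present in its own line. Unread elements guessed there are carried along until they are processed (rule~\ref{cond:persistence}). No pending demand, together with its vertical position, ever has to be remembered. Lemma~\ref{lem:lines-imply-sat} then glues the lines into a model. A pair that never shares a line gets the $2$-type of a pair involving an extremal element of the same $1$-type on the same sides. Once lines are pruned, this reconstruction is not routine either. Without an argument of this kind, the exponential bound your Step~2 needs does not follow.
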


\noindent Swapping the two orders, we work with $\FOt[\set{\lesseq_1, \lesseq_2, \pS_2}]$ instead, where the successor belongs to the vertical order~$\lesseq_2$.
We follow the plane-sweep approach of Schwentick and Zeume~\cite[\S3.3]{SchwentickZ12}: a horizontal line sweeps a model bottom-up, and an algorithm guesses, line by line, a description of what the line sees, checking consecutive lines only.
With auxiliary binary symbols around, such a description has to be a substructure of the model, as in the profiles of Zeume and Harwath~\cite[proof of Lem.~8]{ZeumeH16}, but it must also keep every witness pending across the line.
The crux is to keep these substructures exponentially small (Lemma~\ref{lem:small-lines}) and to glue them back into a model (Lemma~\ref{lem:lines-imply-sat}).
So let $\varphi$ be an $\FOt[\set{\lesseq_1, \lesseq_2, \pS_2}]$-sentence in Scott normal form~(Fact~\ref{fact:scott}), with universal conjunct~$\chi$ and existential conjuncts $\chi_1, \ldots, \chi_m$.

\tikzset{
  pl/.style      ={x=5mm, y=4mm},
  pl-pt/.style   ={circle, draw=black!70, fill=white, line width=0.45pt,
                   minimum size=4.1mm, inner sep=0pt},
  pl-min/.style  ={pl-pt, draw=bleu!85!black, fill=bleu!18},
  pl-max/.style  ={pl-pt, draw=orange!80!black, fill=orange!22},
  pl-ghost/.style={pl-pt, draw=black!35, fill=none, text=black!45,
                   dash pattern=on 1.2pt off 1pt},
  pl-below/.style={fill=black!5, draw=none},
  pl-cut/.style  ={draw=black!60, line width=0.6pt,
                   dash pattern=on 2.5pt off 1.5pt},
  pl-rel/.style  ={draw=rouge, line width=0.85pt, line cap=round,
                   -{Stealth[length=1.4mm,width=1.1mm]}},
  pl-lab/.style  ={font=\footnotesize, inner sep=1.5pt},
  pl-glab/.style ={pl-lab, text=black!45},
  pl-title/.style={font=\small, inner sep=1pt}}
\newcommand{\plframe}[1]{%
  \fill[pl-below] (0.5,0.5) rectangle (10.5,#1);
  \draw[black!15, line width=0.35pt, step=1] (0.5,0.5) grid (10.5,10.5);
  \draw[pl-cut] (0.5,#1) -- (10.5,#1);
  \draw[line width=0.8pt, -{Stealth[length=2mm,width=1.6mm]}]
        (0.25,0.25) -- (11.1,0.25)
        node[font=\footnotesize, anchor=west, inner sep=2pt] {$\lesseq_1$};
  \draw[line width=0.8pt, -{Stealth[length=2mm,width=1.6mm]}]
        (0.25,0.25) -- (0.25,11.1)
        node[font=\footnotesize, anchor=south, inner sep=2pt] {$\lesseq_2$};}

\begin{figure}[H]
\centering
\begin{tikzpicture}[pl]
  \plframe{4.5}
  \node[pl-title] at (5.5,11.7) {$\lambda_i$};
  \node[pl-max]   (l1)  at (10,1) {};
  \node[pl-min]   (lb)  at (8,2)  [label={[pl-lab]right:$\elb$}] {};
  \node[pl-min]   (l3)  at (7,3)  {};
  \node[pl-max]   (lai) at (9,4)  [label={[pl-lab]right:$\ela_i$}] {};
  \node[pl-max]   (lan) at (6,5)  [label={[pl-lab]left:$\ela_{i+1}$}] {};
  \node[pl-min]         at (1,6)  {};
  \node[pl-ghost]       at (4,7)  [label={[pl-glab]left:$\elc$}] {};
  \node[pl-max]         at (5,8)  {};
  \node[pl-pt]    (lw)  at (3,9)  [label={[pl-lab]left:$\elw$}] {};
  \node[pl-min]         at (2,10) {};
  \draw[pl-rel] (lai) to[out=95,in=-5,looseness=0.95] (lw);
  \draw[pl-rel] (lan) to[bend right=12] (l3);

  \begin{scope}[xshift=66mm]
  \plframe{5.5}
  \node[pl-title] at (5.5,11.7) {$\lambda_{i+1}$};
  \node[pl-max]   (r1)  at (10,1) {};
  \node[pl-ghost]       at (8,2)  [label={[pl-glab]right:$\elb$}] {};
  \node[pl-min]   (r3)  at (7,3)  {};
  \node[pl-max]   (rai) at (9,4)  [label={[pl-lab]right:$\ela_i$}] {};
  \node[pl-min]   (ran) at (6,5)  [label={[pl-lab]left:$\ela_{i+1}$}] {};
  \node[pl-min]         at (1,6)  {};
  \node[pl-max]         at (4,7)  [label={[pl-lab]left:$\elc$}] {};
  \node[pl-max]         at (5,8)  {};
  \node[pl-pt]    (rw)  at (3,9)  [label={[pl-lab]left:$\elw$}] {};
  \node[pl-min]         at (2,10) {};
  \draw[pl-rel] (rai) to[out=95,in=-5,looseness=0.95] (rw);
  \draw[pl-rel] (ran) to[bend right=12] (r3);
  \end{scope}
\end{tikzpicture}
\caption{Two consecutive lines for one demand~$\pR$ (red), $m = 1$, one $1$-type; $\pS_2$-edges omitted. Blue/orange: $\domB^{\delta,\min}$/$\domB^{\delta,\max}$; dashed: not in the line; shaded: below it. Crossing the line, $\ela_{i+1}$ evicts~$\elb$ from~$\domB^{\downarrow,\min}$ and makes~$\elc$ $\max$-extremal above; the non-extremal witness~$\elw$ of~$\ela_i$ is~kept.}
\label{fig:processing-lines}
\end{figure}

\vspace{-0.75em}

\vspace{0.25em}
\noindent \textbf{Processing lines.}\, Picture a horizontal line just above an element~$\ela$ of a model: the elements below it, $\ela$~included, count as \emph{processed}, and those above it as \emph{to be processed}.
A processing line records what the sweep has to remember at this point.
We define it in terms of~$\varphi$ alone, and extract one from every element of a model afterwards.
Given a finite structure~$\lambda$ with domain~$\domP$ and an element $\ela \in \domP$, we split~$\domP$ into $\domP^{\downarrow} \deff \set{\elb \in \domP \mid \lambda \models \elb \lesseq_2 \ela}$ and $\domP^{\uparrow} \deff \set{\elb \in \domP \mid \lambda \models \ela \less_2 \elb}$.
For $\rmX \subseteq \domP$, the sets $\min(\rmX)$ and~$\max(\rmX)$ consist of the $m{+}1$ leftmost and the $m{+}1$ rightmost elements of~$\rmX$, respectively, or all of~$\rmX$ if it is smaller; and $\rmX(\alpha) \deff \set{\elb \in \rmX \mid \tp{\lambda}{\elb} = \alpha}$ for $\alpha \in \tpone{\varphi}$.

\begin{definition}\label{def:processing-line}
  A \emph{processing line} is a finite structure~$\lambda$ over the signature of~$\varphi$, with domain~$\domP$, equipped with a \emph{distinguished element}~$\ela \in \domP$, with sets $\domB^{\delta,\eta} \subseteq \domP$ for $\delta \in \set{\downarrow, \uparrow}$ and $\eta \in \set{\min, \max}$, whose union we call $\domB$, and, for every $\elb \in \domP$, with a \emph{partial witness set} $\wst{\lambda}{\elb} \subseteq \domP \setminus \set{\elb}$ of at most~$m$ \emph{recorded witnesses} of~$\elb$, such that:
  \begin{namedlist}[(L1)]
    \item[\desclabel{\textup{(L1)}}{cond:validity}] \textbf{Local validity.}
      Both $\lesseq_1$ and~$\lesseq_2$ are linear orders, every realised $1$-type entails~$\chi(\varx, \varx)$, and every realised $2$-type entails~$\chi(\varx, \vary)$.
      Every $\pS_2$-edge points to the immediate successor in the local $\lesseq_2$-order, but elements need not have an $\pS_2$-successor or $\pS_2$-predecessor.
    \item[\desclabel{\textup{(L2)}}{cond:extremal}] \textbf{Extremal elements.}
      $\domB^{\delta,\eta}(\alpha) = \eta(\domP^{\delta}(\alpha))$ for all~$\alpha$, $\delta$ and~$\eta$ (read as an operator).
    \item[\desclabel{\textup{(L3)}}{cond:witnesses}] \textbf{Witnesses at the distinguished element.}
      At the distinguished element all demands are served: for every $t \leqslant m$, some $\elw \in \wst{\lambda}{\ela}$ satisfies $\lambda \models \chi_t(\ela, \elw)$.
  \end{namedlist}
\end{definition}

\noindent Intuitively, $\domP^{\downarrow}$ holds processed elements, whilst $\domP^{\uparrow}$ holds elements to be processed.
The \emph{extremal} sets $\domB^{\delta,\eta}$ hold the $m{+}1$ leftmost and rightmost elements of every $1$-type on either side of the line, much like the royal court of Section~\ref{sec:one-order-one-succ}. In the sequel we write $\domB^{\delta} \deff \domB^{\delta,\min} \cup \domB^{\delta,\max}$.

\vspace{0.25em}
\noindent \textbf{Lines of a model.}\, Let $\strA \models \varphi$.
A \emph{witnessing set} of $\ela \in \domA$ is a set $\rmW \subseteq \domA \setminus \set{\ela}$ such that, for every $t \leqslant m$, some $\elw \in \rmW$ satisfies $\strA \models \chi_t(\ela, \elw)$.
Without loss of generality, we assume that each $\elw \in \rmW$ satisfies some demand.
For every~$\ela$, we fix a witnessing set~$\wst{\strA}{\ela}$ of size at most~$m$, and call its elements \emph{the witnesses} of~$\ela$.
Now fix $\ela \in \domA$.
An element~$\elb$ is \emph{below} the line of~$\ela$ if $\elb \lesseq_2 \ela$, and \emph{above} it otherwise. Note that $\ela$ itself lies below its line, whilst its $\pS_2$-successor, if any, lies above it.
We collect the elements below and above the line in $\domA_{\ela}^{\downarrow}$ and~$\domA_{\ela}^{\uparrow}$.
For $\delta \in \{ \downarrow, \uparrow \}$ write $\domA_{\ela}^{\delta}(\alpha)$ for the elements of $1$-type~$\alpha$ in $\domA_{\ela}^{\delta}$, with types taken in~$\strA$.
For all~$\alpha$, $\delta$ and~$\eta \in \{\min, \max\}$, let $\domB_{\ela}^{\delta,\eta}(\alpha) \deff \eta(\domA_{\ela}^{\delta}(\alpha))$, with extrema taken in~$\strA$; let $\domB_{\ela}^{\delta,\eta}$ be their union over~$\alpha$, and $\domB_{\ela}$ the union over~$\delta$ and~$\eta$.
Write $\opp{\delta}$ for the direction opposite to~$\delta$.
For both~$\delta$, define the \emph{exceptional} sets
$\domC_{\ela}^{\delta,\times} \deff
\set{\elb \in \domA_{\ela}^{\opp{\delta}} \mid
\wst{\strA}{\elb} \cap (\domA_{\ela}^{\delta} \setminus \domB_{\ela}) \neq \emptyset}$
and
$\domC_{\ela}^{\delta,w} \deff
\bigl(\bigcup_{\elb \in \domC_{\ela}^{\delta,\times}} \wst{\strA}{\elb}\bigr)
\cap (\domA_{\ela}^{\delta} \setminus \domB_{\ela})$.
Let $\domC_{\ela}$ be the union of the four sets $\domC_{\ela}^{\delta,\times}$ and~$\domC_{\ela}^{\delta,w}$.
Informally, the elements in~$\domC_{\ela}^{\delta,\times}$ are those whose chosen witnesses across the line cannot all be found among the extrema in~$\domB_{\ela}$.
The role of~$\domC_{\ela}^{\delta,w}$ is to keep these exceptional witnesses in sight.
So $\domB_{\ela}^{\downarrow,\min}(\alpha)$ holds the $m{+}1$ leftmost elements of $1$-type~$\alpha$ below the line, and $\domC_{\ela}^{\downarrow,\times}$ holds the elements above the line that have a witness $\elw \notin \domB_{\ela}$ below it.
Let $\domP_{\ela}$ be the union of~$\set{\ela}$, its $\pS_2$-neighbours, $\wst{\strA}{\ela}$, $\domB_{\ela}$ and~$\domC_{\ela}$.
The \emph{processing line} $\pro{\strA}{\ela}$ of~$\ela$ is the substructure of~$\strA$ induced by $\domP_{\ela}$, equipped with the distinguished element~$\ela$, the sets $\domB_{\ela}^{\delta,\eta}$, and the partial witness sets $\wst{\pro{\strA}{\ela}}{\elb} \deff \wst{\strA}{\elb} \cap \domP_{\ela}$.
It is easy to verify that every element of a model of~$\varphi$ gives rise to a processing~line
(Appendix~\ref{app:processing-lines}, Lemma~\ref{lem:app-model-line}).
In general, the sets~$\domC_{\ela}$ may be as large as the model itself.
The next lemma makes all lines exponentially small by rewiring: if~$\eld$ has a nonextremal
witness~$\elv$ across the line, an extremal element of the same $1$-type lies on the same side
of~$\eld$ in both orders and can take over the $2$-type of~$(\eld, \elv)$; this fails only for
exponentially many~$\eld$.

\begin{lemma}\label{lem:small-lines}
  Let $\strA \models \varphi$.
  Then there is a structure $\strA^{*} \models \varphi$ over the same domain, with the same orders and $1$-types, together with witnessing sets for it, such that, for each $\ela \in \domA$, the domain of~$\pro{\strA^{*}}{\ela}$ has size at most~$2^{|\varphi|^{O(1)}}$.
\end{lemma}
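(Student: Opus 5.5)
The plan is to imitate the rebinding step of Lemma~\ref{lem:confine}, but to do it line by line. I keep the domain, $\lesseq_1$, $\lesseq_2$, $\pS_2$ and all $1$-types, and I only overwrite $2$-types of remote pairs, which contain no $\pS_2$-atom. Each overwritten pair gets a $2$-type already realised in~$\strA$, so \ref{cond:validity} and~$\chi$ are preserved.

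The key observation concerns an element~$\eld$ with a remote witness~$\elv$ above it, say with $\elv$ to the right of~$\eld$. Only lines of elements~$\ela$ with $\eld \lesseq_2 \ela \less_2 \elv$ separate the pair, and on every such line $\eld$ lies below and $\elv$ above. Let $\alpha$ be the $1$-type of~$\elv$ and $R$ the set of the $m{+}1$ rightmost elements of type~$\alpha$ strictly above~$\eld$.
\begin{itemize}
\item If $\elv \notin R$, then every element of~$R$ lies to the right of~$\elv$, hence to the right of~$\eld$ and above it. So $\tp{\strA}{\eld,\elv}$ may be copied onto~$(\eld,\elv')$ for any $\elv' \in R$ that is not the $\pS_2$-successor of~$\eld$.
\item Such an~$\elv'$ is extremal for every line it crosses. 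For $\eld \lesseq_2 \ela \less_2 \elv'$, the elements of type~$\alpha$ above~$\ela$ form a subset of those above~$\eld$, and they contain~$\elv'$. Hence $\elv' \in \max(\domA_\ela^{\uparrow}(\alpha)) = \domB_\ela^{\uparrow,\max}(\alpha)$.
\end{itemize}
The other three quadrants are symmetric: use leftmost instead of rightmost, and below instead of above.

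To define~$\strA^*$, I first split the remote witnesses of each~$\eld$ into classes by $1$-type and quadrant, exactly as in Lemma~\ref{lem:confine}. Within each class I then pick an injection into the corresponding extremal set, fixing elements that are already in it and avoiding the at most one $\pS_2$-neighbour on that side. The counting is that of the royal court. There are $m{+}1$ candidates, at most one of them is close to~$\eld$, and at most $m$ targets are needed. If the leftmost and rightmost candidate sets for the same~$\alpha$ overlap, then there are at most $2m{+}2$ such elements above~$\eld$. In that case each of them, $\elv$ included, is extremal for every line it crosses, and I leave the class alone. I then fix the new witnessing sets.

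The size bound would then follow by re-examining $\domC_\ela$. Take a remote witness $\elv \in \wst{\strA^*}{\eld}$ across the line of~$\ela$; by the above it lies in~$\domB_\ela$, so it does not put~$\eld$ into~$\domC_\ela^{\delta,\times}$. Close witnesses cross only the line of the lower of the two elements, so only $\pS_2$-neighbours of~$\ela$ and~$\ela$ itself can enter~$\domC_\ela^{\delta,\times}$ through them. Hence $|\domC_\ela| = O(m)$. Together with $|\domB_\ela| \leqslant 4(m{+}1)\rmN_1$, $|\wst{}{\ela}| \leqslant m$ and two $\pS_2$-neighbours, this gives $|\domP_\ela| = O(m \cdot \rmN_1) = 2^{|\varphi|^{O(1)}}$.

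The main obstacle is the one that Lemma~\ref{lem:confine} handled by sparing $\Wz \cup \Wo$. Overwriting $\tp{}{\eld,\elv'}$ may destroy a witness pair in the reverse direction: $\eld$ may be a fixed witness of~$\elv'$, or $\elv'$ may be simultaneously rebinding a demand onto~$\eld$. A global royal court does not help here, because the extremal sets depend on the line. My first attempt will be to perform all rebindings simultaneously and to give each unordered pair at most one ``owner''. Concretely, a pair is rewired only on behalf of its lower element, and only when the upper element is not already using it. In the remaining conflicts, the lower element~$\eld$ is itself extremal from the viewpoint of~$\elv'$ below~$\elv'$, so it is already in the relevant~$\domB_\ela$ anyway. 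Conflicted demands would then choose another target among the $m{+}1$ candidates.

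If the candidate budget is too tight for this, my fallback is different. I would keep the conflicting witnesses unrewired and count them: per line, each one is charged to a pair $(\eld,\elv')$ with $\elv' \in \domB_\ela$ and $\eld$ extremal below, so there are at most $|\domB_\ela|^2$ of them. That keeps $|\domC_\ela|$ exponential, which is all the lemma asks.
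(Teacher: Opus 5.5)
\textbf{There is a genuine gap.} It sits exactly at the step you flag as the main obstacle, and neither of your two options closes it.

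\textbf{Why the blocking cannot be bounded.} Your targets are extremal \emph{relative to}~$\eld$. So the elements that can block~$\eld$, namely those keeping~$\eld$ as their own witness, change with~$\eld$, and nothing bounds how many elements stay blocked at a fixed line. Here is a concrete instance.
\begin{itemize}
\item Let $m = 1$, with the single demand $\pR(\varx, \vary)$, and let every element have exactly one outgoing $\pR$-edge, so witnesses are forced.
\item Take $\eld_1, \ldots, \eld_n$ of $1$-type~$\beta$ and $\elu_i, \elu_i', \elv_i$ of $1$-type~$\alpha$.
\item Order them by $\eld_1 \less_2 \elu_1 \less_2 \elu_1' \less_2 \eld_2 \less_2 \cdots \less_2 \eld_n \less_2 \elu_n \less_2 \elu_n' \less_2 \elv_1 \less_2 \cdots \less_2 \elv_n$.
\item Order them by $\eld_n \less_1 \cdots \less_1 \eld_1 \less_1 \elv_1 \less_1 \cdots \less_1 \elv_n \less_1 \elu_n \less_1 \elu_n' \less_1 \elu_{n-1} \less_1 \elu_{n-1}' \less_1 \cdots \less_1 \elu_1 \less_1 \elu_1'$.
\item Fix witnesses $\eld_i \mapsto \elv_i$ and $\elu_i' \mapsto \eld_i$; the other witnesses are irrelevant.
\end{itemize}
The candidate set of~$\eld_i$ is $\set{\elu_i, \elu_i'}$. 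Here $\elu_i$ is its $\pS_2$-successor, and $\elu_i'$ keeps~$\eld_i$, which is the leftmost $\beta$-element below it and lies in its own candidate set. So every~$\eld_i$ is blocked.

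Now look at the line of $\ela \deff \elu_n'$. Each $\eld_i$ with $3 \leqslant i \leqslant n{-}2$ lies in $\domC_{\ela}^{\uparrow,\times}$ but outside~$\domB_{\ela}$, and so does its blocker~$\elu_i'$. Hence the line has $\Omega(n)$ elements.

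This refutes the charging in your fallback. The facts $\elv' \in \domB_{\ela}$ and ``$\eld$ is extremal below the line'' both follow only when $\ela \less_2 \elv'$. A blocker, however, may sit strictly between~$\eld$ and the line. Larger candidate sets do not help: replace~$\elu_i'$ by a block of such elements. Note also that, read literally, rewiring ``only on behalf of its lower element'' would forbid every downward rebinding and leave $\domC^{\downarrow,\times}$ untouched.

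\textbf{The missing idea.} Treat one line at a time, and rebind only into that line's \emph{fixed} extremal set, as the paper does.
\begin{itemize}
\item For a fixed~$\ela$, rebind only elements above the line that have a witness below it outside~$\domB_{\ela}$, and only into~$\domB_{\ela}^{\downarrow}$. Its at most $2(m{+}1)|\tpone{\varphi}|$ members do not rebind in this step. The $\uparrow$ case is symmetric.
\item Every rewired pair then joins an element above the line to this small fixed set. So reverse conflicts can only involve the at most $m \cdot |\domB_{\ela}^{\downarrow}|$ elements that are witnesses of members of~$\domB_{\ela}^{\downarrow}$.
\item These elements, together with the $\pS_2$-successor of~$\ela$, are simply left as exceptions. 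This keeps $\domC_{\ela}^{\downarrow,\times}$ exponentially small.
\end{itemize}
The price is a monotonicity argument: later steps must never enlarge $\domC^{\delta,\times}_{\ela'}$ for lines already treated. This is where your subset observation enters. An element of~$\domB_{\ela}$ below the line of some $\ela' \less_2 \ela$ lies in~$\domB_{\ela'}$. An element above the line of~$\ela$ outside~$\domB_{\ela}$ stays outside~$\domB_{\ela'}$.

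Your other steps are sound: copying remote $2$-types is legal, and close witnesses contribute only~$\ela$ and its $\pS_2$-successor. But both presuppose that every remote witness was successfully rebound, which is exactly what fails above.
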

\begin{proof}
We shrink the model-defined sets~$\domC_{\ela}^{\delta,\times}$ one line after another by rewiring witnesses into extremal sets. As will become apparent, the sets for other elements do not grow.

\vspace{0.25em}
\noindent \textbf{Shrinking one line.}\, Fix $\ela \in \domA$ and $\alpha \in \tpone{\varphi}$, and let $\elb_1, \ldots, \elb_k$ be the elements of~$\domC_{\ela}^{\downarrow,\times}$ of $1$-type~$\alpha$.
Each of them lies above the line and has a witness below it outside~$\domB_{\ela}$.
We construct $\strA' \models \varphi$ over the same domain in which $k \leqslant \rmM \deff 2m(m{+}1)|\tpone{\varphi}| + 1$.
The set~$\domB_{\ela}^{\downarrow}$ has at most $2(m{+}1)|\tpone{\varphi}|$ elements, so at most $\rmM{-}1$ of the~$\elb_j$ are witnesses of elements of~$\domB_{\ela}^{\downarrow}$.
If $k > \rmM$, let $\eld_1, \ldots, \eld_r$ be the~$\elb_j$ that are neither such witnesses nor the $\pS_2$-successor of~$\ela$; we spare the latter since we are about to redefine $2$-types, which must not create or destroy $\pS_2$-edges.
Consider such a~$\eld$ and a witness~$\elv$ of~$\eld$ below the line.
If $\elv \in \domB_{\ela}$, we leave it alone.
Otherwise, say $\elv \less_1 \eld$. (The case $\eld \less_1 \elv$ is symmetric, with rightmost instead of leftmost elements.)
As $\elv \notin \domB_{\ela}$, the set $\domB_{\ela}^{\downarrow,\min}(\tp{\strA}{\elv})$ consists of $m{+}1$ elements to the left of~$\elv$, at most $m$ of which are witnesses of~$\eld$.
So we can pick one that is not, say, $\ele$, set $\tp{\strA'}{\eld, \ele} \deff \tp{\strA}{\eld, \elv}$, and fix $\ele$ instead of~$\elv$ as a witness of~$\eld$.
Notice that this operation is legal: $\ele$ and~$\elv$ have the same $1$-type, both lie to the left of and below~$\eld$, and neither is close to~$\eld$, as $\eld$ lies above the $\pS_2$-successor of~$\ela$.
So the orders survive, the imposed $2$-type is realised in~$\strA$ and hence entails~$\chi$, and no witness is lost: $\ele$ was no witness of~$\eld$, and $\eld$ is no witness of~$\ele \in \domB_{\ela}^{\downarrow}$ by the choice of the~$\eld_j$.
Doing so for all witnesses below the line of all~$\eld_j$, and then for all~$\alpha$, we obtain $\strA' \models \varphi$ in which $|\domC_{\ela}^{\downarrow,\times}| \leqslant |\tpone{\varphi}| \cdot \rmM$.
The analogous construction for~$\domC_{\ela}^{\uparrow,\times}$, applied to~$\strA'$ and discounting~$\ela$ from $\elb_1, \ldots, \elb_k$ instead of its successor, yields $\strA'' \models \varphi$ in which both sets $\domC_{\ela}^{\delta,\times}$ are bounded as given above.
As $|\domC_{\ela}^{\delta,w}| \leqslant m|\domC_{\ela}^{\delta,\times}|$ and $|\tpone{\varphi}|$ is exponential in~$|\varphi|$, the line of~$\ela$ in~$\strA''$ has at most~$2^{|\varphi|^{O(1)}}$~elements.

\vspace{0.25em}
\noindent \textbf{Other lines do not grow.}\, The sets~$\domB$ depend only on $1$-types and orders, which have not changed.
Only the elements~$\eld$ removed from $\domC_{\ela}^{\downarrow,\times} \cup \domC_{\ela}^{\uparrow,\times}$ above change their witnesses, so it suffices to show that no such~$\eld$ enters $\domC_{\ela'}^{\downarrow,\times}$ or~$\domC_{\ela'}^{\uparrow,\times}$ for any $\ela' \neq \ela$.
Let $\ela' \less_2 \ela$. (The case $\ela \less_2 \ela'$ is symmetric.)
The elements below the line of~$\ela'$ are among those below the line of~$\ela$, and the elements above the line of~$\ela$ are among those above the line of~$\ela'$.
Hence an element below the line of~$\ela'$ that lies in~$\domB_{\ela}$ lies in~$\domB_{\ela'}$, and an element above the line of~$\ela$ outside~$\domB_{\ela}$ lies outside~$\domB_{\ela'}$.
If $\eld$ is above the line of~$\ela$, then its witnesses below the line of~$\ela'$ now lie in~$\domB_{\ela}$, hence in~$\domB_{\ela'}$, and so $\eld \notin \domC_{\ela'}^{\downarrow,\times}$.
If $\eld$ is below the line of~$\ela$ but above that of~$\ela'$, then only its witnesses above the line of~$\ela$ changed, and these do not matter for~$\domC_{\ela'}^{\downarrow,\times}$.
Finally, if $\eld$ is below the line of~$\ela'$ and now lies in~$\domC_{\ela'}^{\uparrow,\times}$, then it did so before: in~$\strA$, it had a witness above the line of~$\ela$ outside~$\domB_{\ela}$, hence outside~$\domB_{\ela'}$.

\vspace{0.25em}
\noindent \textbf{All lines at once.}\, We treat the elements of~$\domA$ one by one, each time in the current model.
Since no step enlarges the sets $\domC_{\ela}^{\delta,\times}$ for the elements treated before, and $|\domC_{\ela}^{\delta,w}| \leqslant m|\domC_{\ela}^{\delta,\times}|$, the resulting $\strA^{*} \models \varphi$ has all its lines of size at most~$2^{|\varphi|^{O(1)}}$.
\end{proof}

\vspace{0.25em}
\noindent \textbf{Gluing lines.}\, From now on, processing lines have at most~$\rmL \in 2^{O(|\varphi|)}$ elements, the bound of
Lemma~\ref{lem:small-lines} (made explicit in Appendix~\ref{app:processing-lines}), named by
numbers in~$\set{1, \ldots, 2\rmL}$.
A sweep visits the lines of a model bottom-up, so consecutive lines have to fit together.
For a sequence $\lambda_1, \ldots, \lambda_k$ of processing lines, we attach the index~$i$ to the data of~$\lambda_i$, writing $\domP_i$, $\ela_i$, $\domB_i^{\delta,\eta}(\alpha)$, and so on, and take $\min$ and~$\max$ in the line indexed by the $\domB$-set in their argument.
A line is \emph{initial} if $\domP^{\downarrow} = \set{\ela}$, and \emph{final} if $\domP^{\uparrow} = \emptyset$.

\begin{definition}\label{def:gluing-rules}
  A sequence $\lambda_1, \ldots, \lambda_k$ of processing lines satisfies the \emph{gluing rules} if $\lambda_1$ is initial, $\lambda_k$ is final, and the following hold for all~$i < k$:
  \begin{namedlist}[(G1)]
    \item[\desclabel{\textup{(G1)}}{cond:succession}] \textbf{Line succession.}
      $\ela_i, \ela_{i+1} \in \domP_i \cap \domP_{i+1}$, and $\pS_2(\ela_i, \ela_{i+1})$ holds in both $\lambda_i$ and~$\lambda_{i+1}$.
    \item[\desclabel{\textup{(G2)}}{cond:agreement}] \textbf{Agreement and witness continuity.}
      $\lambda_i$ and~$\lambda_{i+1}$ induce the same substructure on $\domP_i \cap \domP_{i+1}$, and $\wst{\lambda_i}{\elb} \cap \domP_{i+1} = \wst{\lambda_{i+1}}{\elb} \cap \domP_i$ for all $\elb \in \domP_i \cap \domP_{i+1}$.
    \item[\desclabel{\textup{(G3)}}{cond:persistence}] \textbf{Persistence of elements.}
      $\domP_i^{\uparrow} \setminus \set{\ela_{i+1}} \subseteq \domP_{i+1}^{\uparrow}$ and $\domP_{i+1}^{\downarrow} \setminus \set{\ela_{i+1}} \subseteq \domP_i^{\downarrow}$.
    \item[\desclabel{\textup{(G4)}}{cond:update}] \textbf{Extremal element update.}
      For $\alpha \in \tpone{\varphi}$, let $\rmT_{i,\alpha} \deff \set{\ela_{i+1}}$ if $\tp{\lambda_i}{\ela_{i+1}} = \alpha$, and $\rmT_{i,\alpha} \deff \emptyset$ otherwise.
      Then, for both~$\eta \in \set{\min, \max}$, when read as an operator,
      \[
        \domB_{i+1}^{\downarrow,\eta}(\alpha) = \eta\bigl(\domB_i^{\downarrow,\eta}(\alpha) \cup \rmT_{i,\alpha}\bigr)
        \quad \text{and} \quad
        \domB_i^{\uparrow,\eta}(\alpha) = \eta\bigl(\domB_{i+1}^{\uparrow,\eta}(\alpha) \cup \rmT_{i,\alpha}\bigr).
      \]
  \end{namedlist}
\end{definition}

\noindent Intuitively, \ref{cond:succession} moves the line up past~$\ela_{i+1}$, \ref{cond:agreement} and~\ref{cond:persistence} make consecutive lines describe the same model, with $\ela_{i+1}$ the only element crossing the line, and~\ref{cond:update} updates the extremal sets as $\ela_{i+1}$~crosses.
As expected, the lines of a model satisfy the gluing~rules, so \emph{if $\varphi$ has a finite
model, then some sequence of processing lines satisfies them} (Appendix~\ref{app:processing-lines},
Lemma~\ref{lem:sat-implies-lines}).
Figure~\ref{fig:processing-lines} shows possible changes across consecutive processing lines.
The converse is the heart of the matter: we glue the lines back into a~model.

\begin{lemma}\label{lem:lines-imply-sat}
  If processing lines $\lambda_1, \ldots, \lambda_k$ satisfy the gluing rules, then $\varphi$ has a finite~model.
\end{lemma}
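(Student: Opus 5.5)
The plan is to rebuild a single finite ordered structure~$\strB$ whose elements are the distinguished elements $\ela_1, \ldots, \ela_k$. Each earlier element name in a line is identified with one of them by tracking it through consecutive lines.

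\textbf{Step 1: identities across lines.} Call an occurrence of a name~$\elb$ in~$\domP_i$ and one in~$\domP_{i+1}$ the same element whenever $\elb \in \domP_i \cap \domP_{i+1}$, and close this under transitivity. I will show that every class contains exactly one distinguished occurrence.
\begin{itemize}
\item If $\elb \in \domP_i^{\uparrow}$ and $\elb \neq \ela_{i+1}$, then \ref{cond:persistence} keeps $\elb$ in $\domP_{i+1}^{\uparrow}$. As $\lambda_k$ is final, $\elb$ must eventually be $\ela_j$ for some $j > i$.
\item Symmetrically, elements of $\domP_i^{\downarrow}$ trace back to some $\ela_j$ with $j \leqslant i$, as $\lambda_1$ is initial.
\item An element cannot switch sides without being distinguished. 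Suppose $\elb \in \domP_i^{\uparrow} \cap \domP_{i+1}^{\downarrow}$. Then \ref{cond:agreement} gives $\ela_i \less_2 \elb \lesseq_2 \ela_{i+1}$ in a common substructure. Since $\pS_2(\ela_i, \ela_{i+1})$ is an immediate-successor edge by \ref{cond:validity} and \ref{cond:succession}, this forces $\elb = \ela_{i+1}$.
\end{itemize}
So $\domB \deff \set{1, \ldots, k}$, with $j \mapsto \ela_j$, and $\lesseq_2$ is the natural order, with $\pS_2$ its successor.

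\textbf{Step 2: the horizontal order.} For pairs $j, j'$ that co-occur in some line, \ref{cond:agreement} makes their mutual $\lesseq_1$-order well defined. I then take $\lesseq_1$ to be a linear extension of the union of these local orders. The step I expect to be the main obstacle is showing that this union is acyclic. I would prove it by induction on~$i$ with the following invariant: the orders of $\lambda_1, \ldots, \lambda_i$ admit a common linear extension on all elements seen so far, and this extension restricted to~$\domP_i$ is the order of~$\lambda_i$. Passing to $i{+}1$ only inserts the new elements of $\domP_{i+1} \setminus \domP_i$, all of which lie above the line by \ref{cond:persistence}. I would try to place each of them consistently with $\domP_{i+1}$. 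I am not yet sure this insertion is always possible; the constraint it must respect is that elements below the line which have already left the lines are never compared to the new ones. If a conflict arises, the fallback is to use the extremal sets as reference points and to argue as in the next step.

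\textbf{Step 3: $2$-types.} Co-occurring pairs receive their $2$-type from a common line; this is well defined by \ref{cond:agreement}. Now let $j < j'$ be a pair that never co-occurs, and let $\alpha$ be the $1$-type of~$\ela_j$.
\begin{itemize}
\item Then $\ela_j \notin \domB_{j'}^{\downarrow}$. By \ref{cond:update}, tracked from line~$j$ upward, $\domB_{j'}^{\downarrow,\min}(\alpha)$ consists of $m{+}1$ elements left of~$\ela_j$, and $\domB_{j'}^{\downarrow,\max}(\alpha)$ of $m{+}1$ elements right of it.
\item If $\ela_j \less_1 \ela_{j'}$, pick $\ele \in \domB_{j'}^{\downarrow,\min}(\alpha)$. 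Then $\ele$ is also left of~$\ela_{j'}$ and below it, so $(\ela_{j'}, \ela_j)$ can copy $\tp{\lambda_{j'}}{\ela_{j'}, \ele}$. Choose $\ele$ not $\pS_2$-adjacent to~$\ela_{j'}$; at most one candidate is excluded, so one remains among the $m{+}1$. The copied type then has no $\pS_2$-atom.
\item The case $\ela_{j'} \less_1 \ela_j$ is symmetric, using $\domB_{j'}^{\downarrow,\max}(\alpha)$.
\end{itemize}
The orders and $\pS_2$ are thus respected. Every realised $1$-type and $2$-type occurs in some line, so it entails~$\chi$ by \ref{cond:validity}.

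\textbf{Step 4: witnesses.} By \ref{cond:witnesses}, every demand of~$\ela_j$ is served in~$\lambda_j$ by an element of $\wst{\lambda_j}{\ela_j}$. That element co-occurs with~$\ela_j$ in line~$j$, so the serving $2$-type is exactly the one installed in Step~3, and $\strB \models \chi_t[\ela_j, \elw]$. Hence $\strB$ is a finite model of~$\varphi$.
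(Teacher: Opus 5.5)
Your outline is the paper's: you identify names along their intervals of occurrence, take the distinguished elements as the domain with~$\lesseq_2$ the index order, and build~$\lesseq_1$ line by line. You then track the extremal sets with~\ref{cond:update}, give never co-occurring pairs a $2$-type realised with an extremal stand-in, and read off witnesses from~\ref{cond:witnesses}.

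\textbf{The gap is Step~2.} You leave it open, and your fallback cannot close it. Co-occurring pairs must take their $\lesseq_1$-atoms from the lines, so a real conflict would be fatal whatever reference points you choose. What you need is that no conflict arises, and the missing observation is contiguity. After renaming, the lines containing an element form an interval. So the elements of~$\domP_{i+1}$ placed before step~$i{+}1$ are exactly those of $\domP_i \cap \domP_{i+1}$. These are already ordered as in~$\lambda_i$, hence, by~\ref{cond:agreement}, as in~$\lambda_{i+1}$. An element placed earlier but absent from~$\domP_{i+1}$ never co-occurs with a new one, so it constrains nothing. Each new element can therefore be slotted between its $\lambda_{i+1}$-neighbours in $\domP_i \cap \domP_{i+1}$, and your invariant is preserved; the paper does this with an injection~$\bx$ into~$\Q$ and density.

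Step~3 depends on this too. Once~$\ela_j$ has left the lines, \enquote{left of~$\ela_j$} for the later arrivals in $\domB^{\downarrow,\min}(\alpha)$ is meaningful only in a global~$\lesseq_1$ extending all line orders. This is why the paper first proves, by induction from~\ref{cond:extremal} via~\ref{cond:update}, that $\domB_i^{\downarrow,\eta}(\alpha)$ consists of the up to $m{+}1$ extremal elements of $1$-type~$\alpha$ among $\ela_1, \ldots, \ela_i$ with respect to~$\bx$.

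\textbf{Smaller omissions.} You never check that the $\lesseq_2$- and $\pS_2$-atoms of the $2$-types copied from lines agree with the index order. For $\ela_j, \ela_l \in \domP_i$ with $i < j < l$, both lie above line~$i$. You must pass to~$\lambda_j$, which contains~$\ela_l$ by contiguity and in which $\ela_j$ is distinguished. There you get $\ela_j \less_2 \ela_l$, and, by~\ref{cond:succession} and~\ref{cond:validity}, that $\pS_2(\ela_j, \ela_l)$ holds only if $l = j{+}1$; the case $j < l \leqslant i$ is symmetric. Relatedly, before concluding that each class has only one distinguished occurrence, Step~1 must also exclude crossings from below to above. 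This follows at once from \ref{cond:succession}, \ref{cond:validity} and~\ref{cond:agreement}.

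Finally, your Step~3 mirrors the paper's choice: you copy from~$\lambda_{j'}$, using the downward extremal set of the lower element's $1$-type. This works, but discarding the $\pS_2$-predecessor of~$\ela_{j'}$ needs $m \geqslant 1$. The paper instead copies from~$\lambda_{j+1}$, using the upward extremal sets $\domB_{j+1}^{\uparrow,\eta}$ of the $1$-type of~$\ela_{j'}$. Their members lie strictly above~$\ela_{j+1}$, so they are never $\pS_2$-adjacent to~$\ela_j$, and no candidate has to be discarded.
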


\begin{proof}
We build a model~$\strA$ whose elements are the distinguished elements of the~lines.

\vspace{0.25em}
\noindent \textbf{Renaming.}\, An \emph{interval of occurrence} of an element is a maximal interval $\lambda_i, \ldots, \lambda_j$ of lines containing it.
We rename every element apart in each of its intervals of occurrence.
That is to say, for each element $b$ and an interval of occurrence $\lambda_i, \ldots, \lambda_j$,
we replace $b$ with a fresh symbol, say, $b_{i, j}$ in $\lambda_i, \ldots, \lambda_j$.
This preserves the gluing rules, so we may speak of \emph{the} interval of occurrence of an~element.
From now on we refer to the distinguished element of $\lambda_i$ as $\ela_i$.

\vspace{0.25em}
\noindent \textbf{Crossings.}\, Take any distinguished element $\ela_i$.
In its interval of occurrence $\lambda_j, \ldots, \lambda_l$
we have that $\ela_i$ is below every $\lambda_i, \ldots, \lambda_l$ whilst being above $\lambda_j, \ldots, \lambda_{i-1}$.
Indeed, $\ela_i$ is below $\lambda_i$ by definition. But then, by induction on $i \leqslant r < l$,
$\lambda_r \models \ela_i \lesseq_2 \ela_r$ by induction, $\lambda_{r} \models \pS_2(\ela_r, \ela_{r+1})$ by \ref{cond:succession},
and $\lambda_r \models \ela_i \less_2 \ela_{r+1}$ by \ref{cond:validity}.
Then, by~\ref{cond:agreement}, $\lambda_{r+1} \models \ela_i \less_2 \ela_{r+1}$ as required.
The \enquote{above} direction is similar. In fact, by the above induction, elements $\ela_1, \ldots, \ela_k$ are distinct.

\vspace{0.25em}
\noindent \textbf{Every element is processed.}\, Every element occurring in a line is some~$\ela_i$.
Otherwise, let $\elb \notin \set{\ela_1, \ldots, \ela_k}$ have interval $\lambda_i, \ldots, \lambda_j$.
Then $\elb$ is above the lines of all of $\lambda_i, \ldots, \lambda_j$ or below all of them: if it were above the line of~$\lambda_l$ but below that of~$\lambda_{l+1}$, then~\ref{cond:persistence} would give $\elb = \ela_{l+1}$.
A crossing from below to above is impossible too: $\elb \lesseq_2 \ela_l$ in~$\lambda_l$ implies $\elb \less_2 \ela_{l+1}$ there by~\ref{cond:succession} and~\ref{cond:validity}, and hence also in~$\lambda_{l+1}$ by~\ref{cond:agreement}.
If $\elb$ is above all of them, then $j = k$, as by~\ref{cond:persistence} it can only leave the lines as~$\ela_{j+1}$, which contradicts $\domP_k^{\uparrow} = \emptyset$; symmetrically, if it is below all of them, then $i = 1$, which contradicts $\domP_1^{\downarrow} = \set{\ela_1}$.

\vspace{0.25em}
\noindent \textbf{Domain and orders.}\, Let $\domA \deff \set{\ela_1, \ldots, \ela_k}$, set $\ela_i \less_2 \ela_j$ for all~$i < j$, so that $\pS_2$ consists of the pairs $(\ela_i, \ela_{i+1})$, and set $\tp{\strA}{\ela_i} \deff \tp{\lambda_i}{\ela_i}$.
We claim the $\lesseq_2$-order defined in $\strA$ coincides with that in processing lines.
To see this take $\ela_j, \ela_l \in \domP_i$ with $j < l$.
If $j < i < l$, then, as argued above, $\ela_j$ is below the line of $\lambda_i$
whilst $\ela_l$ is above it.
Hence, $\lambda_{i} \models \ela_j \less_2 \ela_i \less_2 \ela_l$.
By the above $\lambda_{i} \models \neg \pS_2(\ela_j, \ela_l)$ as well.
In case $i \leqslant j < l$ we see that $\ela_l$ occurs in $\lambda_{j}$.
Since $\ela_j$ is below the line of $\lambda_j$, whilst $\ela_l$ is above it
 we have that $\lambda_{j} \models \ela_j \less_2 \ela_l$.
Now, by \ref{cond:succession}, $\lambda_{j} \models \pS_2(\ela_j, \ela_{j+1})$.
Thus, by \ref{cond:validity}, $\lambda_{j} \models \pS_2(\ela_j, \ela_{l})$ exactly when $l = j{+}1$.
The case $j < l \leqslant i$ is then symmetric.
By~\ref{cond:agreement} this is consistent throughout all $\lambda_i$ in which $\ela_j$ and $\ela_l$ appear.

\noindent
For~$\lesseq_1$, we pick an injection $\bx \colon \domA \to \Q$ and let $\elb \lesseq_1 \elc$ if and only if $\bx(\elb) \leqslant \bx(\elc)$.
We define~$\bx$ on~$\domP_1, \domP_2, \ldots$ in turn.
When we reach~$\domP_{i+1}$, the elements placed so far are those it shares with~$\domP_i$, placed in the order of~$\lambda_i$, and thus also that of~$\lambda_{i+1}$ by~\ref{cond:agreement}.
As $\Q$ is dense, we can place the remaining elements of $\domP_{i+1}$ at fresh positions so that $\bx$ respects the order of~$\lambda_{i+1}$.
Hence $\bx$ respects~$\lesseq_1$ in every~line.

\vspace{0.25em}
\noindent \textbf{Extremal elements.}\, For all $i$, $\alpha$ and~$\eta$, the set~$\domB_i^{\downarrow,\eta}(\alpha)$ consists of the up to $m{+}1$ leftmost ($\eta = \min$) or rightmost ($\eta = \max$) elements of $1$-type~$\alpha$ among $\ela_1, \ldots, \ela_i$ in~$\strA$, and $\domB_i^{\uparrow,\eta}(\alpha)$ of those among $\ela_{i+1}, \ldots, \ela_k$.
This holds for $\domB_1^{\downarrow,\eta}(\alpha)$ and~$\domB_k^{\uparrow,\eta}(\alpha)$ by~\ref{cond:extremal}, as $\domP_1^{\downarrow} = \set{\ela_1}$ and $\domP_k^{\uparrow} = \emptyset$, and propagates by induction, upwards for~$\downarrow$ and downwards for~$\uparrow$, using~\ref{cond:update}.
For instance, $\domB_{i+1}^{\downarrow,\min}(\alpha)$ consists of the $m{+}1$ leftmost elements of $\domB_i^{\downarrow,\min}(\alpha) \cup \rmT_{i,\alpha} \subseteq \domP_i$ in~$\lambda_i$, \ie in~$\bx$, and hence of the $m{+}1$ leftmost elements of $1$-type~$\alpha$ among $\ela_1, \ldots, \ela_{i+1}$.

\vspace{0.25em}
\noindent \textbf{Types of pairs.}\, Distinct elements occurring together in a line take their $2$-type from it.
That is to say, if $\ela, \elb \in \domP_i$, then $\tp{\strA}{\ela, \elb} = \tp{\lambda_i}{\ela, \elb}$.
By~\ref{cond:agreement}, this is well defined, as all lines in the intersection of the elements' intervals of occurrence agree on it.
In particular, by~\ref{cond:witnesses}, each~$\ela_i$ finds in~$\lambda_i$ a witness $\elw \in \wst{\lambda_i}{\ela_i}$ for every~$\chi_t$, and by~\ref{cond:validity}, all types assigned so far entail~$\chi$.
It remains to assign $2$-types to pairs $\ela_i, \ela_j$ with $i < j$ that never occur together.
By~\ref{cond:succession}, $\ela_i$ occurs in~$\lambda_{i+1}$, and hence $\ela_j$ does not; thus $j > i{+}1$, and $\ela_{i+1} \less_2 \ela_j$.
Let $\alpha \deff \tp{\strA}{\ela_j}$.
By the claim on extremal elements, $\domB_{i+1}^{\uparrow,\min}(\alpha)$ and $\domB_{i+1}^{\uparrow,\max}(\alpha)$ consist of $m{+}1$ elements each, all to the left and all to the right of~$\ela_j$, respectively.
Pick $\elb$ from the former if $\bx(\ela_j) < \bx(\ela_i)$, and from the latter otherwise.
Then $\elb$ has the $1$-type of~$\ela_j$ and lies on the same $\lesseq_1$-side of~$\ela_i$, and $\ela_i \less_2 \ela_{i+1} \less_2 \elb$ holds in~$\lambda_{i+1}$, so $\elb$ is not the $\pS_2$-successor of~$\ela_i$.
We may therefore set $\tp{\strA}{\ela_i, \ela_j} \deff \tp{\lambda_{i+1}}{\ela_i, \elb}$, which respects the orders of~$\strA$ and, by~\ref{cond:validity}, entails~$\chi$.
Completing the above we have~$\strA \models \varphi$.
\end{proof}

\begin{proof}[Proof of Theorem~\ref{thm:expspace-easyness}]
Given a sentence, we guess a run of the normal-form procedure in Fact~\ref{fact:scott}, and continue with its output~$\varphi$.
We then guess an initial processing line~$\lambda_1$ and check~\ref{cond:validity}--\ref{cond:witnesses}.
Given~$\lambda_i$, we guess~$\lambda_{i+1}$, check~\ref{cond:validity}--\ref{cond:witnesses} for it and~\ref{cond:succession}--\ref{cond:update} for the pair, and forget~$\lambda_i$.
We accept once we reach a final line.
By Lemmas~\ref{lem:sat-implies-lines} and~\ref{lem:lines-imply-sat}, the algorithm has an accepting run if and only if $\varphi$ has a finite model.
As every line has at most~$\rmL$ elements named in~$\set{1,\ldots,2\rmL}$, this nondeterministic algorithm works in exponential space. By Savitch's theorem, this yields membership in~$\ExpSpace$.
\end{proof}

\section{Conclusions}\label{sec:conclusions}

\noindent We have settled the complexity of the finite satisfiability
problem for~$\FOt$ over two-dimensional structures, as visualised in Table~\ref{tab:results}.
Two linear orders, with or without the successor of one of them, yield an
$\ExpSpace$-complete logic (Theorems~\ref{thm:expspace-hardness}
and~\ref{thm:expspace-easyness}), while one linear order and one (possibly foreign)
successor reduces the complexity to just~$\NExpTime$ (Theorem~\ref{thm:one-order-one-succ}).
As an application, order-invariance and order-successor-invariance of~$\FOt$
are $\coNExpTime$-complete (Theorem~\ref{thm:orderinv}).
All our proofs are relatively simple, which we see as their main~strength.
Our lower bound relies on unary predicates that may hold simultaneously;
whether it survives over data words with an explicit
alphabet~\cite[\S5]{SchwentickZ12} remains~open.
Future work could go in two directions.
The first is to lift our results from finite to arbitrary structures, \eg to
data $\omega$-words.
There, satisfiability of~$\FOt$ with two linear orders and arbitrary auxiliary
symbols is known to be decidable~\cite{TorunczykZ22}, so the task is to pin
down its complexity.
Our lower bound transfers immediately, but the upper bounds need extra work:
instead of a small-model property, one has to establish a small
\emph{ultimately periodic} model~property.
The second leads to a challenging open problem: is (finite)
satisfiability decidable for~$\FOt$ with three or more successor
relations~\cite[\S7,~Question~1]{ZeumeH16}?
We conjecture that it~is.

\vspace{0.25em}
\noindent \textbf{Declaration on generative AI.}\, We developed the main results and first
versions of most proofs without AI assistance. Claude Opus~5.5 helped extensively with
writing, editing and reviewing, and drew our figures in TikZ. In reviewing, it
found an error in an earlier version of Lemma~\ref{lem:shrink} and smaller gaps elsewhere in
Section~\ref{sec:one-order-one-succ}, the gap in Kiero\'{n}ski's proof
(Appendix~\ref{app:runs}), and a missing step in the proof of Lemma~\ref{lem:lines-imply-sat},
and suggested the fixes we adopted. From our drafts, it wrote Appendix~\ref{app:runs},
including Lemmas~\ref{lem:pullback}--\ref{lem:tree}, which claims no novelty and only makes the
paper self-contained. We take full responsibility for the~content.

\clearpage
\appendix
\makeatletter
\edef\thetheorem{\expandafter\noexpand\thesection\@thmcountersep\@thmcounter{theorem}}
\makeatother
\section{Encoding Runs: Proof of Lemma~\ref{lem:runs}}\label{app:runs}

\noindent The construction of~$\varphi_{\autM,u}$ and its correctness over configuration trees
(Lemma~\ref{lem:runs-tree}) follow Kiero\'{n}ski's proof of~\cite[Thm.~5, pp.~1671--1676]{Kieronski06},
with his transitive relation~$\prec$ replaced by~$\desc$; we claim no novelty here.
We reproduce it only for the reader's convenience, and because the original contains a minor
gap, easily repaired, and several typos, which we list first.
The passage from configuration trees to arbitrary models of~$\phitree$ in which~$\desc$ is merely
transitive (Lemmas~\ref{lem:pullback} and~\ref{lem:tree}) is straightforward.

\vspace{0.25em}
\noindent \textbf{Errors in~\cite{Kieronski06}.}\, Kiero\'{n}ski's transition formul{\ae} do not
cover a head standing on cell~$0$.
His formul{\ae}~$\Phi^{\mathit{exist}}_{a,q}$
and~$\Phi^{\mathit{univ}}_{a,q}$~\cite[pp.~1674--1675]{Kieronski06}, which
describe transitions, quantify over the cell~$y$ of a child configuration one
to the left of the head, \ie with~$\bar{P}(x) = \bar{P}(y){+}1$.
If the head is on cell~$0$, no such~$y$ exists, and both formul{\ae} hold
vacuously.
This happens already in the initial configuration, whose head is on cell~$0$
by his~$\Phi_7$.
His assumption that the head never moves left of its initial position does
not help, as the formul{\ae} anchor~$y$ one cell to the left of the head for
right moves as~well.
Hence the children of the root configuration may carry no head at all; then
no transition formula ever applies again, his~$\Phi_{10}$ merely copies the
tape, and his~$\Phi_{11}$, which forbids the rejecting state, holds trivially.
Consequently, his~$\Phi$ (with the typo in~$\Phi_4$ below corrected) is
satisfiable whenever the initial state is not rejecting, regardless of whether
the machine accepts, so the reduction fails as stated.
The repair is easy: we locate~$y$ at the cell under the head, and require
explicitly that the cell receiving the head exists (see~$\psi_t$ below).
This also treats all combinations of head moves uniformly,
whereas~\cite[p.~1674]{Kieronski06} spells out only one of them.
The remaining issues are~minor.
\begin{itemize}\itemsep0em
  \item $\Phi_4$~\cite[p.~1673]{Kieronski06} reads
    $\bigwedge_i \forall x\, (A_i x \to \bigwedge_{i \neq j} \neg A_i x)$,
    which makes every~$A_i$ empty and thus his~$\Phi$ unsatisfiable;
    $\neg A_j x$ is~meant.
  \item The left-move parts of $\Phi^{\mathit{exist}}_{a,q}$
    and~$\Phi^{\mathit{univ}}_{a,q}$ end with
    $\forall x\, (y \prec x \to \mathrm{Next}(y,x) \to A''y)$, which
    writes~$a''$ into the cell receiving the head; $A''x$ is~meant.
  \item The machine accepts or rejects exactly in the $2^{n^k}$-th
    step~\cite[p.~1671]{Kieronski06}, but the deepest configurations of the
    tree have number~$2^{n^k}{-}1$ (his~$\Phi_3$), so runs are cut one step
    before they halt, and $\Phi_{11}$ never sees a halting configuration;
    halting at step~$2^{n^k}{-}1$ is~meant.
  \item $\Phi_6$~\cite[p.~1674]{Kieronski06} allows the head cell to carry
    several states. This is harmless, but we forbid it, as our proof reads
    off one state per~configuration.
  \item In the correctness proof~\cite[p.~1676]{Kieronski06}, the sequence
    $b_0, \ldots, b_{2^{n^k}}$ should end with~$b_{2^{n^k}-1}$, the
    transition formul{\ae} are called~$\Phi^{\mathit{exists}}_{a,q}$ instead
    of~$\Phi^{\mathit{exist}}_{a,q}$, and \enquote{an accepting computation
    tree of~$\mathcal{M}$ on~$w$} names the model~$\mathcal{M}$ where the
    machine~$M$ is~meant.
\end{itemize}
Apart from these repairs, we deviate from~\cite{Kieronski06} only in
presentation.
The run encoding works over the given tree, so it needs neither his axioms
for its shape ($\Phi_1$--$\Phi_3$), which we adapt in~$\phitree$ instead, nor
his auxiliary predicate~$N$.
His $\mathrm{Next}$ comes with a separate guard~$x \prec y$, which we fold
into it, and our macro~$\fDown$ is new.
Finally, as the tree is given, we need not extract configurations from
witnesses, and we derive acceptance by a bottom-up induction instead of
completing the tree with dummy~subtrees.

\vspace{0.25em}
\noindent \textbf{The machine.}\, Let $\autM$ work in time~$2^{n^k}$ on
inputs of length~$n$, and let $u = u_0 \cdots u_{n-1}$ be an input.
The states of~$\autM$ are existential, universal, accepting, or rejecting; the
latter two are \emph{halting}.
A configuration is \emph{accepting} if its state is accepting, or existential
and some successor configuration is accepting, or universal and all successor
configurations are accepting; $\autM$ \emph{accepts}~$u$ if its initial
configuration on~$u$ is accepting.
Following~\cite[p.~1671]{Kieronski06}, we assume without loss of generality
that $q_0$ is the initial and~$q_{\mathrm{rej}}$ the only rejecting state,
that every non-halting configuration has exactly two successor
configurations, and that the head never moves left of its initial position;
this affects the running time only by a constant factor.
Hence there is a constant~$c$ such that, for $\rmN \deff n^k{+}c$, every
computation path of~$\autM$ on~$u$ has fewer than $2^{\rmN}$~steps.
Thus every computation tree of~$\autM$ on~$u$ fits into the
$\rmN$-configuration tree~$\strA$: configurations at depth~$d$ are those
reached after $d$~steps, configurations at depth~$2^{\rmN}{-}1$ are halting,
and the head never leaves the cells $0, \ldots, 2^{\rmN}{-}1$.

\noindent We describe configurations in the standard way, by unary
predicates~$\pA_a$ for letters~$a$, $\pQ_q$ for states~$q$, and~$\pH$, which
form~$\sigaux$: every cell carries the predicate of its letter, and the cell
under the head also carries~$\pH$ and the predicate of the state.
The blank letter is~$\sqcup$.
A transition~$t = (q', a', d)$ writes~$a'$, enters~$q'$, and moves the head
by~$d \in \set{-1, +1}$; we write~$t(c)$ for the configuration obtained
from~$c$ by applying~$t$.
For every non-halting state~$q$ and letter~$a$, we fix an order
$t^1_{q,a}, t^2_{q,a}$ of the two transitions applicable in~$q$
reading~$a$.

\vspace{0.25em}
\noindent \textbf{Navigation.}\, For fixed~$\ell < 2^{\rmN}$, the comparisons
$\pD(\varx) = \ell$, $\pC(\varx) = \ell$, $\pC(\varx) \geqslant \ell$,
$\pD(\varx) = \pD(\vary)$, $\pC(\varx) = \pC(\vary)$,
$\pD(\vary) = \pD(\varx){+}1$ and~$\pC(\vary) = \pC(\varx){+}1$ of the depths
and positions encoded at~$\varx$ and~$\vary$ (see Section~\ref{sec:lower-bound})
are expressible by quantifier-free formul{\ae} of size polynomial in~$\rmN$.
For instance, as bits are numbered most significant first, the last one
is~expressed~by
\[
  \textstyle\bigvee_{i < \rmN} \big( \neg\pC_i(\varx) \land \pC_i(\vary) \land
  \bigwedge_{j > i} (\pC_j(\varx) \land \neg\pC_j(\vary)) \land
  \bigwedge_{j < i} (\pC_j(\varx) \iff \pC_j(\vary)) \big).
\]
With their help, we define two navigation~macros:
\begin{align*}
  \fNext(\varx, \vary) &\deff \varx \desc \vary \land \pD(\varx) = \pD(\vary)
    \land \pC(\vary) = \pC(\varx){+}1, \\
  \fDown(\varx, \vary) &\deff \varx \desc \vary \land \pD(\vary) = \pD(\varx){+}1
    \land \pC(\vary) = \pC(\varx).
\end{align*}
Thus $\fNext(\varx, \vary)$ says that $\vary$ is the next cell after~$\varx$,
and $\fDown(\varx, \vary)$ that $\vary$ has the position of~$\varx$ and lies in
a child of its configuration, the left one if and only if~$\pL(\vary)$.
The conjunct $\varx \desc \vary$ does the real work: without it, $\vary$ could
lie in any configuration of the right~depth.

\vspace{0.25em}
\noindent \textbf{The formula.}\, The formul{\ae} $\Phi_1$--$\Phi_3$ say that
every cell carries exactly one letter, that every configuration has at most
one head, and that exactly the head cells carry a state, one~each:
\begin{align*}
  \Phi_1 &\deff \forall\varx\, \textstyle\bigvee_{a} \big( \pA_a(\varx) \land
    \bigwedge_{\hat{a} \neq a} \neg\pA_{\hat{a}}(\varx) \big), \\
  \Phi_2 &\deff \forall\varx\forall\vary\, \big( (\varx \desc \vary \land
    \pD(\varx) = \pD(\vary)) \to \neg(\pH(\varx) \land \pH(\vary)) \big), \\
  \Phi_3 &\deff \forall\varx\, \big( (\pH(\varx) \iff \textstyle\bigvee_{q}
    \pQ_q(\varx)) \land \bigwedge_{q \neq \hat{q}} \neg(\pQ_q(\varx) \land
    \pQ_{\hat{q}}(\varx)) \big).
\end{align*}
The formul{\ae} $\Phi_4$--$\Phi_6$ say that the root configuration is the
initial configuration of~$\autM$ on~$u$, and $\Phi_7$ that cells not under the
head keep their letters in both child~configurations:
\begin{align*}
  \Phi_4 &\deff \forall\varx\, \big( (\pD(\varx) = 0 \land \pC(\varx) = 0) \to
    (\pH(\varx) \land \pQ_{q_0}(\varx)) \big), \\
  \Phi_5 &\deff \textstyle\bigwedge_{i < n} \forall\varx\, \big( (\pD(\varx) = 0
    \land \pC(\varx) = i) \to \pA_{u_i}(\varx) \big), \\
  \Phi_6 &\deff \forall\varx\, \big( (\pD(\varx) = 0 \land \pC(\varx) \geqslant n)
    \to \pA_{\sqcup}(\varx) \big), \\
  \Phi_7 &\deff \forall\varx\forall\vary\, \big( (\fDown(\varx, \vary) \land
    \neg\pH(\varx)) \to \textstyle\bigwedge_{a} (\pA_a(\varx) \iff \pA_a(\vary))
    \big).
\end{align*}
For a transition~$t = (q', a', d)$, the formula~$\psi_t(\vary)$ says
that~$\vary$ looks like the formerly scanned cell after applying~$t$: it
carries~$a'$, and the cell receiving the head, \ie the next cell after~$\vary$
if~$d = +1$ and the previous one if~$d = -1$, exists and carries the head in
state~$q'$:
\begin{align*}
  \psi_{(q', a', +1)}(\vary) &\deff \pA_{a'}(\vary) \land
    \pC(\vary) \neq 2^{\rmN}{-}1 \land \forall\varx\, \big(
    \fNext(\vary, \varx) \to (\pH(\varx) \land \pQ_{q'}(\varx)) \big), \\
  \psi_{(q', a', -1)}(\vary) &\deff \pA_{a'}(\vary) \land
    \pC(\vary) \neq 0 \land \forall\varx\, \big(
    \fNext(\varx, \vary) \to (\pH(\varx) \land \pQ_{q'}(\varx)) \big).
\end{align*}
Let $q$ be a non-halting state, $a$ a letter, and~$t^i \deff t^i_{q,a}$.
If $q$ is existential, $\Phi^{\exists}_{q,a}$ enforces that the left child
configuration arises by~$t^1$ or by~$t^2$, and says nothing about the right
one.
If $q$ is universal, $\Phi^{\forall}_{q,a}$ enforces that the left child
arises by~$t^1$ and the right one by~$t^2$:
\begin{align*}
  \Phi^{\exists}_{q,a} &\deff \forall\varx\forall\vary\, \big(
    (\pA_a(\varx) \land \pQ_q(\varx) \land \pH(\varx) \land
    \fDown(\varx, \vary) \land \pL(\vary)) \to (\psi_{t^1}(\vary) \lor
    \psi_{t^2}(\vary)) \big), \\
  \Phi^{\forall}_{q,a} &\deff \forall\varx\forall\vary\, \big(
    (\pA_a(\varx) \land \pQ_q(\varx) \land \pH(\varx) \land
    \fDown(\varx, \vary)) \\
  &\qquad\qquad\to ((\pL(\vary) \to \psi_{t^1}(\vary)) \land
    (\neg\pL(\vary) \to \psi_{t^2}(\vary))) \big).
\end{align*}
Finally, $\Phi_8 \deff \forall\varx\, \neg\pQ_{q_{\mathrm{rej}}}(\varx)$
forbids rejection, and we~put
\[
  \varphi_{\autM,u} \deff \textstyle\bigwedge_{i \leqslant 8} \Phi_i \land
  \bigwedge_{q \text{ existential},\, a} \Phi^{\exists}_{q,a} \land
  \bigwedge_{q \text{ universal},\, a} \Phi^{\forall}_{q,a}.
\]
Clearly, $\varphi_{\autM,u}$ is computable in time polynomial in~$|\autM|$
and~$n$.
A sentence is \emph{universal} if its negation normal form has no
existential quantifier, and $\desc$ \emph{occurs only negatively} in it if, in
negation normal form, it occurs only in literals~$\neg(\varx \desc \vary)$.
By construction, $\varphi_{\autM,u}$ is universal, equality-free, and
uses~$\desc$ only negatively, namely in antecedents of implications, directly
or within~$\fNext$ and~$\fDown$.
Its correctness over configuration trees reads as~follows.

\begin{lemma}\label{lem:runs-tree}
  $\autM$ accepts~$u$ if and only if some expansion of~$\strA$ by~$\sigaux$
  satisfies~$\varphi_{\autM,u}$.
\end{lemma}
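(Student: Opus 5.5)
Both directions of Lemma~\ref{lem:runs-tree} work on the fixed $\rmN$-configuration tree~$\strA$, whose configurations are indexed by words~$w$ with $|w| < 2^{\rmN}$, where $w0$ is the left child (marked by~$\pL$) and $w1$ the right child. Over~$\strA$, the macros mean exactly what they should. $\fNext(\varx,\vary)$ holds iff $\vary$ is the next cell after~$\varx$ in the same configuration, and $\fDown(\varx,\vary)$ holds iff $\vary$ is the cell at the same position in a child configuration. Both facts follow from Definition~\ref{def:configuration-tree}, because $\desc$ relates same-depth cells only within one configuration, and cells of depth one higher only in the children. I will use these facts throughout.

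\textbf{From acceptance to a model.} Suppose $\autM$ accepts~$u$. I fix an accepting computation tree and embed it top-down into~$\strA$. The root configuration receives the initial configuration. At a universal configuration~$c$ at~$w$ reading~$a$ in state~$q$, I put $t^1_{q,a}(c)$ at~$w0$ and $t^2_{q,a}(c)$ at~$w1$. At an existential configuration, I put the accepting successor at~$w0$ and an arbitrary successor at~$w1$, so that every configuration in the tree is still a genuine configuration. Since every path has fewer than $2^{\rmN}$ steps, every configuration placed at depth $2^{\rmN}{-}1$ is halting. Halting configurations have no transitions, so below a halting configuration I copy it unchanged; no transition formula fires there, and $\Phi_7$ holds. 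The predicates $\pA_a$, $\pH$ and $\pQ_q$ are then read off the placed configurations. $\Phi_1$--$\Phi_6$ hold by construction. For $\Phi^{\exists}$, $\Phi^{\forall}$ and $\psi_t$, I need that the head position after a move stays within $0,\ldots,2^{\rmN}{-}1$; this holds because the head never moves left of cell~$0$ and time bounds space. Finally, $\Phi_8$ holds if every configuration that appears in the tree is non-rejecting. Along universal branches this is guaranteed by acceptance. The right children of existential configurations, however, are not part of the accepting subtree. So I would instead place the \emph{same} accepting successor in both children of an existential configuration. The formula $\Phi^{\exists}_{q,a}$ constrains only the left child, and $\Phi_7$ only propagates non-head cells, which the duplicate respects. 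With this choice every placed configuration is accepting, hence never rejecting.

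\textbf{From a model to acceptance.} Given an expansion satisfying $\varphi_{\autM,u}$, I first show by induction on depth that every configuration~$w$ encodes a genuine configuration $c_w$ of~$\autM$. Each cell carries exactly one letter by~$\Phi_1$. The crux is that each configuration carries \emph{exactly} one head: $\Phi_2$ gives at most one, and $\psi_t$ supplies one in each child, located at the neighbour of the cell at the old head position, which exists by the conjunct $\pC(\vary)\neq 0$ or $\pC(\vary)\neq 2^{\rmN}{-}1$ in~$\psi_t$. This is precisely the repaired gap. Moreover, $\Phi_7$ together with the letter written in~$\psi_t$ shows that $c_{w0} = t^i(c_w)$ for some~$i$ when $c_w$ is existential, and that $c_{w0} = t^1(c_w)$ and $c_{w1} = t^2(c_w)$ when $c_w$ is universal. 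For this I need uniqueness of the state at the head, from~$\Phi_3$, and the fact that the cell receiving the head is not disturbed by~$\Phi_7$, which holds because that cell was not the head cell. A configuration with no head cannot arise, since the root has a head by~$\Phi_4$. Then a bottom-up induction shows that every $c_w$ is accepting. At depth $2^{\rmN}{-}1$, $c_w$ is halting and not rejecting by~$\Phi_8$, hence accepting. For an existential $c_w$, its accepting successor is $c_{w0}$. For a universal $c_w$, both of its successors are $c_{w0}$ and~$c_{w1}$. A halting $c_w$ at smaller depth is directly non-rejecting, hence accepting. Therefore the root configuration, which is the initial configuration by $\Phi_4$--$\Phi_6$, is accepting.

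I expect the main obstacle to be the exactly-one-head invariant in the second direction, together with the careful case analysis showing that $\Phi_7$ and~$\psi_t$ determine each child configuration completely. The rest is bookkeeping.
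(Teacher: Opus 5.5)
Your forward direction is correct and differs from the paper's only in how you fill the parts of~$\strA$ that the formula leaves unconstrained. The paper puts headless \emph{junk} configurations, which copy the letters of their parent, into the right child of an existential configuration and below halting ones. You instead duplicate the chosen accepting successor into both children and copy halting configurations downwards. Since every placed configuration is then accepting, and no transition formula fires on a halting state, $\Phi_7$, $\Phi_8$ and all $\Phi^{\exists}_{q,a}$, $\Phi^{\forall}_{q,a}$ hold. Your first variant, with an arbitrary successor in~$w1$, would indeed fail, as you noticed.

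The backward direction has a genuine gap. You claim that \emph{every} configuration~$w$ of~$\strA$ encodes a configuration~$c_w$ with exactly one head, that \enquote{a configuration with no head cannot arise}, and finally that every~$c_w$ is accepting. None of this holds. As you yourself observe in the forward direction, $\Phi^{\exists}_{q,a}$ constrains only the left child, since its antecedent contains~$\pL(\vary)$. Moreover, no transition formula constrains the children of a halting or a headless configuration. In those places only $\Phi_1$--$\Phi_3$, $\Phi_7$ and~$\Phi_8$ apply, and they permit zero heads. Indeed, the paper's junk configurations are headless, yet the expansion containing them satisfies~$\varphi_{\autM,u}$. Likewise, the right child at depth~$2^{\rmN}{-}1$ of an existential configuration may carry a head at any cell in any non-rejecting, even non-halting, state, because all transition formul{\ae} hold vacuously at the deepest level. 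Such a~$c_w$ need not be reachable from the initial configuration. So your claims \enquote{at depth~$2^{\rmN}{-}1$, $c_w$ is halting} and \enquote{every~$c_w$ is accepting} fail, and your bottom-up induction, run over all configurations, breaks down.

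The missing idea is to restrict both inductions to the set~$\rmR$ of configurations reachable from the root by passing from universal configurations to both children and from existential ones to the left child only. Top-down, every $w \in \rmR$ encodes a configuration reached after~$|w|$ steps. Here your argument via the guard of~$\psi_t$, $\Phi_2$, $\Phi_3$ and~$\Phi_7$ is exactly right, and this is where the exactly-one-head invariant genuinely holds. Hence the members of~$\rmR$ at depth~$2^{\rmN}{-}1$ are halting. The bottom-up induction on~$2^{\rmN}{-}1{-}|w|$ over~$\rmR$ then uses only children that lie in~$\rmR$, and shows that the root is accepting. This is precisely how the paper argues.
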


\begin{proof}
We identify a prefix~$w$ with the configuration
$\set{ww' \colon |w'| = \rmN}$ of~$\strA$, whose children are~$w0$, the left
one, and~$w1$.
In an expansion of~$\strA$ by~$\sigaux$, the configuration~$w$
\emph{encodes} a configuration~$c$ of~$\autM$ if the only predicates
of~$\sigaux$ holding in cells of~$w$ are~$\pA_a$ in the cell of each
position~$i$, where~$a$ is the $i$-th letter of~$c$, and~$\pH$ and~$\pQ_q$ in
the cell of position~$p$, where the head of~$c$ is on cell~$p$ in state~$q$.

($\Rightarrow$) Let $\autM$ accept~$u$.
Top-down, we assign to each configuration~$w$ of~$\strA$ either an accepting
configuration~$c_w$ of~$\autM$, or the label \emph{junk} together with a
sequence of letters.
The root gets the initial configuration, which is accepting.
Let $c_w$ be non-halting, so that $|w| < 2^{\rmN}{-}1$, with state~$q$ and
letter~$a$ under the head.
If $q$ is universal, then $t^1_{q,a}(c_w)$ and~$t^2_{q,a}(c_w)$ are
accepting, and we assign them to $w0$ and~$w1$, respectively.
If $q$ is existential, then $t(c_w)$ is accepting for some
$t \in \set{t^1_{q,a}, t^2_{q,a}}$; we assign it to~$w0$, and declare $w1$
and all configurations below it junk, carrying the letters of~$c_w$.
If $c_w$ is halting, all configurations below~$w$, if any, are junk carrying
the letters of~$c_w$.
We expand~$\strA$ so that every~$w$ encodes its~$c_w$, while junk
configurations carry their letters, but neither head nor~state.
Then $\Phi_1$--$\Phi_6$ hold by construction, and $\Phi_8$ holds as no~$c_w$
is rejecting.
For~$\Phi_7$, note that~$c_w$ and the configurations assigned to the children
of~$w$ differ at most in the cell under the head of~$c_w$, and that junk
copies the letters of its parent.
Finally, heads occur only in configurations~$w$ with $c_w$ defined, and if
$c_w$ is non-halting, its children encode the successors of~$c_w$ prescribed
above.
As the head never leaves the cells $0, \ldots, 2^{\rmN}{-}1$, the cells
receiving the heads exist, so all $\Phi^{\exists}_{q,a}$
and~$\Phi^{\forall}_{q,a}$~hold.

($\Leftarrow$) Let an expansion~$\strA'$ of~$\strA$
satisfy~$\varphi_{\autM,u}$.
By~$\Phi_1$--$\Phi_6$, the root encodes the initial configuration
of~$\autM$ on~$u$.
Let~$w$ encode a non-halting configuration~$c$ reached after $|w|$~steps,
whose head is in state~$q$ on cell~$p$, reading~$a$.
Then $|w| < 2^{\rmN}{-}1$, and for the cells~$\elx$ of~$w$, $\ely_0$ of~$w0$
and~$\ely_1$ of~$w1$ of position~$p$, we have
$\strA' \models \fDown(\elx, \ely_j)$, with $\ely_0 \in \pL$
and~$\ely_1 \notin \pL$.
If $\strA' \models \psi_t(\ely_j)$ for a transition~$t$, then $wj$
encodes~$t(c)$: $\psi_t$ fixes the letter of~$\ely_j$ and, by its guard, the
cell receiving the head with its state; $\Phi_1$--$\Phi_3$ exclude further
letters, heads, and states; and $\Phi_7$ fixes the remaining letters.
Thus if $q$ is universal, $\Phi^{\forall}_{q,a}$ makes $w0$ and~$w1$ encode
$t^1_{q,a}(c)$ and~$t^2_{q,a}(c)$, \ie all successors of~$c$; if $q$ is
existential, $\Phi^{\exists}_{q,a}$ makes $w0$ encode a successor of~$c$.
Let $\rmR$ collect the configurations of~$\strA$ reachable from the root by
passing from those encoding universal configurations to both children, and
from those encoding existential ones to the left child.
By the above, every $w \in \rmR$ encodes a configuration~$c_w$ reached from
the initial one after $|w|$~steps; in particular, $c_w$ is halting
if~$|w| = 2^{\rmN}{-}1$.
By induction on $2^{\rmN}{-}1{-}|w|$, every such~$c_w$ is accepting: if
$c_w$ is halting, it is not rejecting by~$\Phi_8$; otherwise, all successors
of~$c_w$ (if universal), or one of them (if existential), are encoded by
children of~$w$ in~$\rmR$, which are accepting by induction.
For the root, this means that $\autM$ accepts~$u$.
\end{proof}

\vspace{0.25em}
\noindent \textbf{From configuration trees to arbitrary models.}\, Models
of~$\phitree$ need not be configuration trees, as witnesses may coincide, but
it suffices that they contain an image of~$\strA$ that is faithful on unary
predicates, in the following~sense.

\begin{definition}\label{def:tree-hom}
  A \emph{tree homomorphism} into a $\sigenc$-structure~$\strB$ is a map
  $\homoh \colon \domA \to \domB$ such that, for all $\ela, \elb \in \domA$
  and every unary~$\pP \in \sigenc$, we have $\ela \in \pP^{\strA}$ if and
  only if $\homoh(\ela) \in \pP^{\strB}$, and $\ela \desc \elb$
  implies~$\homoh(\ela) \desc \homoh(\elb)$.
\end{definition}

\noindent Universal sentences using~$\desc$ only negatively pull back along
tree~homomorphisms.

\begin{lemma}\label{lem:pullback}
  Let $\homoh$ be a tree homomorphism into~$\strB$, let $\strB'$ be an
  expansion of~$\strB$ by~$\sigaux$, and let $\strA'$ be the expansion
  of~$\strA$ by~$\sigaux$ with $\pP^{\strA'} \deff \homoh^{-1}(\pP^{\strB'})$
  for every~$\pP \in \sigaux$.
  Then $\strA'$ satisfies all universal equality-free sentences that hold
  in~$\strB'$ and use~$\desc$ only~negatively.
\end{lemma}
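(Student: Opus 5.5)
The plan is the standard preservation argument for universal sentences under substructure-like maps, adapted to the fact that $\homoh$ need not be injective nor reflect~$\desc$. First, I would establish a quantifier-free version by induction on formul{\ae}. Take $\psi(\varx,\vary)$ quantifier-free, equality-free, in negation normal form, in which $\desc$ occurs only in negative literals. The claim is that for all $\ela, \elb \in \domA$, $\strB' \models \psi[\homoh(\ela), \homoh(\elb)]$ implies $\strA' \models \psi[\ela, \elb]$. It suffices to check literals, since conjunction and disjunction are monotone.

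For a unary $\pP \in \sigenc$, $\ela \in \pP^{\strA}$ iff $\homoh(\ela) \in \pP^{\strB}$ by the definition of tree homomorphism. Hence both $\pP$ and $\neg\pP$ literals transfer. For $\pP \in \sigaux$ this holds by the definition $\pP^{\strA'} = \homoh^{-1}(\pP^{\strB'})$, again in both polarities. For a negative literal $\neg(t \desc t')$, suppose $\homoh(\ela) \desc \homoh(\elb)$ fails in~$\strB$. Then $\ela \desc \elb$ fails in~$\strA$, since otherwise the homomorphism condition would force it in~$\strB$; this is the contrapositive. Equality atoms are absent, which matters: non-injectivity of $\homoh$ would otherwise break the transfer of $\varx \neq \vary$. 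Positive $\desc$ literals are excluded by hypothesis, so the induction covers every literal. Substitution of $\varx$ for $\vary$ (terms $t, t'$ ranging over the variables, with possibly $t = t'$) is handled identically.

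Second, I lift this to sentences. Put a universal sentence in negation normal form; quantifiers are then only $\forall$, so it is equivalent to a sentence in prenex form $\forall\varx\forall\vary\,\psi$ or nested universals over two variables. Since the NNF formula may reuse variables inside, prenexing may need care. Instead, I would do the induction directly on NNF formul{\ae} $\theta$ with free variables among $\varx, \vary$, built from literals by $\land$, $\lor$ and $\forall$. The statement is: for every assignment $\nu$ into $\domA$, $\strB' \models \theta[\homoh \circ \nu]$ implies $\strA' \models \theta[\nu]$. The $\forall$ step works as follows. If $\strB' \models \forall\varx\,\theta[\homoh\circ\nu]$, then for every $\ela \in \domA$, $\strB' \models \theta[(\homoh\circ\nu)[\varx \mapsto \homoh(\ela)]]$, which equals $\theta[\homoh \circ \nu[\varx\mapsto\ela]]$. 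By induction, $\strA' \models \theta[\nu[\varx \mapsto \ela]]$. Surjectivity is not needed, because the quantifier in $\strA'$ only ranges over $\domA$, whose images are all in $\domB$. Applied to sentences, this gives the lemma.

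The only delicate point, and the place I would double-check, is the literal case for $\desc$. Exactly the restriction to negative occurrences makes the one-directional homomorphism property suffice. I would also confirm that the unary predicates of $\sigenc$ are preserved in both directions, as the definition demands, since they may occur in either polarity in $\varphi_{\autM,u}$ (for instance $\pD(\varx) = \ell$ uses negated~$\pD_i$).
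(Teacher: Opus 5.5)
Your proposal is correct and follows the paper's proof. It runs the same induction on negation-normal-form formul{\ae}, with the invariant that $\strB' \models \theta[\homoh \circ \nu]$ implies $\strA' \models \theta[\nu]$. Unary literals transfer in both polarities, negative $\desc$-literals transfer by contraposition of preservation, and the $\forall$ step uses only that $\homoh$ maps $\domA$ into $\domB$. Your separate quantifier-free step is subsumed by this single induction, so it is redundant but harmless.
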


\begin{proof}
  We show, by induction on a formula~$\theta$ of this kind in negation normal
  form, that $\strB' \models \theta[\homoh \circ \nu]$ implies
  $\strA' \models \theta[\nu]$ for every assignment~$\nu$ into~$\domA$.
  Unary literals are preserved and reflected by~$\homoh$: those
  over~$\sigenc$ since $\homoh$ is a tree homomorphism, and those
  over~$\sigaux$ by the definition of~$\strA'$.
  Literals $\neg(\varx \desc \vary)$ transfer since $\homoh$
  preserves~$\desc$, and conjunctions and disjunctions are immediate.
  Finally, if $\strB' \models \forall\varx\, \theta\,[\homoh \circ \nu]$,
  then $\strB' \models \theta\,[\homoh \circ \nu[\varx \mapsto \ela]]$ for
  every~$\ela \in \domA$, so $\strA' \models \forall\varx\, \theta\,[\nu]$
  by~induction.
\end{proof}

\noindent Equality and positive occurrences of~$\desc$ would break the
argument, since $\homoh$ need not be injective and need not reflect~$\desc$.
Tree homomorphisms, in turn, are provided by~$\phitree$.

\begin{lemma}\label{lem:tree}
  We have $\strA \models \phitree$.
  Moreover, for every $\sigenc$-structure~$\strB \models \phitree$
  interpreting~$\desc$ as a transitive relation, there is a tree homomorphism
  into~$\strB$.
\end{lemma}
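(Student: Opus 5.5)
For the first claim, that $\strA \models \phitree$, I will check the three conjuncts directly against Definition~\ref{def:configuration-tree}.
\begin{itemize}\itemsep0em
\item The cell $0^{\rmN}$ (with $w$ empty) has depth~$0$ and position~$0$. It is not in~$\pL$, since the empty word does not end with~$0$.
\item A cell $ww'$ with $p<2^{\rmN}{-}1$ has the cell $ww''$, where $w''$ encodes $p{+}1$. This cell has the same~$w$, hence the same depth and the same $\pL$-status, and $w' \lesslex w''$; so $ww' \desc ww''$.
\item For the last cell $w1^{\rmN}$ with $|w|<2^{\rmN}{-}1$, the cells $w0\,0^{\rmN}$ and $w1\,0^{\rmN}$ qualify. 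In both, $w$ is a proper prefix of $wb$, the depth is $|w|{+}1$, the position is~$0$, and $\pL$ holds exactly for $b=0$, matching~$\mu_b$.
\end{itemize}

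For the second claim, fix $\strB\models\phitree$ with $\desc^{\strB}$ transitive, and build $\homoh$ top-down along the tree of Lemma~\ref{lem:desc-orders}. In that tree the root is $0^{\rmN}$, a non-last cell has its next cell as only child, and a last cell $w1^{\rmN}$ has the children $w0\,0^{\rmN}$ and $w1\,0^{\rmN}$.
\begin{itemize}\itemsep0em
\item Map the root to a witness of the first conjunct.
\item Suppose $\homoh(ww')$ is already defined and agrees with $ww'$ on all unary predicates, \ie on depth bits, position bits and~$\pL$.
  \begin{itemize}\itemsep0em
  \item If $ww'$ is not last, its position in~$\strB$ is not $2^{\rmN}{-}1$. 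The second conjunct then gives some $\vary$ with $\homoh(ww')\desc\vary$, the same depth, position plus one, and the same~$\pL$. Set $\homoh$ of the next cell to this~$\vary$.
  \item If $ww'$ is last and $|w|<2^{\rmN}{-}1$, the third conjunct gives, for each $b\in\{0,1\}$, some $\vary_b$ with $\homoh(ww')\desc\vary_b$, depth plus one, position~$0$, and $\mu_b$. Set $\homoh(wb\,0^{\rmN})\deff\vary_b$.
  \end{itemize}
\end{itemize}
By induction every cell is reached exactly once, so $\homoh$ is well defined on~$\domA$. It preserves the unary predicates of~$\sigenc$: the depth and position bits are prescribed numerically, and $\pL$ is inherited along next-steps and fixed by~$\mu_b$ at children. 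The latter matches the definition, since $ww'\in\pL^{\strA}$ iff $w$ ends with~$0$, and this stays constant within a configuration.

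It remains to show that $\ela\desc\elb$ implies $\homoh(\ela)\desc\homoh(\elb)$. By the proof of Lemma~\ref{lem:desc-orders}, $\desc^{\strA}$ is the proper-ancestor relation of this tree, \ie the transitive closure of its parent–child edges. Every parent–child edge is mapped into $\desc^{\strB}$ by construction, so transitivity of $\desc^{\strB}$ closes it off.

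The only delicate point I expect is this last step: identifying $\desc^{\strA}$ exactly with the transitive closure of the edges used in the construction. That identification was already established in Lemma~\ref{lem:desc-orders}, so I will cite it rather than reprove it. The rest is bookkeeping with the bit encodings.
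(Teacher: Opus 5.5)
Your proposal is correct and follows essentially the same route as the paper. You use the same witnesses ($0^{\rmN}$, the next cell, and $w0\,0^{\rmN}$, $w1\,0^{\rmN}$) for $\strA \models \phitree$, the same top-down construction of $\homoh$ that keeps agreement on depth, position and $\pL$, and the same closing step: $\desc^{\strA}$ is the transitive closure of the parent--child pairs, which $\homoh$ maps into the transitive relation $\desc^{\strB}$.
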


\begin{proof}
  The conjuncts of~$\phitree$ hold in~$\strA$ by
  Definition~\ref{def:configuration-tree}, with witnesses~$0^{\rmN}$, the
  next cell, and the first cells $w0\,0^{\rmN}$ and~$w1\,0^{\rmN}$ of the
  children of~$w$ for the last cell~$w1^{\rmN}$, respectively.

  Now let $\strB \models \phitree$ with~$\desc^{\strB}$ transitive.
  We define~$\homoh$ by increasing depth and, within a configuration, by
  increasing position, maintaining that $\homoh(ww')$ agrees with~$ww'$ on
  depth, position, and membership in~$\pL$.
  Let $\homoh(0^{\rmN})$ be a witness of the first conjunct.
  As the image of the next cell after~$ww'$, for~$w' \neq 1^{\rmN}$, we take
  a witness of the second conjunct for~$\homoh(ww')$; as the image
  of~$wb\,0^{\rmN}$, for~$b \in \set{0,1}$, we take a witness of the third
  conjunct for~$\homoh(w1^{\rmN})$ and~$b$.
  The invariant is maintained since $wb\,0^{\rmN} \in \pL^{\strA}$ if and only
  if~$b = 0$.
  As the~$\pD_i$ and~$\pC_i$ encode depths and positions exactly, the
  invariant means that $\homoh$ preserves and reflects all unary predicates.
  Finally, $\desc^{\strA}$ is the transitive closure of the pairs of a cell
  and its next cell, and of the pairs~$(w1^{\rmN}, wb\,0^{\rmN})$.
  As $\homoh$ maps them into the transitive relation~$\desc^{\strB}$, it
  preserves~$\desc$.
\end{proof}

\begin{proof}[Proof of Lemma~\ref{lem:runs}]
  Take $c$, $\sigaux$, and~$\varphi_{\autM,u}$ as above.
  By Lemmas~\ref{lem:runs-tree} and~\ref{lem:tree}, (i)~implies~(ii), which
  implies~(iii) as $\strA$ is finite and $\desc^{\strA}$ is transitive.
  Conversely, let $\strB'$ be a finite model of
  $\phitree \land \varphi_{\autM,u}$ in which~$\desc$ is transitive.
  Lemma~\ref{lem:tree} yields a tree homomorphism into the $\sigenc$-reduct
  of~$\strB'$, along which~$\varphi_{\autM,u}$, which is universal,
  equality-free, and uses~$\desc$ only negatively, pulls back to an expansion
  of~$\strA$ by Lemma~\ref{lem:pullback}; so $\autM$ accepts~$u$ by
  Lemma~\ref{lem:runs-tree}.
\end{proof}

\section{Processing Lines: Model-Induced Lines}\label{app:processing-lines}

Fix a finite $\strA \models \varphi$ and its witnessing sets as in
Section~\ref{sec:upper-bound}.
For $\ela \in \domA$, the set $\domB_{\ela}$ collects the $m{+}1$ leftmost and rightmost elements
of each $1$-type on each side of the line of~$\ela$.
The set $\domC_{\ela}$ consists of elements with a selected witness across
that line outside~$\domB_{\ela}$, together with those witnesses.
Recall that $\domP_{\ela}$ consists of~$\ela$, its $\pS_2$-neighbours,
the witnesses $\wst{\strA}{\ela}$, and the elements of
$\domB_{\ela} \cup \domC_{\ela}$.

\begin{lemma}\label{lem:app-model-line}
  For every $\ela \in \domA$, the structure $\pro{\strA}{\ela}$ is a
  processing line satisfying \ref{cond:validity}--\ref{cond:witnesses}.
\end{lemma}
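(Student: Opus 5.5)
We have to verify that $\pro{\strA}{\ela}$, as defined in Section~\ref{sec:upper-bound}, meets Definition~\ref{def:processing-line}. First we check that the data are well typed. The domain $\domP_{\ela}$ is finite and contains~$\ela$, and the structure is the substructure of~$\strA$ induced by~$\domP_{\ela}$, so it is over the signature of~$\varphi$. The sets $\domB_{\ela}^{\delta,\eta}$ are subsets of $\domB_{\ela} \subseteq \domP_{\ela}$. For each $\elb$, the set $\wst{\pro{\strA}{\ela}}{\elb} = \wst{\strA}{\elb} \cap \domP_{\ela}$ is contained in $\domP_{\ela} \setminus \set{\elb}$ and has at most $m$ elements, since $\wst{\strA}{\elb}$ does. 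We then check the three conditions in turn.

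For~\ref{cond:validity}, we use that $\lesseq_1$ and $\lesseq_2$ are linear orders on~$\domA$, and a restriction of a linear order is a linear order. Here $\lesseq_2$ is the order whose successor is $\pS_2$; in the swapped signature $\set{\lesseq_1, \lesseq_2, \pS_2}$ it is part of the structure. Every $1$-type and $2$-type realised in the substructure is realised in~$\strA \models \varphi$, so it entails $\chi(\varx,\varx)$, respectively $\chi(\varx,\vary)$. For the $\pS_2$ clause, take an edge $\pS_2(\elb,\elc)$ with $\elb, \elc \in \domP_{\ela}$. In~$\strA$ no element lies strictly between $\elb$ and~$\elc$ in~$\lesseq_2$, so none does in~$\domP_{\ela}$ either, and $\elc$ is the immediate local successor of~$\elb$. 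Missing successors or predecessors are explicitly allowed.

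For~\ref{cond:extremal}, the key observation is that the split of~$\domP_{\ela}$ defined by the distinguished element~$\ela$ is the trace of the model split: $\domP^{\downarrow} = \domA_{\ela}^{\downarrow} \cap \domP_{\ela}$ and $\domP^{\uparrow} = \domA_{\ela}^{\uparrow} \cap \domP_{\ela}$, because $\lesseq_2$ and $1$-types are inherited. We must show that $\eta(\domP^{\delta}(\alpha))$, computed in the line, equals $\domB_{\ela}^{\delta,\eta}(\alpha) = \eta(\domA_{\ela}^{\delta}(\alpha))$, computed in~$\strA$. This needs a small monotonicity fact: if $\eta(\rmX) \subseteq \rmY \subseteq \rmX$, then $\eta(\rmY) = \eta(\rmX)$. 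This holds because the order on~$\rmY$ is inherited and the $m{+}1$ leftmost (rightmost) elements of~$\rmX$ already lie in~$\rmY$. We apply it with $\rmX = \domA_{\ela}^{\delta}(\alpha)$ and $\rmY = \domP^{\delta}(\alpha)$, using $\domB_{\ela} \subseteq \domP_{\ela}$. The case where $\rmX$ has at most $m{+}1$ elements is covered as well, since then $\rmY = \rmX$.

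For~\ref{cond:witnesses}, note that $\wst{\strA}{\ela} \subseteq \domP_{\ela}$, so $\wst{\pro{\strA}{\ela}}{\ela} = \wst{\strA}{\ela}$. This is a witnessing set of~$\ela$. For each $t \leqslant m$, some $\elw$ in it satisfies $\strA \models \chi_t(\ela,\elw)$, and since $\chi_t$ is quantifier-free and the line is an induced substructure, also $\pro{\strA}{\ela} \models \chi_t(\ela,\elw)$.

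We expect no real obstacle; the only step needing care is the monotonicity fact behind~\ref{cond:extremal}, namely that extrema computed inside the line coincide with those computed in the model. It is exactly this fact that forces us to include all of~$\domB_{\ela}$ in~$\domP_{\ela}$.
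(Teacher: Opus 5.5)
Your proposal is correct and follows the paper's proof step by step. The steps are: inheritance of orders, types and $\pS_2$-edges for~\ref{cond:validity}; the observation that the extrema $\domB_{\ela}^{\delta,\eta}(\alpha)$ already lie in~$\domP_{\ela}$, so restricting to the line leaves them unchanged, for~\ref{cond:extremal}; and $\wst{\strA}{\ela} \subseteq \domP_{\ela}$ together with quantifier-freeness of the~$\chi_t$ for~\ref{cond:witnesses}. Your explicit well-typedness check and your monotonicity fact spell out what the paper leaves implicit.
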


\begin{proof}
Fix~$\ela$ and put $\lambda \deff \pro{\strA}{\ela}$ and
$\domP \deff \domP_{\ela}$. Note that
$\domP^{\delta} = \domP \cap \domA_{\ela}^{\delta}$.
All types and orders of~$\lambda$ are inherited from~$\strA$, and every
retained $\pS_2$-edge joins consecutive elements in~$\strA$, and hence also
in~$\lambda$. This establishes~\ref{cond:validity}.
For~\ref{cond:extremal}, fix $\alpha$, $\delta$ and~$\eta$.
Since $\domB_{\ela}^{\delta,\eta}(\alpha)
= \eta(\domA_{\ela}^{\delta}(\alpha))$ lies in~$\domP$, restricting to~$\domP$
leaves these $m{+}1$ extrema unchanged. Hence
$\eta(\domP^{\delta}(\alpha))
= \domB_{\ela}^{\delta,\eta}(\alpha)$.
Finally, for~\ref{cond:witnesses},
$\wst{\lambda}{\ela}=\wst{\strA}{\ela}$, and the underlying structure
of~$\lambda$ is $\strA \restr \domP$.
As the $\chi_t$ are quantifier-free, these witnesses still serve all
demands of~$\ela$.
\end{proof}

\begin{lemma}\label{lem:app-model-gluing}
  Let $\ela_1 \less_2 \cdots \less_2 \ela_k$ be an enumeration of~$\domA$ and set
  $\lambda_i \deff \pro{\strA}{\ela_i}$.
  Then~$\lambda_1,\ldots,\lambda_k$ satisfies the gluing rules
  \ref{cond:succession}--\ref{cond:update}.
\end{lemma}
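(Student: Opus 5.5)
We verify the gluing rules one by one for the model-induced lines $\lambda_i = \pro{\strA}{\ela_i}$, using that all of them are induced substructures of the single model~$\strA$ and that the line of~$\ela_{i+1}$ differs from that of~$\ela_i$ only in that $\ela_{i+1}$ moves from $\domA_{\ela_i}^{\uparrow}$ to $\domA_{\ela_{i+1}}^{\downarrow}$. Concretely, $\domA_{\ela_{i+1}}^{\downarrow} = \domA_{\ela_i}^{\downarrow} \cup \set{\ela_{i+1}}$ and $\domA_{\ela_i}^{\uparrow} = \domA_{\ela_{i+1}}^{\uparrow} \cup \set{\ela_{i+1}}$. Initiality of~$\lambda_1$ and finality of~$\lambda_k$ are immediate, since $\ela_1$ is bottommost and $\ela_k$ topmost, so $\domP_1^{\downarrow} = \set{\ela_1}$ and $\domP_k^{\uparrow} = \emptyset$.

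\textbf{(G1) and (G2).} $\ela_{i+1}$ is the $\pS_2$-successor of~$\ela_i$, so it lies in $\domP_i$ as a $\pS_2$-neighbour, and symmetrically $\ela_i \in \domP_{i+1}$. The atom $\pS_2(\ela_i,\ela_{i+1})$ is inherited from~$\strA$ by both induced substructures. For (G2), both lines are restrictions of~$\strA$, so they agree on $\domP_i \cap \domP_{i+1}$. Also $\wst{\lambda_i}{\elb} \cap \domP_{i+1} = \wst{\strA}{\elb} \cap \domP_i \cap \domP_{i+1} = \wst{\lambda_{i+1}}{\elb} \cap \domP_i$.

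\textbf{(G4).} By definition $\domB_{i+1}^{\downarrow,\eta}(\alpha) = \eta(\domA_{\ela_{i+1}}^{\downarrow}(\alpha)) = \eta(\domA_{\ela_i}^{\downarrow}(\alpha) \cup \rmT_{i,\alpha})$. Taking the $m{+}1$ extremal elements of a union, one only needs the $m{+}1$ extremal elements of each part, so this equals $\eta(\domB_i^{\downarrow,\eta}(\alpha) \cup \rmT_{i,\alpha})$. The extrema here are evaluated in~$\lambda_i$, which inherits $\lesseq_1$ from~$\strA$, and all elements involved lie in~$\domP_i$. The upward identity is symmetric.

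\textbf{(G3), the main obstacle.} We must show that every $\elb \in \domP_i^{\uparrow}$ other than~$\ela_{i+1}$ also lies in~$\domP_{i+1}$; the downward half is symmetric. We split on the reason for $\elb \in \domP_i$.

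\emph{Extremal elements.} If $\elb \in \domB_{\ela_i}^{\uparrow}$, then it lies in $\domA_{\ela_{i+1}}^{\uparrow} \subseteq \domA_{\ela_i}^{\uparrow}$ and is still among the $m{+}1$ extremal elements there, hence in~$\domB_{\ela_{i+1}}^{\uparrow}$.

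\emph{Neighbours and witnesses of the distinguished element.} If $\elb$ is the $\pS_2$-successor of~$\ela_i$, then $\elb = \ela_{i+1}$, which is excluded. If $\elb \in \wst{\strA}{\ela_i}$, then $\elb$ lies above the line of~$\ela_{i+1}$ while $\ela_i$ lies below it. Either $\elb$ is extremal there, or $\ela_i \in \domC_{\ela_{i+1}}^{\uparrow,\times}$ and $\elb \in \domC_{\ela_{i+1}}^{\uparrow,w}$.

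\emph{Exceptional elements.} If $\elb \in \domC_{\ela_i}^{\downarrow,\times}$, it has a witness~$\elw$ below the line of~$\ela_i$ outside $\domB_{\ela_i}$. Then $\elw$ is also below the line of~$\ela_{i+1}$. If $\elw \notin \domB_{\ela_{i+1}}$, we get $\elb \in \domC_{\ela_{i+1}}^{\downarrow,\times}$. Otherwise $\elw$ would have to be newly extremal, and this is impossible: the only new downward element is $\ela_{i+1}$, and entering it can only evict elements from $\domB^{\downarrow}$, never add old ones. Elements of the remaining sets, namely $\domC_{\ela_i}^{\uparrow,w}$ and the witnesses in $\domC_{\ela_i}^{\downarrow,w}$ (which lie below, so they are irrelevant to this half), are handled by the same monotonicity. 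If an element $\elc$ above the line of~$\ela_{i+1}$ is not extremal for~$\ela_i$, it is not extremal for~$\ela_{i+1}$ either, so its owner stays exceptional.

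This is precisely the monotonicity argument already used in the paragraph ``Other lines do not grow'' of Lemma~\ref{lem:small-lines}, and we reuse it verbatim. The remaining care is that $\domB$ may shrink on one side and grow on the other exactly by~$\ela_{i+1}$, which we exclude explicitly.
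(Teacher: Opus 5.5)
Your route matches the paper's. The boundary lines, (G1), (G2) and (G4) are handled identically, and (G3) by a case split on why an element belongs to~$\domP_i$. Your cases for extremal elements, for the witnesses of~$\ela_i$, and for~$\domC_{\ela_i}^{\downarrow,\times}$ are correct.

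\textbf{The failing step.} The last case rests on the claim ``if an element~$\elc$ above the line of~$\ela_{i+1}$ is not extremal for~$\ela_i$, it is not extremal for~$\ela_{i+1}$ either''. This is monotonicity in the wrong direction. Above the line, $\ela_{i+1}$ leaves the set, so $\domA^{\uparrow}_{\ela_{i+1}} = \domA^{\uparrow}_{\ela_i} \setminus \set{\ela_{i+1}}$, and removing an element can promote others into the extremal sets. This is what happens to~$\elc$ in Figure~\ref{fig:processing-lines}. For instance, if exactly $2m{+}3$ elements of some $1$-type lie above the line of~$\ela_i$ and $\ela_{i+1}$ is the leftmost of them, then the middle one is extremal for~$\ela_{i+1}$ but not for~$\ela_i$.

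The monotonicity of ``Other lines do not grow'' only gives the converse: above the line of~$\ela_{i+1}$, being outside~$\domB_{\ela_{i+1}}$ implies being outside~$\domB_{\ela_i}$. So reusing it verbatim does not yield your claim. In fact, combined with your correct extremal case, your claim would force $\domB^{\uparrow}_{\ela_{i+1}} = \domB^{\uparrow}_{\ela_i} \setminus \set{\ela_{i+1}}$, which is false in general. Hence, for $\elc \in \domC^{\uparrow,w}_{\ela_i}$, the conclusion ``its owner stays exceptional'' can fail. Likewise, your closing remark that $\domB$ changes ``exactly by~$\ela_{i+1}$'' is inaccurate: a crossing may also evict an old element from~$\domB^{\downarrow}$ and promote an old one into~$\domB^{\uparrow}$.

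\textbf{The repair.} Use the dichotomy you already applied to the witnesses of~$\ela_i$; this is exactly what the paper does. Such an~$\elc$ differs from~$\ela_{i+1}$, so it lies above the line of~$\ela_{i+1}$ while its owner lies below it. Hence either $\elc \in \domB_{\ela_{i+1}}$, or the owner lies in~$\domC^{\uparrow,\times}_{\ela_{i+1}}$ and $\elc \in \domC^{\uparrow,w}_{\ela_{i+1}}$. In both cases $\elc \in \domP_{i+1}$.

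The second inclusion of (G3) is not literally symmetric, since the distinguished element lies below its own line and brings its own neighbours and witnesses. It needs the same care, for elements other than~$\ela_{i+1}$:
\begin{itemize}
\item An element of~$\domC^{\downarrow,w}_{\ela_{i+1}}$ may have been evicted from~$\domB^{\downarrow}$ by~$\ela_{i+1}$; it then lies in~$\domB_{\ela_i}$ rather than in~$\domC^{\downarrow,w}_{\ela_i}$.
\item Witnesses of~$\ela_{i+1}$ below its line lie in~$\domB_{\ela_i}$ or in~$\domC^{\downarrow,w}_{\ela_i}$.
\item An element of~$\domC^{\uparrow,\times}_{\ela_{i+1}}$ lies in~$\domC^{\uparrow,\times}_{\ela_i}$, by the correct upward monotonicity quoted above.
\item The $\pS_2$-predecessor of~$\ela_{i+1}$ is~$\ela_i$. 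It is not excluded, but it lies in~$\domP_i$ by definition.
\end{itemize}
With these repairs, your proof coincides with the paper's.
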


\begin{proof}
The first and last lines are initial and final, as $\ela_1$ is bottommost and~$\ela_k$ topmost.
Fix~$i<k$ and put $\els \deff \ela_{i+1}$.
Write $\domC_j^{\delta,\times}$ and~$\domC_j^{\delta,w}$ for the model-defined
sets $\domC_{\ela_j}^{\delta,\times}$ and~$\domC_{\ela_j}^{\delta,w}$.
Only~$\els$ changes sides between the two lines.

\noindent \textbf{Succession and extrema.}\,
The element~$\els$ is the $\pS_2$-successor of~$\ela_i$ in~$\strA$.
Both lines contain these elements and inherit their $\pS_2$-edge,
giving~\ref{cond:succession}.
For either $\eta \in \set{\min,\max}$ and any finite ordered set~$\rmX$,
$\eta(\rmX \cup \set{\els})
= \eta(\eta(\rmX) \cup \set{\els})$.
Applying this identity to each $1$-type on both sides of the two lines
gives the equations of~\ref{cond:update}.

\noindent \textbf{Persistence.}\,
Deleting~$\els$ preserves every other extremal element, while adding it
cannot make a nonextremal element extremal.
Consider $\elb \in \domP_i^{\uparrow} \setminus \set{\els}$.
If it is extremal, it remains so in $\lambda_{i+1}$. Now, if $\elb$ is a witness of~$\ela_i$ or of an element
of~$\domC_i^{\uparrow,\times}$, it lies in~$\domB_{i+1}$ or
$\domC_{i+1}^{\uparrow,w}$.
Lastly, if $\elb \in \domC_i^{\downarrow,\times}$, its nonextremal witness below the first line remains
nonextremal below the second, so $\elb \in \domC_{i+1}^{\downarrow,\times}$.
These exhaust the construction of~$\domP_i^{\uparrow}$ and prove the first
inclusion in~\ref{cond:persistence}.

Conversely, let $\elb \in \domP_{i+1}^{\downarrow} \setminus \set{\els}$.
An extremal element stays extremal when~$\els$ is removed.
Now, if $\elb$ is a witness of~$\els$ or of an element of~$\domC_{i+1}^{\downarrow,\times}$,
it lies in~$\domB_i$ or~$\domC_i^{\downarrow,w}$.
And again, if $\elb \in \domC_{i+1}^{\uparrow,\times}$, its nonextremal witness above the line of $\lambda_{i+1}$ remains
nonextremal above the line of $\lambda_i$. Thus, $\elb \in \domC_i^{\uparrow,\times}$.
Finally, by definition, $\ela_i \in \domP_i$.
This proves the second inclusion.

\noindent \textbf{Agreement and witnesses.}\,
Both lines are induced substructures of~$\strA$, so they agree on their
common domain.
For $\elb \in \domP_i \cap \domP_{i+1}$, both
$\wst{\lambda_i}{\elb} \cap \domP_{i+1}$ and
$\wst{\lambda_{i+1}}{\elb} \cap \domP_i$ equal
$\wst{\strA}{\elb} \cap \domP_i \cap \domP_{i+1}$.
This proves~\ref{cond:agreement}.
\end{proof}

\begin{lemma}\label{lem:sat-implies-lines}
  If $\varphi$ has a finite model, then some sequence of processing lines, each with at
  most~$\rmL$ elements named in~$\set{1, \ldots, 2\rmL}$, satisfies the gluing~rules.
\end{lemma}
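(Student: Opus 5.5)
Starting from a finite model $\strA \models \varphi$, I will first apply Lemma~\ref{lem:small-lines} to obtain $\strA^{*} \models \varphi$ over the same domain, with the same orders and $1$-types, together with witnessing sets for which every processing line $\pro{\strA^{*}}{\ela}$ has at most $\rmL$ elements. Here $\rmL$ is the explicit bound $\rmM$-style count: $|\set{\ela}| + 2 + m + |\domB_{\ela}| + |\domC_{\ela}|$, with $|\domB_{\ela}| \leqslant 4(m{+}1)|\tpone{\varphi}|$ and $|\domC_{\ela}| \leqslant 2(1{+}m)|\tpone{\varphi}|\rmM$. This is $2^{O(|\varphi|)}$ since $|\tpone{\varphi}|$ is, and I will make this count explicit as the definition of~$\rmL$. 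I will then enumerate $\domA = \set{\ela_1 \less_2 \cdots \less_2 \ela_k}$ in~$\strA^{*}$ and take $\lambda_i \deff \pro{\strA^{*}}{\ela_i}$. By Lemma~\ref{lem:app-model-line} each $\lambda_i$ is a processing line, and by Lemma~\ref{lem:app-model-gluing} the sequence satisfies \ref{cond:succession}--\ref{cond:update}. Moreover $\lambda_1$ is initial and $\lambda_k$ final, because $\ela_1$ is bottommost and $\ela_k$ topmost.

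The remaining issue is the naming: the lines are substructures of one model, whose domain may be doubly exponential, so their elements must be renamed into $\set{1, \ldots, 2\rmL}$ consistently. I will do this greedily along the sequence. Each element of~$\domA$ occurs in an interval of consecutive lines, because the construction is monotone in the sense of \ref{cond:persistence}: an element leaves the line at most once per side. I have to check that occurrences are indeed intervals, or else simply treat each maximal interval of occurrence separately, which suffices because the gluing rules only relate consecutive lines.

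Processing $i = 1, 2, \ldots$, I keep the names of elements in $\domP_i \cap \domP_{i+1}$. For each new element of~$\domP_{i+1}$, I pick a name not used in $\domP_i \cup \domP_{i+1}$. Since $|\domP_i \cup \domP_{i+1}| \leqslant 2\rmL$, such a name always exists. This is exactly why $2\rmL$ names are needed rather than~$\rmL$.

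The renamed sequence still satisfies all the conditions. \ref{cond:validity}--\ref{cond:witnesses} are invariant under bijective renaming of a single line, and \ref{cond:succession}--\ref{cond:update} only compare consecutive lines, on which the renaming is one common injection of $\domP_i \cup \domP_{i+1}$. I expect this last point to be the main obstacle, although it is mild. An element that leaves and later re-enters the lines could receive a different name, or a name could be reused for a different element. Neither breaks the rules, since the rules never look beyond adjacent lines, and the proof of Lemma~\ref{lem:lines-imply-sat} already renames elements apart per interval of occurrence.
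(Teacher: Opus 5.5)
Your proposal is correct and follows the paper's proof step for step. You apply Lemma~\ref{lem:small-lines}, bound every line by the same explicit $\rmL$, take the model-induced lines via Lemmas~\ref{lem:app-model-line} and~\ref{lem:app-model-gluing}, and rename line by line with names unused in $\lambda_i$ and $\lambda_{i+1}$; this renaming is injective on every $\domP_i \cup \domP_{i+1}$ and so preserves the rules, which means you need not check that occurrences form intervals. One sharpening of your count: a free name exists because the element being named is itself still unnamed, so fewer than $2\rmL$ names are taken at that moment.
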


\begin{proof}
Take $\strA^{*}$ from Lemma~\ref{lem:small-lines}, and let $\rmM$ be as in its proof.
A line of~$\strA^{*}$ consists of its distinguished element, at most two $\pS_2$-neighbours, at
most $m$~witnesses, at most $4(m{+}1)|\tpone{\varphi}|$ extremal elements, and at most
$|\tpone{\varphi}|\rmM$ and $m|\tpone{\varphi}|\rmM$ elements in each set
$\domC^{\delta,\times}$ and~$\domC^{\delta,w}$, respectively.
So it has at most
$\rmL \deff 3 + m + 4(m{+}1)|\tpone{\varphi}| + 2(m{+}1)|\tpone{\varphi}|\rmM \in 2^{O(|\varphi|)}$
elements.
Let $\ela_1 \less_2 \cdots \less_2 \ela_k$ be the elements of~$\strA^{*}$ and
$\lambda_i \deff \pro{\strA^{*}}{\ela_i}$; by Lemma~\ref{lem:app-model-gluing}, they satisfy the
gluing rules.
Rename their elements line by line, giving each element of $\domP_{i+1} \setminus \domP_i$ a
number unused in~$\lambda_i$ and~$\lambda_{i+1}$; one is free, as fewer than $2\rmL$ are taken.
The renaming is injective on every $\domP_i \cup \domP_{i+1}$, so it preserves the~rules.
\end{proof}

\end{document}